\title{Two-Level Type Theory and Applications}
\author{Danil Annenkov \and Paolo Capriotti \and Nicolai Kraus \and Christian Sattler}
\thanks{\scriptsize
\emph{Funding notes:} This work has been supported by
\begin{itemize}[topsep=0pt,itemsep=-1ex,partopsep=1ex,parsep=1ex]
 \item[-] the Royal Society, grant reference URF\textbackslash R1\textbackslash 191055;
  \item[-] the European Union, co-financed by the European Social Fund (EFOP-3.6.2-16-2017-00013, Thematic Fundamental Research Collaborations, Grounding Innovation in Informatics and Infocommunication)
 \item[-] the Engineering and Physical Sciences Research Council (EPSRC), grant reference EP/M016994/1;
 \item[-] USAF, Airforce office for scientific research, award FA9550-16-1-0029;
 \item[-] and the HIPERFIT Research Centre, Danish Council for Strategic Research, contract number 10-092299. 
\end{itemize}
\emph{Authors' affiliations:} Concordium Blockchain Research Center, Aarhus University; Technische Universit\"at Darmstadt; University of Birmingham; University of Nottingham.}
\begin{document}

\begin{abstract}
We define and develop \emph{two-level type theory} (2LTT), a version of Martin-L\"of type theory which combines two different type theories.
We refer to them as the ``inner'' and the ``outer'' type theory.
In our case of interest, the inner theory is \emph{homotopy type theory} (HoTT) which may include univalent universes and higher inductive types.
The outer theory is a traditional form of type theory validating \emph{uniqueness of identity proofs} (UIP).
One point of view on it is as internalised meta-theory of the inner type theory.

There are two motivations for 2LTT.
Firstly, there are certain results about HoTT which are of meta-theoretic nature, such as the statement that semisimplicial types up to level $n$ can be constructed in HoTT for any externally fixed natural number $n$.
Such results cannot be expressed in HoTT itself, but they can be formalised and proved in 2LTT, where $n$ will be a variable in the outer theory.
This point of view is inspired by observations about conservativity of presheaf models~\cite{paolo:thesis}.

Secondly, 2LTT is a framework which is suitable for formulating additional axioms that one might want to add to HoTT.
This idea is heavily inspired by Voevodsky's \emph{Homotopy Type System} (HTS)~\cite{voe:hts}, which constitutes one specific instance of a 2LTT.
HTS has an axiom ensuring that the type of natural numbers behaves like the external natural numbers,
which allows the construction of a universe of semisimplicial types.
In 2LTT, this axiom can be assumed by postulating that the inner and outer natural numbers types are isomorphic.

After defining 2LTT, we set up a collection of tools with the goal of making 2LTT a convenient language for future developments.
As a first such application, we develop the theory of Reedy fibrant diagrams in the style of Shulman~\cite{shulman:inverse-diagrams}.
Continuing this line of thought, we suggest a definition of \emph{$(\infty,1)$-category} and give some examples.
\end{abstract}

\maketitle

\tableofcontents

\begin{section}{Introduction}\label{sec:introduction}

The literature on homotopy type theory (HoTT), and type theory in general, offers a great variety of results.
Some developments are completely internal to a specific type theory, that is, they can be expressed in type-theoretic syntax and mechanised using a proof assistant which itself is an implementation of a sufficiently good approximation of the considered type theory.
Examples include most of the material in the homotopy type theory book~\cite{hott-book}, many theorems of which have been formalised in the proof assistants Coq~\cite{bertot:coq}, Agda~\cite{norell:towards}, and Lean~\cite{moura:lean}.
A second kind of literature presents results of inherently meta-theoretic nature, for example the development of models of type theory, or proofs that a system is strongly normalising, and so on.

What we are particularly interested in is a third kind of result.
Some developments are partially internal to homotopy type theory, and often one would \emph{want} them to be completely internal and formalisable in a proof assistant, but unfortunately, it is either unknown how this is doable or it is known to be impossible.
The most well-known and most frequently discussed example for this situation is the definition of \emph{semisimplicial types}~\cite{uf:semisimplicialtypes}.
An informal explanation of the problem is the following.
A \emph{semisimplicial type of level 1} is the same as a type $A_0$ in a fixed universe $\UU$.
A \emph{semisimplicial type of level 2} is a pair $(A_0, A_1)$ of a type $A_0 : \UU$ and a family $A_1 : A_0 \to A_0 \to \UU$.
A \emph{semisimplicial type of level 3} is a triple $(A_0, A_1, A_2)$ with $A_0$ and $A_1$ as before, and $A_2$ of the type
\begin{equation*}
 A_2 : \Pi(x,y,z : A_0). A_1 \, x \, y \to A_1 \, y \, z \to A_1 \, x \, z \to \UU.
\end{equation*}
We think of $A_0$ as a type of points, $A_1 \, x \, y$ as a type of lines from $x$ to $y$, and $A_2 \, x \, y \, z \, f \, g \, h$ as a type of ``triangle fillers'' for the triangle spanned by $f$, $g$, and $h$.
It is tedious but intuitively clear how to extend this definition to levels 4 or 5 (or even 100) in this style.
An open problem of HoTT asks: Is it possible to construct a function $S : \N \to \UU_1$ such that, for every $n$, the type $S(n)$ encodes the type of semisimplicial types of level $n$?
It is known that, for any externally fixed number $k$ (\eg 4,5,100), we can construct a type $S_k$ that encodes semisimplicial types of level $k$.
However, this is not enough to construct an internal function $S$.
Two questions arise naturally:
\begin{enumerate}
 \item
 \label{item:firstquestion}
 How can we formalise the construction of $S_k$ for every external natural number $k$?
 \item
 \label{item:secondquestion}
 How can we extend the type theory such that $S$ itself can be constructed?
\end{enumerate}
In order to answer question \ref{item:secondquestion}, Voevodsky suggested a type theory called \emph{homotopy type system (HTS)}~\cite{voe:hts}.
This theory introduces a type of what they call \emph{exact equalities}.
Exact equality is an internalised version of judgmental (a.k.a.\ definitional) equality and co-exists with, but is very different from, the usual internal equality type (a.k.a.\ identity type, identification type, path type).
In HTS, exact equality comes with some additional built-in assumptions.

We can think of the type theory of HTS as divided into two levels, each of which is a version of type theory on its own.
The first is the actual object of study and close to HoTT; HTS refers to it as the \emph{fibrant fragment}, and its equality type is the usual path type.
In the second type theory, it is possible to reason about, rather than in, HoTT; this is the ``full'' type theory of HTS with all types including not necessarily fibrant ones, and its equality type is viewed as ``internalised judgmental equality''.

We call a system of this form (and the study of such systems) \emph{two-level type theory} (2LTT).
The two levels are respectively called \emph{inner} and \emph{outer};
for this paper and for HTS, the inner level is a version of HoTT.%
\footnote{Alternative terminology: Some authors refer to outer types as \emph{pretypes}~\cite{alt-cap-kra:two-level}, \emph{strict types}, \emph{non-fibrant types} or \emph{exotypes}~\cite{ahrensNorthShulmanTsementzis_Univalence}.
Inner types are also simply referred to as \emph{types} or \emph{fibrant types}.
In this paper, the term \emph{fibrant} is reserved for outer types that are isomorphic to inner types.}

We strive to keep the assumptions on 2LTT to a minimum.
For example,
while HTS assumes that types of the inner level are (in some appropriate sense) a ``subset'' of the outer level, we do not make this assumption.
Instead, we only request that there is a conversion function from the inner level to the outer.
This is not as drastic of a change as it may sound, and it allows for a larger class of models.
Of course, the HTS setup is the special case where the conversion function is simply an inclusion,
and a user of 2LTT is free to add this or other such assumptions to the theory they consider, as least as long as these assumptions are justified by a model.
We discuss such variations in \cref{subsec:Strengthenings}.
However, most of the theory in the later part of this paper is developed without such assumptions.

One intuition for the two levels is as follows:
from a type in HoTT, we can extract a statement that can be phrased in the meta-theory.
From a meta-theoretical statement \emph{about} HoTT, it is not always possible to construct a type.
Thus, we can convert inner types into outer one, but not always vice versa.

The fact that HTS calls inner types \emph{fibrant} suggests an interpretation in model categories or similar models of abstract homotopy theory.
We will adopt a similar terminology, but choose to reserve the term \emph{fibrant} for a slightly different notion: an outer type is fibrant if it is isomorphic to an inner one.
This makes mixing the two levels more convenient when working internally.

While HTS is an answer to question \ref{item:secondquestion}, it does not seem suitable as a tool to address question \ref{item:firstquestion}: the theory of HTS has not been designed to be conservative over HoTT.
Thus, it is unclear what the relation is between statements provable in the inner level of HTS and statements provable in HoTT.
The version of 2LTT that we study in this paper \emph{does} have this conservativity property over HoTT~\cite{paolo:thesis} (see \cref{subsec:conservativity}).
That is, given a type in any context in HoTT, if the corresponding type in the inner level of 2LTT is inhabited, we obtain an inhabitant of the original type in HoTT.
This can be seen as a canonicity property of the inner level with respect to the full system.
It is unknown whether the same is the case for HTS, but we do not expect it (see the discussion in \cref{subsec:Strengthenings}).
This crucial difference between our version of 2LTT and HTS is however not due to the differences mentioned so far.
Instead, the reason is that HTS assumes that many type formers (apart from equality), in particular empty, unit, and natural number type are shared between the two levels.
In our setting, this would correspond to the assumption~\cref{axiom:conversion-pres-positive} of \cref{subsec:Strengthenings} that the conversion function from the inner to the outer level preserves these type formers, and such a strong assumption is not covered by the mentioned conservativity result~\cite{paolo:thesis}.
In summary: the basic version of 2LTT that we work with in this paper is suitable to address question \ref{item:firstquestion}, and the framework makes it easy to add additional axioms, also weaker ones such as~\cref{axiom:reedy-limits}, which allows it to address question \ref{item:secondquestion}.

The idea of semisimplicial types can be developed further as demonstrated by Shulman, who considers diagrams over a larger class of categories~\cite{shulman:inverse-diagrams}.
For clarity of what happens, we switch back to the setting of HoTT rather than 2LTT.
Shulman shows that, given externally a category $\C$ with a certain property (being inverse), we have a well-behaved notion of type-valued diagrams over $\C$ (called Reedy fibrant).
For finite $\C$, there will even be a \emph{type} of such diagrams.
Note that $\C$ is not a variable that we can quantify over inside the type theory: it is assumed to be given externally.
In other words, if we choose a concrete instance for $\C$, we can take a proof assistant, implement the type of these diagrams, and work with them internally.
However, if $\C$ is an internal variable, this is not possible.
2LTT offers a setting in which this situation can be developed and formalised: $\C$ becomes a variable in the outer level, and we will demonstrate in this paper how this can be done (see \cref{sec:diag-inverse}).
Semisimplicial types restricted to level $n$ are the special case where $\C$ is taken to be the initial segment of length $n$ of the semisimplex category.

Without 2LTT, a standard approach to the development of such a theory of diagrams over $\C$ is to fix a (possibly arbitrary) model of type theory and work in the corresponding category, using categorical tools.
This requires carefully mixing \emph{internal} notions with \emph{external} ones, and there may not always be a clean way to achieve that.
In the mentioned work by Shulman, the role of the model of HoTT is played by a \emph{type-theoretic fibration category} (TTFC). Most results of their paper are formulated at that level, that is, as categorical constructions within a particular TTFC.
This requires changing style of presentation compared to a more ``traditional'' type-theoretic exposition, like for example that of the book on HoTT~\cite{hott-book}.  For instance, one has to work with morphisms rather than terms, fibrations rather than families of types, talk about pullbacks rather than just performing substitutions, and use ``diagrammatic'' instead of ``equational'' reasoning techniques.
Although these stylistic variations are not necessarily bad in themselves, the fact that one is essentially \emph{forced} to apply them can lead to difficulties.  A testament to that is the fact that, on occasions, the discussed work~\cite{shulman:inverse-diagrams} falls back to the internal language to formulate certain definitions and properties, as this is much easier than expressing them in a category-theoretic form.
2LTT offers an alternative strategy:
Type theory is the only language that is needed, meaning that internal and external reasoning go hand in hand, and the mixing feels very natural.

There are three equally valid ways to think about 2LTT:
We can start with the type theory that we want to study (\eg HoTT), take it as the inner level, and build some of its meta-theory as an additional layer on top of it.
Vice versa, we can start with a theory that is suitable as the outer level (\eg MLTT with function extensionality and unique identity proofs), expose a type family declared as the universe of inner types, and develop the inner level from there.
As the middle ground, we can think of the inner and outer theories as coexisting side-by-side, with a shared notion of context, related only by a conversion function from the inner to the outer level.
We take this middle ground as our setup, but our point of view, reflected in our choice of terminology, is the second: we consider the outer level as the ``default'' type theory, and in particular, \emph{equality} and \emph{equality type} will always mean the equality type of the outer level.
When talking about constructions that happen at the inner level, we will make this clear explicitly, and the equalities of the inner level are referred to as \emph{inner equalities} or \emph{path-equalities}.
We make this choice for a number of reasons.
First, it is reasonable when considering models, since the equality of the outer level is much closer to actual ``external equality'' than path-equality is (see \cref{subsec:models}).
Second, our choice is also pragmatic since, in practice, one works in the outer theory most of the time as the outer theory is more expressive.
For example, in the outer theory, a (small) diagram of types always has a limit that can be calculated in the same way as in the category of sets.
In general of course this will not be fibrant, but the ability to talk about it will be useful nevertheless.
Finally, our choice also matches the way that 2LTT can be implemented in existing proof assistants such as Coq, Agda, or Lean.
We have formalised some of the results in this paper in this style.%
\footnote{The code is available at \url{https://github.com/annenkov/two-level}.}

\subsection{Context of this paper and related work}\label{subsect:context}
The current paper significantly reworks and extends the idea of 2LTT that Altenkirch and two of the current authors have presented at the CSL'16 conference~\cite{alt-cap-kra:two-level}.
As discussed, a main inspiration for the development presented in the current paper is Voevodsky's HTS~\cite{voe:hts}, which itself was suggested as an answer to question \ref{item:secondquestion}.
The other aspect (question \ref{item:firstquestion}) is perhaps a bit closer to the motivation for
Maietti's \emph{minimalist two-level foundation for constructive mathematics}~\cite{MAIETTI2009319} (also cf.\ the work with Sambin~\cite{10.1093/acprof:oso/9780198566519.003.0006}).
There, the reason for the split of the theory into two levels is that it allows to have
\emph{minimal type theory} as (what we call) the inner level, a type theory that is free of extensionality principles and implements a specific formulation of the proofs-as-programs paradigm,
while still having an (in our terminology) outer level with powerful principles.
Somewhat similarly, 2LTT has an inner level that can be taken to be free from principles that are not part of HoTT, while such principles can then be added via the outer level.
Angiuli, Hou (Favonia), and Harper \cite{angiuli_et_al:LIPIcs:2018:9673} present cartesian cubical type theory as a two-level system.

2LTT as presented in the current paper (or, rather, a previous draft of it that had been available for a while) has been used by and connected to several other lines of work.
One is the book \emph{The Univalence Principle} by Ahrens, North, Shulman, and Tsementzis~\cite{ahrensNorthShulmanTsementzis_Univalence}, using the setting to formulate and prove a very general result stating that equivalent mathematical structures are indistinguishable.%
\footnote{A shorter and earlier version of this work is also available~\cite{10.1145/3373718.3394755}.}
Going in a different direction, Kov{\'a}cs uses 2LTT for staging with dependent types~\cite{10.1145/3547641} and, in particular, shows that staging with stability and soundness corresponds to conservativity over the inner level.
Yet another application was given by Barras and Maestracci~\cite{barras:hal-03138145}, using a two-level type theory in Dedukti~\cite{assaf2016dedukti} to encode cubical type theory.
Finally, 2LTT provides a framework which may be expressive enough to ``eat'' (model in a partially synthetic sense) HoTT~\cite{kraus:inftyCwFs}.

Other suggestions to address question \ref{item:secondquestion}, i.e.\ systems that make it possible to develop a theory of semisimplicial types and higher categories, have been made.
One is the type theory for synthetic $\infty$-categories by Riehl and Shulman~\cite{riehl2018type}, a setting that uses additional context layers to express structure that, in 2LTT, would be expressed via the outer equality type.
A setting closer to standard HoTT, but less expressive, was suggested by Finster, Allioux, and Sozeau: By equipping HoTT with a universe of judgmentally associative and unital polynomial monads, they can encode higher coherent algebraic structures including $\infty$-groupoids.

Our formalisation approach of 2LTT mentioned above
uses the type theory of Lean as the outer level, and uses
type classes to keep track of and automatically propagate fibrancy constraints.
We discuss this further in the conclusions (\cref{sec:conclusions}).
This strategy is similar to the one used by Boulier and Tabareau~\cite{boulier:hts} in Coq,
although their development proceeds in a direction different from the one pursued in the present paper.
They define the fibrant equality type as a \emph{private} inductive type~\cite{bertot:private}.
Exposing a custom induction principle for such a private inductive type allows one to retain computational behaviour while restricting the user to explicitly provided eliminators. However, private inductive types are not available in all proof assistants.
Agda supports a version of 2LTT more directly via a universe of ``strict sets'' \lstinline{SSet}.%
\footnote{A pull request to Agda's repository is available at \url{https://github.com/agda/agda/pull/4091}.}
Agda's approach is somewhat different: instead of starting in the outer level and encoding inner types from there, it treats Agda's types as fibrant by default and adds a new universe to simulate the outer level instead.
2LTT in this setting has been explored by Uskuplu~\cite{elif_2ltt}.

\subsection{Outline}
The outline of the paper is as follows.
In \cref{sec:two-level}, we specify the version of 2LTT that we consider in this paper.
We intentionally include as few assumptions on the theory as possible, but we also discuss a number of reasonable additional assumption that one would like to make.
The section also discusses the semantics of 2LTT, with some minor differences to the development of~\cite{paolo:thesis}.
Indeed, we think of 2LTT as being defined via its category of models.
We show basic results and introduce the useful notions of \emph{fibrancy} and \emph{cofibrancy} in \cref{sec:fib-cofib}.
This section, we hope, turns 2LTT into a language which can be useful for the study of concepts that are not completely internal to HoTT.
A first such application can be found in \cref{sec:diag-inverse}, where we develop the theory of Reedy fibrant diagrams over inverse categories.
We conclude in \cref{sec:conclusions} with a short discussion on formalisations.
\end{section}

\begin{section}{Two-level type theory}
\label{sec:two-level}

The basic idea of 2LTT is that it contains two separate levels of types:
\begin{itemize}
\item the \emph{outer} level, which is a form of traditional Martin-L\"of type theory with intensional equality types and the principle of uniqueness of identity proofs (\textsc{UIP});
\item the \emph{inner} level, which is essentially homotopy type theory, and contains univalent universes and potentially higher inductive types~\cite{hott-book}.
\end{itemize}

In this section, we start by suggesting a syntax for 2LTT.
We strive to be close to the standard syntax of MLTT and HoTT as used in the book~\cite{hott-book}.
In a nutshell, we have \emph{two} copies of each basic type or type former, one outer and one inner.
Inner types can be converted to outer types via a separate operation, and contexts are shared between the two levels.

Our suggested syntax should not be understood as a complete specification of 2LTT: such syntactical specifications require many more rules than we give, most of which are obvious and standard but nevertheless important.
Instead, we give a precise specification of 2LTT with a semantic approach.
We define what a model of 2LTT is (essentially a combination of two \emph{categories with families}~\cite{dybjer:cwf} which share a common category of contexts).
From this definition, it is clear that the suggested syntax can be used to perform constructions in any model of 2LTT.

\begin{remark}[Initiality of the syntax] \label{rem:initiality-conjecture}
We can view the syntax as notation which works in any model, and this is how we understand the developments in later sections of the paper.
Our category of two-level models will be the category of models of a generalised algebraic theory and thus be locally finitely presentable.
As such, there is in particular an \emph{initial} model for two-level type theory,
and of course, all constructions will work in this initial model.
If one were to make the syntax precise (cf.~\cite{roberta_syntax,roberta_syntaxAgain}), then one would expect this initial model to coincide with the \emph{term model}.
However, it is known in the community that a complete proof for this sort of statement requires a lot of work.
For the calculus of constructions, this was carefully worked out by Streicher~\cite{Streicher93}, and formalisation projects for intensional Martin-L\"of type theory were described by de Boer, Brunerie, Lumsdaine, and M\"ortberg~\cite{brunerie_talk,lumsdaine_talk,lums_bru_talk_in_Stockholm,lums_bru_talk_at_hottest}.
An Agda formalisation is available as part of the licanciate thesis by de Boer~\cite{deBoer:licentiate-thesis}.
It may be possible to adapt these proofs to two-level type theory, but this is beyond the scope of the paper.
While the question is of course important for type theory in general, it is orthogonal to the specific idea of having two levels.
\end{remark}

After specifying 2LTT via models, we observe some immediate consequences from the definitions:
for example, $\Pi$- and $\Sigma$-types are preserved up to isomorphism when converting from outer to inner types.
Other properties do not follow from the definitions but could be added as assumptions, leading to systems such as HTS, and we discuss these assumptions separately.
We also discuss several specific example models (or classes of example models), and prove a conservativity property for 2LTT without further assumptions.
Further, we examine the possibility of an \emph{inner replacement} (or \emph{fibrant replacement}).

\subsection{Syntax}
\label{sec:syntax}

We stay close to the presentation of type theory given in the appendix in the homotopy type theory book~\cite[App.~A.2]{hott-book}.
There is however one technical difference that we want to make.
The semantics of Russell-style universes (where terms of the universe are types) is less elegant than the one of Tarski-style universes (if $A$ is a term of a universe, then $\El \, A$ is a type), and the former can be seen as a special case of the latter.
This is a general observation in type theory which has little to do with the idea of having two levels; see also point \cref{axiom:russel-universes} in \cref{subsec:Strengthenings}.

We consider the judgments $\Gamma \; \ctx$, $\Gamma \vdash a : A$, and $\Gamma \vdash a \jdeq a' : A$.
In addition, we consider the two judgments
\begin{equation*}
 \Gamma \vdash A \; \typ_j \qquad\qquad\qquad \Gamma \vdash A \; \inn\typ_j
\end{equation*}
Here, $j$ is a natural number, the \emph{size} of $A$.
The first means that $A$ is an outer type, the second that $A$ is an inner type.
Similar to the judgment $\Gamma \vdash a \jdeq a' : A$, we consider equality judgments for types.


For the outer level of the theory that we consider, we have the following basic types and type formers:
\begin{itemize}
\item $\Pi$, the type former of dependent functions;
\item $\Sigma$, the type former of dependent pairs;
\item $+$, the coproduct type former;
\item $\unit$, the unit type;
\item $\emptyt$, the empty type;
\item $\N$, the type of natural numbers;
\item $=$, the equality type;
\item a cumulative hierarchy $\UU_0, \UU_1, \ldots$ of universes;
\item inductive and quotient types (not used in this paper).
\end{itemize}
The inner level of our 2LTT has the same basic types and type formers.
We annotate them to avoid confusion:%
\footnote{In a previous version of this article,
	inner components were annotated with $\mathrm{o}$ (e.g.\ $\Sigma^{\mathrm{o}}$) instead of $\mathsf{i}$ ($\inn\Sigma$).
	Other authors don't annotate the inner components at all and
	annotate \emph{outer} type formers instead, e.g.\ $\Sigma^s$ (\emph{s} for \emph{strict})~\cite{alt-cap-kra:two-level} or $\Sigma^e$ (\emph{e} for \emph{exo})~\cite{ahrensNorthShulmanTsementzis_Univalence}.
}
\begin{itemize}
\item $\inn\Pi$, the type former of inner dependent functions;
\item $\inn\Sigma$, the type former of inner dependent pairs;
\item $\innplus$, the inner coproduct type former;
\item $\inn\unit$, the inner unit type;
\item $\inn\emptyt$, the inner empty type;
\item $\inn\N$, the inner type of natural numbers;
\item $\inn =$, the inner equality (or \emph{path-equality}) type (in the sense of $\HOTT$);
\item a cumulative hierarchy $\inn\UU_0, \inn\UU_1, \ldots$ of inner universes;
\item possibly inductive and higher inductive types.
\end{itemize}
The basic rules for $\Pi$, $+$, $\unit$, $\emptyt$, $\N$, as well as $\inn\Pi$, $\inn +$, $\inn \unit$, $\inn \emptyt$, $\inn \N$ are the standard ones and match those given in the HoTT book~\cite[Appendix A.2]{hott-book}, modulo the difference between Russell and Tarski universes.
For $\Sigma$ and $\inn\Sigma$, we assume in addition the judgmental $\eta$-law $x \jdeq (\pi_1(x),\pi_2(x))$.%
\footnote{Not all authors assume the judgmental $\eta$-law for $\Sigma$-types, with the possibly most prominent example in the current context being the HoTT book~\cite{hott-book}.
This law simplifies our theory somewhat: see \eqref{eq:cv-sigma} of \cref{lem:cv-preserves} and the discussion in \cref{rem:postive-types-not-preserved}.
In its absence, we would work with telescopes of inner types as a substitute for inner dependent sums.}
Note that all inference rules in the cited appendix are stated in terms of universes, and in our situation,
all occurrences of $A : \UU_j$ are replaced by $A \; \typ_j$ and $A : \inn\UU_j$ by $A \; \inn\typ_j$; this keeps the two levels separate.
For example, for the formation of coproducts, we have the rules

\smallskip

\begin{equation*}
\inferrule{\Gamma \vdash A \; \typ_j \\
	\Gamma \vdash B \; \typ_j}
{\Gamma \vdash A + B \; \typ_j}
\quad\deflabel{\textsc{form-$+$}}
\end{equation*}

\medskip

\begin{equation*}
\inferrule{\Gamma \vdash A \; \inn\typ_j \\
	\Gamma \vdash B \; \inn\typ_j}
{\Gamma \vdash A \inn + B \; \inn\typ_j}
\quad\deflabel{\textsc{form-$\innplus$}}
\end{equation*}

%
%

\smallskip

\noindent
To emphasise, these rules do \emph{not} allow us to form a coproduct of an inner and an outer type!
The outer equality type $=$ and inner equality (path-equality) type $\inn =$ have the usual rules as well.
Since equality is the central aspect of 2LTT, we state the rules for the outer equality type explicitly, although they are completely standard:
\begin{mathpar}\label{eq:steq}
\inferrule{\Gamma \vdash A \; \typ_j \\
  \Gamma \vdash a, b : A}
{\Gamma \vdash a = b \; \typ_j}
\quad\deflabel{\textsc{form-$=$}}
\label{rule:steq-form}

\and

\inferrule{\Gamma \vdash a : A}
{\Gamma \vdash \refl{a} : a = a}
\quad\deflabel{\textsc{intro-$=$}}

\and

\label{rule:steq-intro}

\and

\inferrule{\Gamma \vdash a : A \\
  \Gamma.(b : A).(p : a = b) \vdash P \; \typ_j \\
  \Gamma \vdash d : P[a, \refl a]}
{\Gamma.(b : A).(p : a = b) \vdash J_P(d): P}
\quad\deflabel{\textsc{elim-$=$}}\label{rule:steq-elim},
\end{mathpar}
together with the usual computation rule:
\begin{equation*}
 J_P(d) [a, \refl a] \equiv d.
\end{equation*}
The rules of the inner type are the same, with $\typ_j$ replaced by $\inn\typ_j$, $=$ by $\inn =$, and $\refl a$ by $\innrefl a$.
The $\El$ operator is assumed to be an isomorphism between terms of $\UU_j$ (or $\inn \UU_j)$ and (inner/outer) types at level $i$.
For the outer equality type, we furthermore assume the principles of \textsc{UIP} and function extensionality:
\begin{mathpar}
\inferrule{\Gamma \vdash a_1, a_2 : A \\
  \Gamma \vdash p, q : a_1 = a_2}
{\Gamma \vdash K(p, q) : p = q}
\quad\deflabel{\textsc{uip}}\label{rule:uip}%
\and%
\inferrule{\Gamma \vdash f, g : \prd{a : A} B(a) \\
  \Gamma.(a : A) \vdash p(a) : f(a) = g(a)}
{\Gamma \vdash \mathsf{funext}(p): f = g}
\quad\deflabel{\textsc{funext}}\label{rule:funext}%
\end{mathpar}
Instead of \textsc{UIP}, we could add the slightly stronger principle called \emph{Axiom K}~\cite{streicher:axiom-k}.
This is a version of \eqref{rule:steq-elim} (with its computation rule) for inducting on loops with a fixed base point.
It does not make a difference in our treatment.

All inner universes $\inn\UU_j$ are assumed to be univalent in the sense of homotopy type theory.

Context extension also follows the rules of \cite[Appendix A.2]{hott-book}, but note that there is only \emph{one} judgment of the form $\Gamma \vdash \ctx$ and there are \emph{two} hierarchies of types.
Since context extension works for every type, we have:
\begin{mathpar}
\inferrule{\Gamma \; \ctx \\
  \Gamma \vdash A \; \typ_j}
{\Gamma.A \; \ctx}
\quad\deflabel{\textsc{$\ctx$-ext}}
\and%
\inferrule{\Gamma \; \ctx \\
  \Gamma \vdash A \;  \inn\typ_j}
{\Gamma.A \; \ctx}
\quad\deflabel{\textsc{$\inn\ctx$-ext}}
\end{mathpar}
This means that contexts are shared between the two levels.

Finally, we have a conversion operation $\cv$ which turns inner types into outer types:
Whenever we have $\Gamma \vdash A \; \inn\typ_j$, we have a type $\cv(A)$ such that $\Gamma \vdash \cv(A) \; \typ_j$
and this operation preserves context extension, in the sense that $\Gamma . A$ and $\Gamma . \cv(A)$ are the same context.
This operation is natural in $\Gamma$, and the detailed specification is given in the next subsection.
Preservation of context extension in particular means that the set of terms of $A$ and $\cv(A)$ are isomorphic.
For the ``forwards-direction'' we again write $\cv$, that is, for $\Gamma \vdash a:A$, we have $\Gamma \vdash \cv(a) : \cv(A)$.

\subsection{Semantics}
\label{subsec:2ltt-semantics}

We define what it means to be a model of two-level type theory.
For this, we use the language of \emph{categories with families} (cwfs)~\cite{dybjer:cwf}.
We have a choice of how to handle universe hierarchies.
In this work, we aim for concreteness and fix a cumulative hierarchy indexed by natural numbers, with no notion of top-level type.

Given a presheaf $F$ over a category $\E$, we denote by $\E/F$ its category of elements.
Recall the following notion.

\begin{definition}[\cite{dybjer:cwf}] \label{def:cwf}
A \emph{category with families} (cwf) is a category $\E$ with:
\begin{itemize}
\item
a presheaf $\Ty$ over $\E$ (\emph{types}),
\item
a presheaf $\Tm$ over $\E/\Ty$ (\emph{terms}),
\item
a terminal object $1 \in \E$ (\emph{global} or \emph{empty context}),
\item
for all $\Gamma \in \E$ and $A \in \Ty(\Gamma)$, a terminal object $(\Gamma.A, p_A, q_A)$ in the category of triples $(\Delta, \sigma, t)$ where $\Delta \in \E$, $\sigma \co \Delta \to \Gamma$, and $t \in \Tm(\Delta, A [\sigma])$ (\emph{context extension}).
\end{itemize}
\end{definition}

In the above, $[\sigma]$ denotes the action of $\Ty$ on the morphism $\sigma$.
The action of $\Tm$ on morphisms is written similarly.
These operations are referred to as \emph{substitutions} of types and terms, respectively.

We treat \cref{def:cwf} as having algebraic character, immediately giving rise to a category of cwfs.
This applies to all the definitions in this subsection.
They are to be read as not just introducing a certain concept (algebraic in character), but also the corresponding notion of morphism for it, part of a category structure.

\begin{lemma-qed} \label{lem:tm-vs-sections}
Naturally in the cwf $\E$, $\Gamma \in \E$, and $A \in \Ty(\Gamma)$, the set $\Tm(\Gamma,A)$ is isomorphic to the set of sections of $p_A$, with the isomorphism given by terminality of $(\Gamma.A, p_A, q_A)$ and substitution of $q_A$.
\end{lemma-qed}

In the usual fashion, one has notions of cwfs with type formers and axioms such as $\Pi$-types, identity types, and function extensionality.
In the following, we abbreviate a generic selection of such type formers and axioms by $T$ and speak of cwfs having type formers $T$.

Given a cwf $\E$, we also speak of a \emph{cwf structure} on the category $\E$.
We abbreviate such a cwf structure just by its presheaf of types.
By restriction, we obtain a category of cwf structures with type formers $T$ on a fixed category $\E$.
Note that the action of its morphisms on terms is an isomorphism (this follows from preservation of extension).

\begin{definition} \label{def:cwf-hierarchy}
A \emph{cwf hierarchy $\Ty$ with type formers $T$} on a category $\E$ is a sequential diagram
\[
\xymatrix{
  \Ty_0
  \ar[r]
&
  \Ty_1
  \ar[r]
&
  \ldots
}
\]
of cwf structures with type formers $T$ on $\E$.
\end{definition}

We can regard a cwf hierarchy as a multi-sorted cwf indexed over the poset $\omega$.
This makes it a cumulative hierarchy.
Note that the natural transformation $\Ty_j \to \Ty_{j+1}$ preserve the type formers $T$.
We will omit the subscript index into the hierarchy when it is inferable or we are only interested at a fixed index.

\begin{definition} \label{def:model-of-mltt}
A \emph{model of Martin-L\"{o}f type theory with type formers $T$} on a category $\E$ is a cwf hierarchy $\Ty$ with type formers $T$ on $\E$ together with, for each $j$, a global section $\UU_j$ of $\Ty_{j+1}$ with an isomorphism $\El_j \co \Tm_{j+1}(\Gamma, \UU_j) \simeq \Ty_j(\Gamma)$ natural in $\Gamma \in \E$.
\end{definition}

We refer to such a model by just its cwf hierarchy $\Ty$.
We call $\UU_j$ the \emph{$j$-th universe}.
If unambiguous, we will omit the universe index $j$.
Note that the above definition models a cumulative hierarchy of universes (closed under type formers), with no top-level notion of type.

We consider two important kinds of models of Martin-L\"{o}f type theory.

\begin{definition} \label{def:model-of-stt}
A \emph{model of set type theory} is a model of Martin-L\"{o}f type theory with the following formers:
\begin{enumerate}[label=(\roman*)]
\item \label{def:model-of-stt:negative-tf}
$\unit/\Sigma/\Pi$-types, all with $\eta$-laws (\ie satisfying universal properties);
\item \label{def:model-of-stt:positive-tf}
identity types, empty types, (binary) coproduct types, and natural number types;
\item
uniqueness of identity proofs and function extensionality.
\end{enumerate}
\end{definition}

\begin{definition} \label{def:model-of-hott}
A \emph{model of homotopy type theory} is a model of Martin-L\"{o}f type theory with the type formers~\cref{def:model-of-stt:negative-tf,def:model-of-stt:positive-tf} of \cref{def:model-of-stt} that is \emph{univalent}, \ie $(\UU_j, \El_j)$ is univalent in $\Ty_{j+1}$ for each $j$.
\end{definition}

In applications, one can add more type formers to these notions as desired, for example higher inductive types to \cref{def:model-of-hott} or quotient types to \cref{def:model-of-stt}.
All our example models of set type theory are in fact models of extensional type theory, \ie have equality reflection.
For our developments here, the given type formers will suffice.

\begin{definition} \label{def:two-level-model-of-mltt}
A \emph{two-level model} (of Martin-L\"{o}f type theory) with \emph{inner type formers $\inn{T}$} and \emph{outer type formers $T$} consists of:
\begin{itemize}
\item
a category $\E$ (\emph{contexts}),
\item
the \emph{inner level}, a model $\inn{\Ty}$ with type formers $\inn{T}$ on $\E$,
\item
the \emph{outer level}, a model $\Ty$ with type formers $T$ on $\E$,
\item
a \emph{conversion} morphism $\cv \co \inn{\Ty} \to \Ty$ of cwf hierarchies (with no type formers) on $\E$, converting inner types to outer types.
\end{itemize}
\end{definition}

Ignoring type formers, a two-level model can be seen as a multi-sorted cwf indexed over the poset $\braces{0 \to 1} \times \omega$.
With type formers, we have two separate multi-sorted cwfs (with inner and outer type formers, respectively) indexed over $\omega$.
The conversion morphism $\cv$ can be described as a morphism of multi-sorted cwfs that is the identity on the category of contexts.
Consequently, context extension is automatically preserved: if we have $\Gamma \vdash A \;  \inn\typ_j$, then $\Gamma.A$ and $\Gamma.\cv(A)$ are the same object of $\E$.%
\footnote{Of course, the notion of a two-level model could be weakened to let $\cv$ only preserve the universal property of context extension, not the context extension operation itself; \ie one could require $\Gamma.A \simeq \Gamma.\cv(A)$ instead of an equality. Similarly, one could consider two separated multi-sorted cwfs with a morphism that induces an equivalence on the categories of contexts. However, all our models in \cref{subsec:models} satisfy the rather strict requirements of \cref{def:two-level-model-of-mltt,def:model-of-2ltt}.}
Note that we assume no interaction between inner and outer type formers under the conversion morphism.

Finally, we can make precise two-level type theory.

\begin{definition} \label{def:model-of-2ltt}
A \emph{model of two-level type theory} is a two-level model of Martin-L\"{o}f type theory where:
\begin{itemize}
\item
the inner level is a model of homotopy type theory,
\item
the outer level is a model of set type theory.
\end{itemize}
\end{definition}

\subsection{Preservation of type formers by conversion}
\label{subsec:pres-of-type-formers}

For this subsection, we fix a model $\E$ of two-level type theory.
We have assumed very little about the conversion morphism $\cv \co \inn{\Ty} \to \Ty$, but in this subsection, we see that it being a morphism of cwf hierarchies allows us to derive important properties.

\begin{lemma} \label{lem:c-on-terms-iso}
For $\Gamma \vdash A \; \inn\typ_j$, the conversion operator $\cv$ from the set of terms of $A$ to the set of terms of $\cv(A)$ is an isomorphism.
\end{lemma}

\begin{proof}
This follows from \cref{lem:tm-vs-sections} since $\cv$ preserves context extesion.
\end{proof}

Further, we can use that many types are characterised by universal properties (the syntactical equivalent of which are elimination rules).
The consequences are summarised in the following statement.

\begin{lemma} \label{lem:cv-preserves}
Let $\Gamma \in \E$, $A \in \inn{\Ty}(\Gamma)$, and $B \in \inn{\Ty}(\Gamma.A)$.
We have the following morphisms natural in $\Gamma$.
All morphisms live in the slice over $\Gamma$ (and \cref{eq:equ-strict-fib} lives in the slice over $\Gamma.A.A$):
\begin{alignat}{3}
 & \Gamma.\cv(\inn{\unit}) &\quad&\xrightarrow{\sim} &\quad& \Gamma.\unit \label{eq:cv-unit}\\
 & \Gamma.\cv(\inn{\Sigma}_A B) &&\xrightarrow{\sim} && \Gamma.\Sigma_{\cv(A)} \cv(B) \label{eq:cv-sigma}\\
 & \Gamma.\cv(\inn{\Pi}_A B) &&\xrightarrow{\sim} &\; & \Gamma.\Pi_{\cv(A)} \cv(B) \label{eq:cv-pi} \\
 & \Gamma.(\cv(A)+\cv(B)) &&\to &&\Gamma.\cv(A \innplus B) \label{eq:coprod-strict-fib} \\
 & \Gamma.\emptyt && \to && \Gamma.\cv(\inn{\emptyt}) \label{eq:zero-strict-fib} \\
 & \Gamma.\N && \to && \Gamma.\cv(\inn{\N}) \label{eq:nat-strict-fib} \\
 & \Gamma.(u,v:A).\left(\cv(u) =_{\cv(A)} \cv(v)\right) && \to && \Gamma.(u,v:A).\cv(u \inneq[A] v)
 \label{eq:equ-strict-fib} \\
 & \Gamma.\cv(\inn\UU_j) && \to && \Gamma.\UU_j \label{eq:cv-universe}
\end{alignat}
Moreover, the three annotated morphisms are natural isomorphisms.
\end{lemma}

Before constructing the morphisms, let us make some remarks.
Via the adjunction with $\Pi$, the above morphisms give rise to internal functions at the outer level.
This way, isomorphisms become judgmental internal isomorphisms, \ie the functions in both directions compose judgmentally to the identity.


It is also worth emphasising the asymmetry that the conversion function $\cv$ introduces:
In general, the rules of the system mean that it is usually easier to eliminate from outer types into inner types than vice versa.
While $\Pi$, $\Sigma$, and $\unit$ at the outer level are ``the same'' as at the inner level, in the sense made precise in the lemma above, the same is not automatically the case for the remaining types and type formers.
For the case of equality types however, this assumption would destroy our motivation for two-level type theory altogether.
Although invertibility of \cref{eq:coprod-strict-fib,eq:zero-strict-fib,eq:nat-strict-fib}, discussed in \cref{subsec:Strengthenings} under \cref{axiom:conversion-pres-positive}, does hold in some of the intended models, it is not valid with the point of view of the outer level as internalized meta-theory of the object theory given by the inner level, made precise by the presheaf model of two-level type theory in \cref{subsubsec:presheaf-model}. From that point of view, some of the maps and the absence of their invertibility can be understood as follows:
\begin{itemize}
\item
Coproducts $A+B$: given a term of $A$ or a term of $B$ in the meta-theory, we get an element of $A \innplus B$, but not vice versa.
For example, the context might not allow us to normalise an element of a coproduct type to a coprojection.
\item
Empty type $\emptyt$: from a contradiction in the meta-theory, one can get a contradiction in the object theory, but not vice versa.
\item
Natural numbers $\N$: we think of an external natural number as a numeral. From a numeral, one can get an internal natural number, but from an element of the inner natural numbers type, we do not always get a numeral.
\item
Equality $x=y$: two meta-theoretically (\eg syntactically) equal expressions are provably equal via reflexivity, but provably equal expressions might not be equal meta-theoretically (syntactically).
\end{itemize}
The last morphism \cref{eq:cv-universe} says that inner types are (up to $\cv$) outer types, but we would not expect the reverse since not every meta-theoretic statement can be internalised.

\begin{proof}[Proof of \cref{lem:cv-preserves}]
We start with the three isomorphisms.
Recall that inner and outer $\unit/\Sigma$-types come with $\eta$-laws.
Since $\cv$ preserves context extensions, it easily follows that $\cv$ preserves these type formers up to canonical isomorphism.
In fact, this is true for cwf morphisms in general, without the requirement that the underlying functor is an identity.
In detail, the isomorphism~\cref{eq:cv-unit} is given by
\begin{align*}
\Gamma.\cv(\inn{\unit})
&=
\Gamma.\inn{\unit}
\\&\simeq
\Gamma
\\&\simeq
\Gamma.\unit
\end{align*}
and the isomorphism \cref{eq:cv-sigma} is given by
\begin{align*}
\Gamma.\cv(\inn{\Sigma}_A B)
&=
\Gamma.\inn{\Sigma}_A B
\\&\simeq
\Gamma.A.B
\\&=
\Gamma.c(A).c(B)
\\&\simeq
\Gamma.\Sigma_{\cv(A)} \cv(B)
.\end{align*}
Since $\Pi$-types in the inner and outer level come with the $\eta$-law, their terms are uniquely characterised as terms of the codomain type.
We have, naturally in $\sigma \co \Delta \to \Gamma$:
\begin{align*}
\E/\Gamma\left(\Delta, \Gamma.\cv(\inn{\Pi}_A B)\right)
&=
\E/\Gamma\left(\Delta, \Gamma.\inn{\Pi}_A B\right)
\\&\simeq
\E/\Gamma.A\left(\Delta.A[\sigma], \Gamma.A.B\right)
\\&=
\E/\Gamma.\cv(A)\left(\Delta.\cv(A[\sigma]), \Gamma.\cv(A).\cv(B)\right)
\\&=
\E/\Gamma.\cv(A)\left(\Delta.\cv(A)[\sigma], \Gamma.\cv(A).\cv(B)\right)
\\&\simeq
\E/\Gamma\left(\Delta, \Gamma.\Pi_{\cv(A)} \cv(B)\right),
\end{align*}
from which \cref{eq:cv-pi} follows by Yoneda.
The morphism \cref{eq:coprod-strict-fib} is given by the usual properties of outer and inner coproduct.
We have $\Gamma.A \to \Gamma.\cv(A \innplus B)$
and
$\Gamma.B \to \Gamma.\cv(A \innplus B)$
due to the inner coproduct, and the outer coproduct lets us construct \cref{eq:coprod-strict-fib}.
The morphism \cref{eq:zero-strict-fib} comes from the outer empty type $\emptyt$; no property of the inner empty type is used.

The usual morphism $\Gamma.\cv{(\inn\unit \, \innplus \, \inn\N)} \to \Gamma.\cv(\inn\N)$, by
composition with \cref{eq:cv-unit} and \cref{eq:coprod-strict-fib}, gives rise to a morphism
$\Gamma.(\unit + \cv(\inn\N)) \to \Gamma.\cv(\inn\N)$.
By the universal property of $\N$, we get the morphism \cref{eq:nat-strict-fib}.

Outer equality implies inner equality in the sense of \cref{eq:equ-strict-fib} by the $J$-eliminator of the outer equality and \cref{lem:c-on-terms-iso}.
Finally, \cref{eq:cv-universe} uses the isomorphisms $\El_j$ and $\inn\El_j$.
\end{proof}

\begin{remark}\label{rem:postive-types-not-preserved}
For ``positive'' type formers such as empty types, coproduct types, natural number types, or identity types, we cannot hope for preservation up to isomorphism by $\cv$, even if we add $\eta$-laws to both the inner and outer type former.
This is because their universal property talks about maps out of the type in question into another type of the same level, rather than an arbitrary object of the slice.
For example, for empty types with $\eta$-law in both inner and outer level, we have natural isomorphisms $\E/\Gamma(\Gamma.\inn{0}, \Gamma.C) \simeq 1$ for $C \in \inn{\Ty}(\Gamma)$ and $\E/\Gamma(\Gamma.0, \Gamma.C) \simeq 1$ for $C \in \Ty(\Gamma)$.
To be able to conclude that $\Gamma.\inn{0} \simeq \Gamma.0$ over $\Gamma$, we would have to apply the universal property of $\inn{0}$ to an outer type $C$.

Our proof of the isomorphism \eqref{eq:cv-sigma} uses the judgmental $\eta$-law for $\Sigma$-types, a rule that is not assumed by all authors.
If we only assume the induction principle with which the HoTT book~\cite[Chp~1.6]{hott-book} characterises $\Sigma$-types, without a judgmental $\eta$-law, then $\Sigma$ becomes a positive type former, not generally preserved by $\cv$.

In some of the models we discuss in \cref{subsec:models}, the comparison maps~\cref{eq:coprod-strict-fib,eq:zero-strict-fib,eq:nat-strict-fib,eq:equ-strict-fib,eq:cv-universe} are indeed not isomorphisms.
\end{remark}

\subsection{Strengthenings and extensions}
\label{subsec:Strengthenings}

In \cref{subsec:2ltt-semantics}, we have defined two-level type theory in a minimalistic fashion, eschewing properties that one might argue are reasonable to ask for.
Indeed, other two-level type theories such as HTS are much more rigid.

We separate possible strengthenings into three groups.
The first group has nothing to do with two-level type theory proper, concerning only the basic structure of models of Martin-L\"{o}f type theory as per \cref{def:model-of-mltt}.
In a model of two-level type theory, this applies to both inner and outer levels separately.
\begin{enumerate}[label=(M\arabic*)]
\item \label{axiom:step-maps-mono}
We can ask that the step maps $\Ty_j \to \Ty_{j+1}$ in the cwf hierarchy are mono.
In a set-theoretic metatheory, we could go further and demand the step maps are subpresheaf inclusions.
\item \label{axiom:russel-universes}
We can ask that the natural isomorphism $\El \co \Tm(\Gamma, \UU_j) \simeq \Ty_j(\Gamma)$ is an equality on the nose, \ie that $\Tm(\Gamma, \UU_j) = \Ty_j(\Gamma)$ and $\El = \id{}$.
This has the effect of modelling Russell-style universes.
\end{enumerate}
Implementing both of these points makes it possible to forgo the distinction between types and terms (with types just being terms of universes), treating the typing relation as going between terms.
Together with univalence, this yields a type theory as in the homotopy type theory book~\cite[App.~A.2]{hott-book} (but note that the $\eta$-law for $\Sigma$-types is not included there).

The second group concerns strictness properties of the conversion morphism $\cv$ relating the inner to the outer level in a model of two-level type theory as per \cref{def:model-of-2ltt}.
\begin{enumerate}[label=(T\arabic*)]
\item \label{axiom:conversion-mono}
We can ask that the conversion morphism $\cv \co \inn{\Ty} \to \Ty$ is mono on types, \ie that $\cv \co \inn{\Ty}(\Gamma) \to \Ty(\Gamma)$ is injective for $\Gamma \in \E$.
In a set-theoretic metatheory, we could demand this is on the nose, \ie that $\cv$ is a subpresheaf inclusion on types (and that the action on terms is not just an isomorphism, but an equality).
\item \label{axiom:conversion-pres-unit-sigma-pi}
We can ask that the isomorphisms of \cref{lem:cv-preserves} witnessing preservation of $\unit$/$\Sigma$/$\Pi$-types under the conversion morphism $\cv$ are equalities on the nose.
For $\Pi$-types, this means the following: given $A \in \inn{\Ty}(\Gamma)$ and $B \in \inn{\Ty}(\Gamma.A)$, we have $\cv(\inn{\Pi}_A B) = \Pi_{\cv(A)} \cv(B)$; given further $f \in \inn{\Tm}(\Gamma, \inn{\Pi}(A, B))$ and $a \in \inn{\Tm}(\Gamma, A)$, we have $\inn{\app}(f, a) = \app(f, a)$ (preservation of abstraction is implied by this).

If we ask for any other type formers to be preserved by $\cv$ up to canonical isomorphism (such as in~\cref{axiom:conversion-pres-positive} below), we can similarly ask that these isomorphisms are equalities on the nose.
\item \label{axiom:inner-replete}
We can ask that inner types are \emph{replete} within outer types, \ie that any outer type with extension isomorphic to that of an inner type is itself the image of an inner type under conversion.
In detail, given $A \in \Ty(\Gamma)$ and $B \in \inn{\Ty}(\Gamma)$ with $\Gamma.A \simeq \Gamma.B$ over $\Gamma$, we have $A' \in \inn{\Ty}(\Gamma)$ with $A = \cv(A')$, naturally in $\Gamma$.
\end{enumerate}
Implementing~\cref{axiom:conversion-mono,axiom:conversion-pres-unit-sigma-pi} essentially yields an (outer) type theory with a predicate of ``being inner'' on types that some type formers are closed under.
With inner types named ``fibrant'', this is the perspective taken in HTS.

The third group concerns more semantical extensions or axioms that will differ depending on what kinds of models one is interested in.
\begin{enumerate}[label=(A\arabic*)]
\item \label{axiom:conversion-pres-positive}
We can ask that conversion preserves certain ``positive'' type formers (minus identity types) up to canonical isomorphism, making for example the following canonical comparison maps over $\Gamma \in \E$ invertible:
\begin{alignat*}{3}
&\Gamma.0 \to \Gamma.\inn{0} & & &\qquad  &\text{\cf \eqref{eq:zero-strict-fib}}\\
&\Gamma.(\cv(A) + \cv(B)) \to \Gamma.(A \inn{+} B) &\qquad &\text{for $A, B \in \inn{\Ty}(\Gamma)$} && \text{\cf \eqref{eq:coprod-strict-fib}} \\
&\Gamma.\N \to \Gamma.\inn{\N} && && \text{\cf \eqref{eq:nat-strict-fib}}
\end{alignat*}
We can weaken this by asking for an inverse only up to the outer identity type.
Then these properties become axioms internal to two-level type theory.
\item \label{axiom:reedy-limits}
The following is a weakening of the assertion of~\cref{axiom:conversion-pres-positive} for natural numbers that still allows for the construction of inner types of Reedy fibrant semisimplicial types (see \cref{lem:reedy-classifier} and afterwards).
We can ask that countably infinite towers of (trivial) fibrations have (trivially) fibrant limits.
The notion of fibration used here will be defined in \cref{subsec:fibrant-types} in terms of inner types.
This is an internalization (to the outer level) of the corresponding axiom considered for (co)fibration categories~\cite[Definition~1.6.1]{radulescu:cofibration-categories} (see also \cite[Lemma~11.8]{shulman:inverse-diagrams}).
\item \label{axiom:nat-cofibrant}
Weakening~\cref{axiom:reedy-limits} further, we can ask that the outer natural number type is \emph{cofibrant}.
The notion of cofibrant type will be defined in \cref{subsec:cofibrations}, essentially meaning that exponentiation with it preserves inner types up to isomorphism.
This axiom has been suggested by Shulman. 
%
\item \label{axiom:outer-universe-fibrant}
We can ask that the outer universes are ``fibrant'', \ie that there is $u \in \inn{\Ty}(\Gamma)$ such that $\Gamma.\UU \simeq \Gamma.u$ over $\Gamma$ (or even $\UU = \cv(u)$), naturally in $\Gamma \in \E$.
Again, we can weaken this to an isomorphism up to the outer identity type, making it an axiom internal to two-level type theory concerning universes.
\item \label{axiom:equality-reflection}
We can ask that the outer level validates the equality reflection rule, \ie forms a model of extensional type theory.
This is the case in all the example models we are interested in.

We phrase this as an extension so that the base systems retains good metatheoretical properties such as decidability of type checking.
This is relevant for faithful implementation by current proof assistants (note though that some systems such as Andromeda~\cite{andromeda} model equality reflection).

As a compromise, one may add features to the outer level that are partially extensional while retaining decidability of type checking.
An example is the recent addition of universes of strict propositions (with judgmental uniqueness of elements) to Agda and Coq~\cite{gilbert:sprop}.
\item \label{axiom:further-type-formers}
We may add more type formers to the inner or outer level as desired.
For example, since the inner level is simply a version of homotopy type theory, it is natural to add inner higher inductive types.
We can even add higher inductive-inductive types (see \eg~\cite{hott-book} for examples, and~\cite{AAhiits} for a specification).
Similarly, we can add quotient types or quotient inductive-inductive types~\cite{ACDKNF,alt-kap:tt-in-tt,alt-dan-kra:partiality} to the outer level (note that the presence of uniqueness of identity proofs makes higher equalities moot).
\end{enumerate}
Note that one may identify HTS as two-level type theory in our sense extended with the axioms
\cref{axiom:conversion-pres-positive}, \cref{axiom:outer-universe-fibrant}, \cref{axiom:equality-reflection}, \cref{axiom:conversion-mono}, and \cref{axiom:conversion-pres-unit-sigma-pi}, in their strongest form.
In the following subsection, we will discuss which of the above strengthenings and extensions hold in each of several example models.
This will provide justification for not including most of the above conditions as blanket assumptions.
It will also serve as a guide to the reader on which assumption to include in their two-level type theory when they have a certain class of models in mind.

\subsection{Example models}
\label{subsec:models}

We discuss some key models of two-level type theory.
All have in common that the underlying category is presheaves $\widehat{\C}$ over a category $\C$ and that the outer level is given by the standard presheaf model of extensional type theory (in particular, \cref{axiom:equality-reflection} is validated) where types are (small) presheaves over the category of elements of their context.
The only exception to this is in \cref{2ltt-presheaf-model-modified}, where the outer level is different.

We briefly recall key details of this presheaf model in a set-theoretic metatheory.
Fix a sequence of Grothendieck universes $M_0 \in M_1 \in \ldots$ such that $\C$ lives in $M_0$.
Given $\Gamma \in \widehat{\C}$, then $\Ty_j(\Gamma)$ consists of presheaves over $\C/\Gamma$ valued in $M_j$ and $\Tm_j(\Gamma, A)$ is the set of global sections of such a presheaf $A$.
Then $\Ty_j$ is represented by $\UU_j \in \widehat{\C}$ where $\UU_j(X)$ is the set of presheaves over $\C/X$ valued in $M_j$, which itself lives in $M_{j+1}$.
This defines the universe $\UU_j \in \Ty_{j+1}(1)$.
With types presented in this displayed form, the standard definition of type formers is substitution-stable and preserved under size change.

The above definition of types and universes is essentially that of Hofmann and Streicher~\cite{hofmann-streicher:universe-lifting}.
Other constructions are possible, for example following Voevodsky~\cite[Subsection~2.1]{kap-lum:simplicial-model} or Shulman~\cite{shulman:univalence-elegant-reedy} (the latter construction works equally in the non-univalent setting of classifying all maps, not fibrations), but necessitate further work to split type formers.

\subsubsection{Simplicial sets}

The first model of homotopy type theory was in simplicial sets~\cite{kap-lum:simplicial-model}.
As already remarked in that paper, simplicial sets, being a presheaf category, exhibit two separate, but related, structures of models of type theory: the one constructed in the paper itself, and the one that every presheaf category has, modelling extensional type theory.
This idea can be expanded by making simplicial sets a model of two-level type theory.
We suspect that an observation along these lines motivated Voevodsky's HTS.%
\footnote{The introduction of Voevodsky's note~\cite{voe:hts} states: \emph{We call this system and its further extensions HTS for ''homotopy type system''. It is an
extension of the Martin-Lof type system with some additional constructs which reﬂect the
structures which exist in the target of the canonical univalent model of the Martin-Lof
system.}}

Letting $\C = \Delta$ be the simplex category, the outer level is the presheaf model of simplicial sets as explained above.
The inner types over $\Gamma \in \widehat{\Delta}$ are interpreted as the subset of those outer types whose corresponding ``display map'' with target $\Gamma$ is a Kan fibration, making~\cref{{axiom:conversion-mono}} hold.
Kan fibrations are closed under isomorphism, hence~\cref{axiom:inner-replete} holds (in fact, this can be strengthened to closure under retracts).

Outer $\unit/\Sigma/\Pi$-types preserve fibrancy, giving their inner interpretation and enforcing~\cref{axiom:conversion-pres-unit-sigma-pi}.
This applies also to empty types, coproduct types, and natural number types, giving~\cref{axiom:conversion-pres-positive,axiom:reedy-limits,axiom:nat-cofibrant}.
Pullback and pushforward along a monomorphism form a coreflection, giving trivial fibrancy of outer universes~\cref{axiom:outer-universe-fibrant}.

The inner identity type is modelled by the cotensor with $\Delta^1$.
Recall that its elimination operation has a splitting issue.
We follow the splitting strategy introduced by~\cite{kap-lum:simplicial-model}, interpreting the offending operation in the universal context that captures its inputs.
For this, one might try to use the representing object $\UU_j$ for $i$-small types.
However, the size change map $\inn{\Ty}_j \to \inn{\Ty}_{j+1}$ would then not preserve the operation.%
\footnote{
This can be rectified by reinterpreting the inner types as $\inn{\overline{\Ty}}_j = \coprod_{j \leq i} \inn{\Ty}_j$.
Type forming operations in $\inn{\overline{\Ty}}_j$ are interpreted by lifting all input types to their maximum size index.
Fibrancy of universes follows from closure of Kan complexes under finite coproducts.
In this way, one can avoid the additional Grothendieck universe $M_\omega$ introduced below.
It also enforces~\cref{axiom:step-maps-mono}.
The same technique can be applied to the outer level.
}
Instead, as in~\cite{kap-lum:simplicial-model}, we introduce yet another Grothendieck universe $M_\omega$ containing $M_0, M_1, \ldots$, define a presheaf $\UU_\omega$ as above, and use it to build the universal context.

Since generating trivial cofibrations in the form of horn inclusions have representable codomain, the presheaves of inner types are representable, yielding the inner universes.
They are fibrant and univalent as in~\cite{kap-lum:simplicial-model}.

Following the setup of~\cite{orton-pitts:cubical}, a more internal development of the simplicial set model in line of the above choices is described in~\cite[Appendix~D]{chs:cubical-homotopy-canonicity}.
It also describes (following a suggestion by Andrew Swan) how the higher inductive types constructed in the cubical setting in~\cite{chm:cubical-hits} interpret in the simplicial model~\cref{axiom:further-type-formers}.

\subsubsection{Cubical sets}

A similar class of models of two-level type theory is given by cubical sets for various choices of a cubical site and notion of fibrations~\cite{coquand:cubes,cchm:cubical}.
In contrast to~\cite{kap-lum:simplicial-model}, these models of homotopy type theory have been developed from the start with Hofmann-Streicher universes in mind and all type formers split by construction.
Thus, they immediately fit our setup and we can simply declare them to form the inner level.

The development of cubical models of~\cite{orton-pitts:cubical,lops:internal-universes} can be interpreted as \emph{defining} the inner level internally to the outer level.
Note that this requires extending the outer level to crisp type theory~\cite{lops:internal-universes, shu:brouwer}.

A major difference to the simplicial model is that cubical Kan lifts are part of the structure of inner types.
That is, an inner type is not just an outer type satisfying a lifting property, but has an additional datum in its Kan composition operation.
This invalidates~\cref{axiom:conversion-mono}.
Other strengthenings~\cref{axiom:conversion-pres-unit-sigma-pi,axiom:inner-replete} and \cref{axiom:conversion-pres-positive,axiom:reedy-limits,axiom:nat-cofibrant,axiom:outer-universe-fibrant,axiom:equality-reflection,axiom:further-type-formers} hold in the same manner as discussed above for simplicial sets.

The reason that simplicial and cubical sets model validate~\cref{axiom:conversion-pres-positive} is, in both cases, that fibrant objects (in slices) are closed under small coproducts.

\subsubsection{Presheaves over models of homotopy type theory}
\label{subsubsec:presheaf-model}

The material in the first half of this subsubsection follows~\cite[Chapter~3.2]{paolo:thesis}.
The resulting models have guided the design choices of our two-level type theory.

We will first establish some preliminaries. A \emph{weak morphism} $F \co \C \to \D$ of cwfs is a functor between underlying categories with natural transformations on types and terms that preserves the global context and extension only up to canonical isomorphism.
Given interpretations of type formers $T$ in $\C$ and $\D$, it still makes sense to ask that $F$ preserves the operations of $T$, transporting along these preservation  isomorphisms when required.
Indeed, by expressing the type formers $T$ in a cwf $\E$ as operations internal to its presheaf category $\widehat{\E}$, one may completely avoid the dependency of $T$ on extension as an algebraic operation~\cite{paolo:thesis,uemura:framework}.
Thus we obtain a notion of weak morphism of cwfs with type formers $T$.

There is an evident notion of 2-morphism between weak cwf morphisms $F, G \co \C \to \D$, a natural transformation $u \co F \to G$ such that $F A = (G A)[u_\Gamma]$ for $A \in \Ty_\C(\Gamma)$ and $F t = (G t)[u_\Gamma]$ for additionally $t \in \Tm_\C(\Gamma, A)$.
Note that this implies commutativity of
\[
\xymatrix{
  F (\Gamma. A)
  \ar[r]^-{\simeq}
  \ar[d]^{u_{\Gamma.A}}
&
  F \Gamma. F A
  \ar[d]^-{u_\Gamma. G A}
\\
  G (\Gamma. A)
  \ar[r]^-{\simeq}
&
  G \Gamma. G A
}
\]
for $\Gamma \in \C$ and $A \in \Ty_\C$.
This extends to a notion of 2-morphism for cwfs with type formers $T$ by requiring that substitution along the components of $u$ preserves the operations of $T$.
With this, we obtain a (strict) 2-category of cwfs (with type formers $T$) and weak morphisms.
It has the 1-category of cwfs (with type formers $T$) as a wide sub-2-category.

Importantly, the initial object $\C$ of the 1-category of cwfs (with type formers $T$) becomes biinitial in the 2-category of cwfs (with type formers $T$) and weak morphisms.
That is, given an object $\D$ with a weak morphism $H \co \C \to \D$, there is an isomorphism $H \cong F$ between weak morphisms where $F \co \C \to \D$ is the unique morphism.
This may be derived from making the strict pseudolimit of $H$ (seen as a diagram indexed by the walking arrow) into a cwf $\E$ (with type formers $T$):
\begin{itemize}
\item
objects are triples $(X, X', f)$ where $X \in \C$, $X' \in \D$, and $f \co H(X) \simeq X'$,
\item
types over such an object are pairs $(A, A')$ with $A \in \Ty_\C(X)$ and $A' \in \Ty_\D(X')$ such that $H(A)$ and $A'$ correspond over $f$,
\item
terms of such a type are pairs $(t, t')$ with $t \in \Tm_\C(X, A)$ and $t' \in \Tm_\D(X', A')$ such that $H(t)$ and $t'$ correspond over $f$.
\end{itemize}
Note that $A'$ and $t'$ in the above description are redundant.
We have projection morphisms $p_\C \co \E \to \C$ and $p_\D \co \E \to \D$.
By initiality of $\C$, we have $G \co \C \to \E$ such that $p_\C \circ G = \id{\C}$ and $p_\D \circ G = F$.
The isomorphism $H \cong F$ is read off from it.

Everything we have said above extends analogously to cwf hierarchies (with type formers $T$), models of Martin-L\"{o}f type theory (with type formers $T$), models of homotopy type theory, and models of two-level type theory.

\medskip

Recall from~\cite{hofmann:syntax-semantics} that for any category $\C$, the category $\widehat{\C}$ of presheaves over $\C$ forms a model of extensional type theory.
Here, the types and terms are defined as follows.
Given a presheaf $P$, which we think of as a context, we define $\Ty(P)$ to consists of (small) presheaves over the category $\C / P$ of elements of $P$ (specifically, for $\Ty_j(P)$, we require these presheaves to be valued in the Grothendieck universe $M_j$).
Given such a type $A \in \Ty^i(P)$ over $P$, the set $\Tm(A)$ of terms is the set of global sections of the corresponding presheaf over $\C / P$.

In the special case where $\C$ is itself a cwf, we can define an additional, \emph{inner} cwf structure $\inn{\Ty}$ on presheaves $\widehat{\C}$ with a morphism $\cv \co \inn{\Ty} \to \Ty$ to the presheaf cwf structure $\Ty$.
This works as follows.
We can single out a special context (in other words, a type in the empty context), namely $\Ty_\C$ itself.
This context can play the role of a universe in the presheaf cwf $\widehat\C$.
In fact, we have a type $\Tm_\C \in \Ty(\Ty_\C)$, acting as the universal family of this universe.
The cwf structure induced by this universe forms the inner cwf structure $\inn{\Ty}$ of $\widehat{\C}$.
In detail, we define $\inn{\Ty} = y(\Ty_\C)$, \ie $\inn{\Ty}(P) = \widehat{\C}(P, \Ty_\C)$, and let $\cv$ send $A \in \inn{\Ty}(P)$ to the restriction of $\Tm_\C$ along the functor $\C/P \to \C/\Ty_\C$ induced by $A$, \ie to $c(A) \in \widehat{\C/P}$ sending $(\Gamma, x)$ to $\Tm_\C(\Gamma, A(x))$.
We are then forced to define $\inn{\Tm}(P, A)$ as the set of global sections of $\cv(A)$.

The Yoneda embedding $y_\C \co \C \to \widehat{\C}$ becomes a weak morphism of cwfs
\begin{equation} \label{yoneda-weak morphism}
y_\C \co \C \to (\widehat{\C}, \inn{\Ty}),
\end{equation}
whose actions on types and terms are bijective by construction of $\inn{\Ty}$.

Let $\C$ now support a selection of type formers $T$.
We can lift the rules in $T$ to corresponding operations and laws on the ``universe'' $\Ty_\C$ in $(\widehat{\C}, \Ty)$, and thereby to interpretations of the type formers $T$ in the inner cwf $(\widehat{\C}, \inn{\Ty})$.
Furthermore, the weak morphism~\eqref{yoneda-weak morphism} preserves these, \ie $y_\C$ becomes a weak morphism of cwfs with type formers $T$.
This process and its properties are explained in \cite{hofmann:syntax-semantics} for a specific set of type formers, and in \cite{paolo:thesis} for a generic notion of type former.

We illustrate the above process for the formation operation for dependent products.
We desire the following judgment in the presheaf cwf:
\begin{equation} \label{pi-formation-universe}
A : \Ty_\C, B : \Tm_\C(A) \to \Ty_\C \vdash \Pi(A, B) : \Ty_\C.
\end{equation}
Naturally in $\Gamma \in \C$, we are given:
\begin{enumerate}
\item \label{pi-formation-A}
$A \in \Ty_\C(\Gamma)$,
\item \label{pi-formation-B}
naturally in $(\Delta, \sigma \co \Delta \to \Gamma) \in \C / \Gamma$, a map $B_\sigma \co \Tm_\C(\Delta, A[\sigma]) \to \Ty_\C(\Delta)$,
\end{enumerate}
and have to produce an element $\Pi(A, B) \in \Ty_\C(\Gamma)$.
By the universal property of context extension in $\C$, data in~\eqref{pi-formation-B} is uniquely induced by just the element $B_{p_A} \in \Ty(\Gamma.A)$.
Thus, our obligation precisely corresponds to the formation rule for $\Pi$ in $\C$.
Furthermore, after lifting to the cwf $(\widehat{\C}, \inn{\Ty})$, we can check that the weak morphism~\eqref{yoneda-weak morphism} preserves the formation rule.

The above example makes it reasonable to expect that every type former can be lifted, rule by rule, from $\C$, producing judgements that replicate each rule internally in the theory of $\widehat\C$ when expressed using the ``universe'' $\Ty_C$, and that hence one can interpret each rule in the inner presheaf cwf.

The above construction is functorial in the cwf structure $\Ty_\C$ (with type formers $T$) on $\C$ and 2-functorial in $\C$ as an object of the 2-category of cwfs (with type formers $T$) and weak morphisms.
From this, we obtain the following.

\begin{proposition} \label{mltt-presheaf-model}
Let $\C$ be a model of Martin-L\"{o}f type theory with type formers $T$.
Assume that $(\Tm_\C)_j$ is valued in the Grothendieck universe $M_j$ for every $i$.
Then $\widehat{\C}$ forms a two-level model of Martin-L\"{o}f type theory with inner type formers $T$ and outer types formers from extensional type theory.
The Yoneda embedding extends to a weak morphism $y \co \C \to (\widehat{\C}, \inn{\Ty})$ of models of Martin-L\"{o}f type theory that acts bijectively on types and terms.

Furthermore, this operation is 2-functorial in $\C$ as an object of the 2-category of models of Martin-L\"{o}f type theory with type formers $T$ and weak morphisms.
The action on a weak morphism $F \co \C \to \D$ is as follows.
The left Kan extension $F_! \co \widehat{\C} \to \widehat{\D}$ extends to a weak morphism of two-level models as above.
The natural isomorphism $F_! \circ y_\C \simeq y_\D \circ F$ of functors lifts to the 2-category of models of Martin-L\"{o}f type theory with type formers $T$ and weak morphisms.
\end{proposition}

\begin{proof}
Applying the above discussion to the sequence of cwf structures $(\Ty_\C)_j$ with type formers $T$ on $\C$, we obtain a corresponding sequence of cwf structures
\[
\xymatrix{
  \inn{\Ty}_0
  \ar[r]
&
  \inn{\Ty}_1
  \ar[r]
&
  \ldots
}
\]
with type formers $T$ on $\widehat{\C}$.
Yoneda preserves terminal objects, so sends the global section $\UU_j$ of $(\Ty_\C)_{j+1}$ to a global section $\inn{\UU}_j$ of $\inn{\Ty}_{j+1}$.
Naturally in $\Gamma \in \C$, we have
\begin{align*}
\inn{\Tm}(y(\Gamma), \inn{\UU}_j)
=
\inn{\Tm}(y(\Gamma), y(\UU_j))
\simeq
\Tm_\C(\Gamma, \UU_j)
\simeq
(\Ty_\C)_j(\Gamma)
\simeq
\inn{\Ty}_j(y(\Gamma))
,\end{align*}
using that the action of~\cref{yoneda-weak morphism} on types and terms is bijective.
By cocontinuous extension, we thus have $\inn{\Tm}(X, \inn{\UU}_j) \simeq \inn{\Ty}_j(X)$ naturally in $X \in \widehat{\C}$.
By the smallness assumption, the map $\inn{\Ty}_j \to \Ty$ restricts to $\inn{\Ty}_j \to \Ty_j$.
\end{proof}

Seeing univalence as just another type former, the universes $\inn{\UU}_j$ in the above model are univalent if the original universes $\UU_j$ in $\C$ are.

\begin{corollary-qed} \label{2ltt-presheaf-model}
Let $\C$ be a model of homotopy type theory.
Assume that $(\Tm_\C)_j$ is valued in the Grothendieck universe $M_j$ for every $i$.
Then $\widehat{\C}$ forms a model of two-level type theory.
The Yoneda embedding extends to a weak morphism $y \co \C \to (\widehat{\C}, \inn{\Ty})$ of models of homotopy type theory that acts bijectively on types and terms.
Furthermore, this operation is 2-functorial in $\C$ as in \cref{mltt-presheaf-model}.
\end{corollary-qed}

We call this the \emph{presheaf model} $\widehat{\C}$ of two-level type theory over the given model $\C$ of homotopy type theory.
It will be key for proving conservativity of two-level type theory over homotopy type theory in \cref{subsec:conservativity}.

Let us discuss strictness properties of conversion satisfies by the presheaf model.
Depending on the specifics of the implementation of the outer types, $\cv$ may or may not have a chance to be mono.
With our choice of presheaves over categories of elements, \cref{axiom:conversion-mono} holds as long as the action of the presheaf $\Tm_\C$ on objects is injective.
Note that this can always be achieved by passing through the Grothendieck construction.
None of the other strictness properties~\cref{axiom:conversion-pres-unit-sigma-pi,axiom:inner-replete} are satisfied.

\begin{remark}
Concerning~\cref{axiom:conversion-pres-unit-sigma-pi}, some effort is expended in~\cite[Chapter~3.2]{paolo:thesis} to achieve strict preservation of $\unit/\Sigma/\Pi$-types under what we here call conversion from the inner to outer level.
This is achieved by defining the inner types more cleverly as certain free expressions involving the types of $\C$ and formal $\unit/\Sigma/\Pi$-type forming operations.
Unfortunately, this only works for ``top-level'' $\unit/\Sigma/\Pi$-types and breaks whenever the kind of type in question has a classifier that is itself a type (of higher size).
Thus, this technique is not applicable here.
\end{remark}

The presheaf model does not preserve (up to isomorphism) ``positive'' type formers as in~\cref{axiom:conversion-pres-positive}.
Note that the interpretation of empty types, coproducts, and natural numbers in the outer level is levelwise.
Were~\cref{axiom:conversion-pres-positive} to hold, then in an arbitrary context $\Gamma$ in $\C$, there would be no terms of empty type, every term of coproduct type would be a constructor application, and every natural number term would be a canonical numeral.
Even the weaker versions~\cref{axiom:reedy-limits,axiom:nat-cofibrant} do not hold in general.
Note that~\cref{axiom:reedy-limits} holds if $\C$ supports dependent sums of ``arity'' $\omega$ (also known as record types with countably infinitely many fields).
Axiom~\cref{axiom:outer-universe-fibrant} is also generally not satisfied.

As for the other example models, the outer level models extensional type theory, \ie~\cref{axiom:equality-reflection} holds.
Extension~\cref{axiom:further-type-formers} holds as far as permitted by the given model $\C$ of homotopy type theory.

\medskip

We end this subsection by giving a modified version of the presheaf model where~\cref{axiom:conversion-mono,axiom:conversion-pres-unit-sigma-pi} hold.
This is achieved by modifying the interpretation of the outer level.
The technique is inspired by Shulman's modification~\cite[Appendix~A]{shulman:strict-universes} of the local universe splitting technique in the presence of universes (recall though that we do not make use of the local universe splitting technique).

\begin{proposition} \label{2ltt-presheaf-model-modified}
Denote by $(\inn{\Ty}, \Ty, \cv)$ the presheaf model of two-level type theory on $\widehat{\C}$ as established by \cref{2ltt-presheaf-model}.
There is a factorisation
\[
\xymatrix@C+0.3cm@R-0.6cm{
&
  \Ty'
  \ar[dd]^{r}
\\
  \inn{\Ty}
  \ar@{.>}[ur]^{\cv'}
  \ar[dr]_{\cv}
\\&
  \Ty
}
\]
in the category of cwf hierarchies such that $(\inn{\Ty}, \Ty', \cv')$ forms a model of two-level type theory satisfying~\cref{axiom:conversion-mono,axiom:conversion-pres-unit-sigma-pi}.

Furthermore, this operation is 2-functorial in $\C$ in the same sense as \cref{2ltt-presheaf-model}.
\end{proposition}

\begin{proof}
The following is to be understood as happening for every size index $i$, which we omit.
By a small set, we mean an element of the Grothendieck universe $M_j$.

Let $\UU \in \widehat{\C}$ denote the representing object of $\Ty$, with universal element $\El \in \Ty(\UU)$.
Define $\Ty'$ as the presheaf represented by $\Ty_\C + \UU$.
We have a map $[\cv, \id{\UU}] \co \Ty_\C + \UU \to \UU$.
Applying Yoneda, this induces the map $r \co \Ty' \to \Ty$.
The cwf structure of $\Ty'$ is inherited from $\Ty$ via $r$.
Applying Yoneda to the coprojections $\inl \co \Ty_\C \to \Ty_\C + \UU$ and $\inr \co \UU \to \Ty_\C + \UU$, we obtain respective cwf structure morphisms $\cv' \co \inn{\Ty} \to \Ty'$ and $s \co \Ty \to \Ty'$.
These fit into a commuting diagram as follows:
\[
\xymatrix{
  \inn{\Ty}
  \ar[rr]^{\cv'}
  \ar[dr]_{\cv}
&&
  \Ty'
  \ar[dl]_{r}
\\&
  \Ty
  \ar[rr]_{\id{}}
&&
  \Ty
  \ar[ul]_{s}
\rlap{.}}
\]

Unfolding the definition, we find that, given $X \in \widehat{\C}$, an element of $\Ty'(X)$ consists of a partition $X = X_0 \sqcup X_1$ of $X$ into subpresheaves $X_0$ and $X_1$ together with $A_0 \in \inn{\Ty}(X_0)$, \ie $A_0 \co X_0 \to \Ty_\C$, and $A_1 \in \Ty(X_1)$, \ie a small presheaf $A_1$ over $\C/X_1$.

The interpretation of type formers in $\Ty'$ other than $\unit/\Sigma/\Pi$-types is as for $\Ty$ and is defined such that it is preserved by $r$.
For the type forming operations, we first transport the given types in $\Ty'$ to $\Ty$ via $r$, use the corresponding type forming operation there, and apply $s$ to the result.
Since $r \circ s = \id{}$, this makes $r$ preserve the type forming operation, meaning the remainder of the operations dealing with terms can be copied from $\Ty$ to $\Ty'$.

It remains to interpret $\unit/\Sigma/\Pi$-types in $\Ty'$.
Since the terms of these type formers are characterised by universal properties, it will suffice to define their type forming operations such that $r$ preserves $\unit/\Sigma/\Pi$-types up to isomorphism.
The remainder of their operations dealing with terms is then uniquely induced.
In order to ensure that $\cv'$ preserves $\unit/\Sigma/\Pi$-types, we only have to check that $\cv'$ preserves the type forming operations and that the isomorphisms used in the penultimate sentence are the ones of \cref{lem:cv-preserves} whenever the input types come from $\inn{\Ty}$ via $\cv'$.

The case of $\unit$-types is trivial: given $X \in \widehat{\C}$, we imply take $\unit' = \inl(\inn{\unit}) \in \Ty'(X)$ using $\inn{\unit} \in \inn{\Ty}$.
Since it has no inputs, there is nothing to show.
The type forming operations for $\Sigma$-types and $\Pi$-types are of the same form.
To save space, we only show the case of $\Pi$-types.

Recall how the $\Pi$-type forming operation of $\inn{\Ty}$ was inherited from the one of $\Ty_\C$ via the operation~\cref{pi-formation-universe}.
The cwf structure of $\Ty'$ is induced by $\El [\cv, \id{\UU}] \in \Ty(\Ty_\C + \UU)$ rather than $\Tm_\C \in \Ty(\Ty_\C)$ as for~\cref{pi-formation-universe}.
Here, we have to interpret the analogous operation
\[
A : \Ty_\C + \UU, B : \El([\cv, \id{\UU}](A)) \to (\Ty_\C + \UU) \vdash \Pi'(A, B) : \Ty_\C + \UU
\]
together with
\[
\El([\cv, \id{\UU}](\Pi'(A, B))) \simeq \El(\Pi([\cv, \id{\UU}](A), [\cv, \id{\UU}] \circ B))
\]
in the same context.

To define the action at level $\Gamma \in \C$, we take
\begin{align}
\notag
A &\in \Ty_\C(\Gamma) + \UU(\Gamma)
,\\
\label{2ltt-presheaf-model-modified:B}
B &\in \widehat{\C/\Gamma}(\El([\cv, \id{\UU(\Gamma)}](A)), \Ty_\C + \UU)
\end{align}
(omitting restriction in the target of $B$) and must define $\Pi'(A, B) \in \Ty_\C(\Gamma) + \UU(\Gamma)$ with an isomorphism
\begin{equation} \label{2ltt-presheaf-model-modified:iso}
\El([\cv, \id{\UU}](\Pi'(A, B))) \simeq \El(\Pi([\cv, \id{\UU}](A), [\cv, \id{\UU}] \circ B))
\end{equation}
of presheaves over $\C/\Gamma$.
We perform a case distinction on $A$.
\begin{enumerate}[label=(\roman*)]
\item \label{2ltt-presheaf-model-modified:case-1*}
Suppose $A = \inr(A_1)$ with $A_1 \in \UU(\Gamma)$.
Then we take
\[
\Pi'(A, B)(\Gamma) = \inr(\Pi(A_1, [\cv, \id{\UU}] \circ B))
,\]
using the type forming operation of $\Ty$, and let~\cref{2ltt-presheaf-model-modified:iso} be the identity.
\item \label{2ltt-presheaf-model-modified:case-0*}
Suppose $A = \inl(A_0)$ with $A_0 \in \Ty_\C(\Gamma)$.
Then $\El([\cv, \id{\UU(\Gamma)}](A) = \El(\cv(A))$ is the presheaf over $\C/\Gamma$ sending $\sigma \co \Delta \to \Gamma$ to $\Tm(\Delta, A[\sigma])$.
Since $\C$ has extension, this presheaf is representable (represented by $\Gamma.A$).
In particular, mapping out of it as in~\eqref{2ltt-presheaf-model-modified:B} preserves coproducts.
We can thus make a further case distinction on $B$.
\begin{enumerate}[label=(\alph*),ref=(\roman{enumi}.\alph*)]
\item \label{2ltt-presheaf-model-modified:case-01}
If $B = \inr \circ B_1$ with
$B_1 \in \widehat{\C/\Gamma}(\El(\cv(A)), \UU)$, we take
\[
\Pi'(A, B)(\Gamma) = \inr(\Pi(\cv(A_0), B_1))
,\]
again using the type forming operation of $\Ty$, and let~\cref{2ltt-presheaf-model-modified:iso} be the identity.
\item \label{2ltt-presheaf-model-modified:case-00}
If $B = \inl \circ B_0$ with $B_0 \in \widehat{\C/\Gamma}(\El(\cv(A)), \Ty_\C)$, we take
\[
\Pi'(A, B)(\Gamma) = \inl(\inn{\Pi}(A_0, B_0))
,\]
using the type forming operation of $\inn{\Ty}$, and let~\cref{2ltt-presheaf-model-modified:iso} be the isomorphism given by~\cref{lem:cv-preserves}.
\end{enumerate}
\end{enumerate}
One checks that this definition is natural in $\Gamma$.
Recalling that $\cv'$ is given by the action of Yoneda on $\inl \co \Ty_\C \to \Ty_\C + \UU$, we find that $\cv'$ preserves $\Pi$-type formation by construction (case~\cref{2ltt-presheaf-model-modified:case-00}) with the required coherence isomorphism.

This finishes the verification that $\Ty'$ forms a cwf hierarchy with the type formers of the outer level of a model of two-level type theory.
Note that uniqueness of identity proofs and function extensionality are inherited from $\Ty$ (this is immediate for the former; for the latter, use that the outer identity type respects the isomorphism relating $\Pi'$ and $\Pi$).

To obtain the universes in $\Ty'$, we must encode the representing object $(\Ty_\C)_j + (\UU)_j$ as $\El(r(\UU_j'))$ (under the isomorphism $\widehat{\C} \simeq \widehat{\C/1}$) for some $\UU_j' \in \Ty'(1)$.
We have $V_j \in \Ty_{j+1}(1)$ such that $(\Ty_\C)_j + (\UU)_j$ is $\El(V_j)$ (under the isomorphism $\widehat{\C} \simeq \widehat{\C/1}$).
So we simply take $\UU_j' = s(V_j)$.

2-Functoriality in $\C$ is a straightforward calculation.
\end{proof}

Note that the cwf hierarchy morphism $s$ in the above proof preserves almost all type formers: the only one not preserved is the unit type.
Note also that the technique of \cref{2ltt-presheaf-model-modified} is constructive only for finitary type formers: were we to add product types of infinite arity or dependent sums of ``arity'' $\omega$ ($\op\omega$-Reedy limits) to homotopy type theory, then to make $\cv'$ preserve these using the above approach, we would have to perform an infinite number of case distinctions before deciding on the result of the corresponding type forming operation in $\Ty'$ on given inputs, which requires classical logic.

In \cref{subsec:conservativity}, we will use this modified presheaf model to strengthen conservativity of two-level type theory over homotopy type theory to additionally include conservativity of~\cref{axiom:conversion-mono,axiom:conversion-pres-unit-sigma-pi}.

None of the other properties~\cref{axiom:inner-replete,axiom:conversion-pres-positive,axiom:outer-universe-fibrant,axiom:equality-reflection,axiom:further-type-formers} are generally impacted by the model construction of \cref{2ltt-presheaf-model-modified}.

\subsection{Conservativity}
\label{subsec:conservativity}

Two-level type theory is an extension of homotopy type theory, which forms its inner level.
As such, it makes sense to ask if two-level type theory is \emph{conservative} over homotopy type theory.

Here, we take the perspective regarding homotopy type theory and two-level type theory simply as the initial models in their respective categories of models, which are the primary notion.
Syntax is treated as notation, that is, merely as a device for working within such models.
Expressions of our syntax denoting types and terms are just stand-ins denoting certain derivations.
We do not analyse them as raw syntactic objects independently from the associated derivation, although such considerations are of course important for the implementation of proof assistants.

\newcommand{\HoTT}{{\mathsf{HoTT}}}
\newcommand{\TLTT}{{\mathsf{2LTT}}}

What does conservativity mean under this perspective?
We have a forgetful functor $\inn{(-)}$ from the category of models of two-level type theory to the category of models of homotopy type theory.
Letting $0_\HoTT$ and $0_\TLTT$ denote their respective initial objects, we have a unique morphism $0_\HoTT \to \inn{0_\TLTT}$.
This expresses that any derivation in homotopy type theory can also be performed in two-level type theory.
For conservativity, we wish to know reversely that any construction of a type or term, or equality of such, in $\inn{0_\TLTT}$, with given context (and type, in the case of constructions for terms) coming from $0_\HoTT$, can be lifted to $0_\HoTT$.

We will employ the following definition of conservativity for a cwf morphism $F \co \C \to D$, which is, in some sense, the weakest possible.
It essentially states that $F$ reflects inhabitation of terms.
This formalises the idea that we can use the language of $\D$ to prove statements in $\C$.

\begin{definition}\label{def:conservative}
A cwf morphism $F \co \C \to \D$ is called \emph{conservative} if for all contexts $\Gamma \in \C$ and types $A \in \Ty_\C(\Gamma)$ with an element of $\Tm_\D(F \Gamma, F A)$, we have an element of $\Tm_\C(\Gamma, A)$.
\end{definition}

Note that is definition is unrelated to the underlying functor $F$ being conservative, \ie reflecting isomorphisms.
Stronger definitions are of course possible, for example requiring that $F$ acts (split) surjectively or bijectively on terms and types, perhaps up to internal notions of equality in $\D$.

\begin{proposition} \label{conservativity}
Two-level type theory is conservative over homotopy type theory.
That is, the morphism $! \co 0_\HoTT \to \inn{(0_\TLTT)}$ is conservative.
This stays true when the outer level is extended with any type former validated by the standard presheaf model, such as equality reflection~\cref{axiom:equality-reflection}.
\end{proposition}

\begin{proof}
We apply the construction of \cref{2ltt-presheaf-model} to $0_\HoTT$ and $\inn{0_\TLTT}$, obtaining a diagram
\begin{equation*} 
\begin{gathered}
\xymatrix@C+0.5cm{
  0_\HoTT
  \ar[r]^-{y_{(0_\HoTT)}}
  \ar[d]_{!}
&
  \inn{\widehat{0_\HoTT}}
\\
  \inn{(0_\TLTT)}
  \ar[ru]_-{\inn{!}}
}
\end{gathered}
\end{equation*}
commuting up to isomorphism in the 2-category of models of homotopy type theory and weak morphisms.
Conservativity of the vertical map now follows immediately from the fact that the Yoneda embedding acts bijectively on terms.
\end{proof}

Using \cref{2ltt-presheaf-model-modified}, we may strengthen the above statement to two-level type theory with injective conversion morphisms that strictly preserve $1/\Sigma/\Pi$-types.

\begin{proposition} \label{conservativity-modified}
Two-level type theory with~\cref{axiom:conversion-mono,axiom:conversion-pres-unit-sigma-pi} is conservative over homotopy type theory.
This stays true when the outer level is extended with any type former validated by the outer level of the modified presheaf model, such as equality reflection~\cref{axiom:equality-reflection}.
\end{proposition}

\begin{proof}
This is a copy of the proof of \cref{conservativity}, with the presheaf model of \cref{2ltt-presheaf-model} replaced by the modified presheaf model of \cref{2ltt-presheaf-model-modified}.
\end{proof}

We conjecture a stronger conservativity result: if equality reflection~\cref{axiom:equality-reflection} holds in the outer level, then the actions of the cwf morphism $! \co 0_\HoTT \to \inn{(0_\TLTT)}$ on types and terms are bijective (and hence the underlying functor is fully faithful).
We believe this result can be obtained using the technique of categorical glueing.
A concrete argument has been given by Kov{\'a}cs~\cite[Corollary~5.5]{10.1145/3547641}, seen there as ``soundness and stability of staging''.

\subsection{On the possibility of a fibrant replacement}
\label{subsec:fibrant-replacement}

In homotopical models of two-level type theory, outer types in context $\Gamma$ correspond to arbitrary maps into $\Gamma$, whereas inner types correspond to fibrations with base $\Gamma$.
From this viewpoint, it is natural to ask whether we could extend our theory with a \emph{fibrant replacement} operation, allowing us to replace any outer type by its ``closest'' inner approximation.
A syntactic presentation of rules for such a fibrant replacement type former might look as follows:
\begin{mathpar}
\inferrule{\Gamma \vdash A \; \typ_j}
{\Gamma \vdash RA \; \inn\typ_j}
\quad\deflabel{\textsc{form-R}}
\label{rule:repl-form}

\and

\inferrule{\Gamma \vdash a : A}{\Gamma \vdash r(a) : RA}
\quad\deflabel{\textsc{intro-R}}
\label{rule:repl-intro}

\and

\inferrule{\Gamma.RA \vdash P \; \inn\typ_j \\
  \Gamma.(a : A) \vdash d : P[r(a)]}
{\Gamma.RA \vdash \mathsf{elim}_R^P(d) : \cv(P)}
\quad\deflabel{\textsc{elim-R}}
\label{rule:repl-elim}

\and

\inferrule{\Gamma.RA \vdash P \; \inn\typ_j \\
  \Gamma.(a : A) \vdash d : \cv(P[r(a)])}
{\Gamma.(a : A) \vdash \mathsf{elim}_R^P(r(a)) \equiv d}
\quad\deflabel{\textsc{comp-R}}
\label{rule:repl-comp}
\end{mathpar}
Phrased internally, given an outer type $A$, we get an inner type $RA$ together with a function $r : A \to c(RA)$ with the universal property that, for any inner type $X$, to define a function $RA \to X$ is to give a function $A \to c(X)$.
Note the similarity of the above rules to those of the propositional truncation modality; the only difference is, of course, that $R$ makes types \emph{fibrant} rather than propositional.

A type former along these lines is considered in \cite{boulier:hts}, where the authors construct a model structure on a universe of outer types using fibrant replacement.

Unfortunately, the fibrant replacement operation cannot actually be internalised in the above form while still retaining interesting homotopical models.
This is shown by the following theorem.

\begin{theorem} \label{thm:fibrant-replacement-inconsistent}
Assume a fibrant replacement type former $R$ as defined by the rules \nameref{rule:repl-form} to \nameref{rule:repl-comp}.
Then the inner level satisfies uniqueness of identity proofs.
\end{theorem}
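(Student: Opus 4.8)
The plan is to exploit the only available bridge between the two equality types, namely the function $i : (a_1 \steq a_2) \to (a_1 = a_2)$ from Remark~\ref{rem:first-observations}(i) (which sends $\strict{\refl{a}}$ to $\refl{a}$), together with the universal property of $R$. Fix a fibrant type $A$ and elements $a_1, a_2 : A$. Since a fibrant type all of whose identity types are propositions is a set, it suffices to show that the fibrant type $a_1 = a_2$ is a proposition.

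The first step is a preliminary lemma: \emph{$R$ sends strict propositions to fibrant propositions}, where a \emph{strict proposition} is a pretype $P$ with $\prd{x,y:P} x \steq y$. To prove this I would construct an element of $\prd{u,v:RP}(u=v)$ by a double application of the eliminator \nameref{rule:repl-elim}. The family $\prd{v:RP}(u=v)$ is fibrant by \nameref{rule:intro-eq} and \nameref{rule:pi-fib}, so eliminating on $u$ reduces the goal to producing, for $a:P$, an element of $\prd{v:RP}(r(a)=v)$; this family is again fibrant, so eliminating on $v$ reduces the goal to $(r(a)=r(b))$ for $a,b:P$. Now $a \steq b$ by hypothesis, hence $r(a)\steq r(b)$ (strict equality is preserved by any function, here $r$, by an application of $\strict J$), and finally $i$ turns this into a witness of $r(a)=r(b)$.

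Next I would apply the lemma to $P := (a_1 \steq a_2)$, which is a strict proposition precisely because of \nameref{rule:uip}: any $p,q : a_1 \steq a_2$ satisfy $\strict K(p,q) : p \steq q$. Hence $R(a_1 \steq a_2)$ is a proposition, and it remains to exhibit $a_1 = a_2$ as a retract of it. The retraction $\bar i : R(a_1 \steq a_2) \to (a_1 = a_2)$ comes from $i$ by \nameref{rule:repl-elim} applied to the constant fibrant family $a_1 = a_2$, so that $\bar i(r(e)) \equiv i(e)$ by \nameref{rule:repl-comp}. The section $j : (a_1 = a_2) \to R(a_1 \steq a_2)$ is defined by path induction: the family $R(a_1 \steq a_2)$ over $\smsimple{a_1,a_2:A}(a_1=a_2)$ is fibrant by \nameref{rule:repl-form}, so it suffices to send $\refl{a_1}$ to $r(\strict{\refl{a_1}})$. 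Then $\bar i(j(\refl{a_1})) \equiv i(\strict{\refl{a_1}}) \equiv \refl{a_1}$, and path induction extends this to a homotopy $\bar i \circ j \sim \mathsf{id}$. Thus $a_1 = a_2$ is a retract of a proposition, hence a proposition; as $a_1,a_2$ were arbitrary, $A$ is a set.

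The main obstacle is the preliminary lemma, specifically the double elimination: one must check that each intermediate family $\prd{v:RP}(u=v)$ and then $r(a)=v$ is genuinely fibrant so that \nameref{rule:repl-elim} applies, and that the strict-proposition hypothesis can be transported along $r$. Everything else is routine packaging using two standard facts of the fibrant fragment, that propositions are closed under retracts and that a type is a set if and only if its identity types are propositions.
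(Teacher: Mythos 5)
Your proposal is correct, but it takes a genuinely different route from the paper. The paper's proof is a short direct computation: for $x, y : A$ and $p : x = y$ it constructs, by fibrant path induction (legitimate because the goal $R\bigl(\prd{h : x \steq y} i(h) = p\bigr)$ is fibrant by \textsc{form-R}), an inhabitant of that $R$-type, using \textsc{uip} and the judgmental computation of $i$ at the base case; instantiating $h \defeq \strict{\refl x}$ and eliminating $R$ then yields $\refl x = p$ for every loop $p : x = x$, so $A$ is a set in one stroke. You instead factor the argument through a reusable lemma --- $R$ carries strict propositions to fibrant propositions --- proved by the double application of \textsc{elim-R}, which is indeed valid since both intermediate families $\prd{v : RP}(u = v)$ and $r(a) = v$ are fibrant by \textsc{intro-=} and \textsc{pi-fib}; you then exhibit $a_1 = a_2$ as a retract of the proposition $R(a_1 \steq a_2)$. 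This is in effect the standard HoTT argument that a reflexive, propositionally-valued relation implying identity forces a type to be a set (the retract form of Theorem~7.2.2 of the HoTT book), with $R(x \steq y)$ playing the role of the reflexive mere relation: $r(\strict{\refl{}})$ supplies reflexivity and your $\bar\imath$ the implication into identity. Both arguments hinge on exactly the same ingredients --- the conversion $i$, \textsc{uip} in the strict fragment, and fibrancy of $RA$ licensing path induction --- but the paper's version is shorter and avoids both the double elimination and the retract lemma, while yours isolates a conceptually transparent statement about $R$ that could be reused elsewhere. One small remark: your judgmental chain $\bar\imath(j(\refl{a_1})) \equiv i(\strict{\refl{a_1}}) \equiv \refl{a_1}$ relies on the judgmental computation rules for fibrant $J$, \textsc{comp-R}, and strict $\strict J$, all of which the paper's rules provide; and even with merely propositional computation rules the homotopy $\bar\imath \circ j \sim \mathsf{id}$ would still go through by path induction, so there is no gap there.
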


\begin{proof}
For an inner type $A$ with $u, v : A$,
the internalisation of \cref{eq:equ-strict-fib}
gives us a canonical map
\begin{equation*}
 i : \left(\cv(u) =_{\cv(A)} \cv(v)\right) \to \cv(u \inneq[A] v).
\end{equation*}
We claim the following: 
\begin{equation} \label{eq:fibrant-replacement-proof}
\prd{u,v : A, p : u \inneq[A] v} R \parens{\prd{h : \cv(u) =_{\cv(A)} \cv(v)} \cv(\cv^{-1}(i(h)) \inneq p)}.
\end{equation}
By inner path induction, we can assume $p \equiv \innrefl u$.
Using \textsc{intro-R} it remains to show that, for $h : \cv(u) = \cv(u)$, we have
\begin{equation}\label{eq:fibrant-replacement-proof-simplified}
 \cv(\cv^{-1}(i(h)) \inneq \innrefl u).
\end{equation}
Because of \UIP, we can replace $h$ by $\refl {\cv(u)}$, and by observing that $i$ maps the trivial outer equality to the trivial inner equality we get \cref{eq:fibrant-replacement-proof-simplified}.

Our goal is to show that $A$ satisfies \textsc{UIP}.
Assume now $u : A$ and $p : u \inneq u$.
It suffices to show $p \inneq \innrefl{u}$.
This follows from \cref{eq:fibrant-replacement-proof}, choosing $h$ to be $\refl{\cv(u)}$.
\end{proof}

While homotopical models do have fibrant replacement operations coming from weak factorisation systems, they are usually not stable under base change.
This prevents internalisation of this operation in the form of the above rules.
That is, we may replace an inner type by an outer type, but this operation is not natural in the context.
If one still wishes to expose this operation, one option is to make two-level type theory into a modal type theory extended with a notion of crisp types as in~\cite{lops:internal-universes}.
Then one can state the above replacement operation \emph{crisply}.

\subsection{Notational conventions}

The rest of the paper does not concern the meta-properties of 2LTT.
Instead, we develop some theory internally to 2LTT.
As described above, we use the syntax suggested in \cref{sec:syntax}.
For notational convenience, we omit applications of $\El$ and pretend that we work with Russell-style universes.
As it is fairly standard, we also omit universe indices in the style of \emph{typical ambiguity}.
Similarly, we will keep the conversion operation $\cv$ between inner and outer types implicit.

Note that we did not assume a built-in universe of propositions in either level (but \cf \cref{axiom:equality-reflection}).
Instead, we define
\begin{align}
 & \inn\Prop \defeq \sm{X : \inn\UU} \prd{x, y: X} (x \inneq y) \\
 & \Prop \defeq \sm{X : \UU} \prd{x, y: X} (x = y).
\end{align}

Most of the time, we work with the outer level, which is why we treat that level as the default; note how the inner type formers are annotated with the symbol $\inn{}$, while the outer do not carry annotations.
This also means that, when when we say that a diagram commutes, it commutes up to the outer equality type.

For outer types $A$ and $B$, we can form the type of isomorphisms, written $A \simeq B$.
Note that, because of \textsc{UIP}, asking for maps in both directions such that both compositions are pointwise equal to the identity is well-behaved.
For inner types $A$ and $B$, the inner type $A \innequiv B$ is the usual type of equivalences.
\end{section}

\begin{section}{Basic tools: categories, fibrations, and cofibrations} \label{sec:fib-cofib}

Before we can start working inside two-level type theory, it is helpful to develop some basic theory.
As the outer level of the theory is simply a version of MLTT with UIP, we have access to a vast pool of results that are already known.
In particular, finite types and the basics of category theory work in the expected way. We will summarise some of that here.

Later on, we will need several notions more specific to two-level type theory. Namely, we are going to define what it means for a function to be a fibration or a cofibration, and for types to be fibrant or cofibrant. These notions will allow us to use the outer level to obtain results that are really about the inner one, without having to explicitly coerce from inner types to outer.

\subsection{Preliminaries}

Although somewhat trivial, the importance of finite types for our development justifies that we introduce them explicitly.
Recall that, in usual type-theoretic terminology, $\Fin_n$ is the finite type with $n$ elements. In our development, we will use this notation exclusively to refer to finite types in the outer level. One explicit definition is as the type of natural numbers smaller than $n$, where the order on natural numbers is defined as usual.

We will say that a type $X$ is \emph{finite} if it is isomorphic to $\Fin_n$, for some $n$, \ie if we have $\sm{n: \N} X \cong \Fin_n$.
The type $\Fin_n$ is not to be confused with its inner counterpart $\inn\Fin_n$, which exists for $n : \inn\N$. Of course, we have a canonical function $\Fin_n \to \inn\Fin_n$, where the application of the function $\N \to \inn\N$ is kept implicit.
In a two-level theory satisfying~\cref{axiom:conversion-pres-positive}, this would be an isomorphism, but in general, we do not even assume a function in the other direction.

In the following, will make heavy use of category-theoretic notions. Categories are defined in the usual way, within the outer level of the theory.

\begin{definition}[category] \label{def:strictcat}
A \emph{category} $\C$ is given by
\begin{itemize}
 \item a type $\obj \C : \UU$ of \emph{objects};
 \item for all pairs $x, y : \obj \C$, a type $\C(x, y) : \UU$ of \emph{arrows} or \emph{morphisms};
 \item an \emph{identity} arrow $\mathsf{id} : \C(x, x)$ for every object $x$;
 \item and a \emph{composition} function $\circ : \C(y, z) \to \C(x, y) \to \C(x, z)$ for all objects $x,y,z$;
 \item such that the usual categorical laws holds, that is, we have $f \circ \mathsf{id} = f$ and $\mathsf{id} \circ f = f$, as well as $h \circ (g \circ f) = (h \circ g) \circ f)$.
\end{itemize}
\end{definition}

Given objects $x$ and $y$ of a category $\C$, we also write $f \co x \to y$ for a morphism from $x$ to $y$, that is, an element of the type $\C(x, y)$.
It will always be clear from the context if $x$ and $y$ are types or objects of a category, so that there is no confusion with the function type former.
(In the case of a category of types, the two notions agree.)

Readers familiar with the chapter on category theory in the HoTT book~\cite{hott-book} (and~\cite{ahrens:rezk}) will note that our definition is exactly the same as that of \emph{precategories} there. Of course, since our outer theory validates UIP, and therefore every type is a set, we do not need to explicitly add a truncation condition on homsets.

A canonical example of a category is the category of types, whose objects are the types in a given universe $\UU$, and whose morphisms are functions.
By a slight abuse of notation, we will simply write $\UU$ to denote this category. Analogously, if $\C$ is a category, we allow ourselves to denote the type of objects by $\C$ itself.

The usual theory of categories can be reproduced in the context of our categories (as long as we stay constructive).
We write $[\C, \D]$ for the \emph{functor category} of categories $\C$ and $\D$, with the type of \emph{natural transformations} from a functor $F$ to a functor $G$ also written $\Nat(\C, \D)$.
Functors and natural transformations form the objects and morphisms of a (larger) category of categories.
We have the usual concepts such as \emph{limits} and \emph{adjunctions} and can prove all their usual properties, for example that limits (if they exist) are unique up to isomorphism.

\begin{remark}
We will not indulge in the exercise of replicating the whole of category theory in our outer level, and simply assume, on the empirical evidence provided by several existing developments in the major implementations of type theory, like the aforementioned Agda, Coq and Lean, that doing so is simply a matter of diligence and patience, and it ultimately should present no mathematical difficulties.
\end{remark}

\begin{remark}
Despite the above remark, it is perhaps appropriate to add a small explanation of how one might reasonably deal with ``size'' issues in a formal development of category theory within the outer theory.

When translating category-theoretical statements originally formulated in the metalanguage of set theory, one is posed with the question of what precise type-theoretic meaning to give to the term ``small''.

As most incarnations of type theory, including our outer level, provide the user with an infinite tower of universes, it feels unnecessarily restrictive to constrain a general term like ``small'' to a predetermined choice of a universe level.

For this reason, we will not make such a choice, and simply continue the tradition of writing ``small'' for a type that resides in a universe which is one step below a ``default'' unspecified universe level. This makes it clear that the absolute level that certain constructions happen to end in is not particularly important, rather what we have to pay attention to are the differences in \emph{relative} size.
\end{remark}

Note that the universe $\inn\UU$ of inner types also forms a category, although it is not as well behaved as $\UU$. For example, it does not have pullbacks (but see part~\cref{lem:fib-closure:pb} of \cref{lem:fib-closure}).

\subsection{Fibrant types}
\label{subsec:fibrant-types}

\begin{definition}[(trivially) fibrant type] \label{def:fibrant}
A type $A : \UU$ is \emph{fibrant} if it is isomorphic to an inner type $A' : \inn{\UU}$.
It is \emph{trivially fibrant} if the inner type $A'$ is furthermore contractible.
\end{definition}

Note that fibrancy (and similarly trivial fibrancy) is a proof-relevant notion, in that being fibrant is not a proposition (in the sense of having at most one element).
A fibrant type $A$ carries with it a choice of an inner type $A'$ and an isomorphism $f : A \cong A'$ relating its coercion to a type to the original type $A$ (and a trivially fibrant type furthermore carries an element witnessing inner contractibility of $A'$).
This should be kept in mind in our use of language when we use being fibrant as an adjective.
For example, when we say that $A$ is fibrant exactly if $B$ is fibrant, what we mean is functions back and forth between the types witnessing fibrancy of $A$ and $B$.

Generally speaking, our use of informal language in the outer level follows the mantra of ``propositions as types''.
Thus, similar conventions as established in the previous paragraph apply to notions such as fibrations and cofibrations defined below (and their Reedy variants considered later).

We write $\UUfib$ for the type of fibrant types in $\UU$.
Note that it is itself not generally fibrant (although it is in some of the intended models such as simplicial sets).
We let $\UUfib$ inherit the category structure of $\UU$.
Note that the functor $\UUfib \to \UU$ is the replacement of the coercion functor $\inn{\UU} \to \UU$ by an isofibration.
In particular, fibrant types are closed under isomorphism.
We allow ourselves to implicitly coerce from $\UUfib$ to $\UU$.

\begin{lemma} \label{lem:fibrant-sigma-pi}
Fibrant types enjoy the following closure properties.
\begin{enumerate}[label=(\roman*)]
\item \label{lem:fibrant-sigma-pi:unit}
The unit type is trivially fibrant.
\item \label{lem:fibrant-sigma-pi:sigma}
Given $A : \UUfib$ and $B : A \to \UUfib$, then $\Sigma_A B$ is fibrant.
It is trivially fibrant if $A$ and $B$ are valued in trivially fibrant types.
\item \label{lem:fibrant-sigma-pi:pi}
Given $A : \UUfib$ and $B : A \to \UUfib$, then $\Pi_A B$ is fibrant.
It is trivially fibrant if $B$ is valued in trivially fibrant types.
\end{enumerate}
\end{lemma}

\begin{proof}
Recall from \cref{lem:cv-preserves} that the coercion map $\inn{\UU} \to \UU$ preserves unit type, dependent sums, and dependent products up to canonical isomorphism.

We do the case of dependent products in detail.
Note that we can internalise the inner dependent product as an operation $\inn{\Pi} : (\sm{A : \inn{\UU}} (A \to \inn{\UU})) \to \inn{\UU}$ and that given $A : \inn{\UU}$ and $B : A \to \inn{\UU}$, we have a comparison isomorphism $\inn{\Pi}(A, B) \cong \prd{a:A} B(a)$.
This shows that the (outer) dependent product of $A : \inn{\UU}$ and $B : A \to \inn{\UU}$ is fibrant.
Isomorphic families have isomorphic dependent product, generalising the statement to $B : A \to \UUfib$.
Finally, reindexing a family along an isomorphism gives isomorphic dependent product, generalising the statement to $A : \UUfib$.
\end{proof}

\begin{definition}[equivalence]
A function $f : A \to B$ between fibrant types with underlying inner types $A'$ and $B'$ is an \emph{equivalence} if the corresponding inner function $A' \innto B'$ is an equivalence (in the sense of homotopy type theory and using the inner identity type).
\end{definition}

Note that the notion of $f$ being an equivalence in the above definition formally depends on the witnesses of fibrancy of $A$ and $B$.
Different witnesses yield non-isomorphic types of $f$ being an equivalence.
However, they will still be logically equivalent (in the sense of maps back and forth) and are in fact propositionally fibrant (meaning that their underlying inner type is a proposition in the sense of homotopy type theory).
As per our convention, we can thus say that the notion of equivalence is invariant under isomorphism.

Properties of equivalences are directly lifted from the inner level.
For example, equivalences satisfy 2-out-of-6.

\subsection{Fibrations}
\label{subsec:fibrations}

Recall that the \emph{fibre} $p^{-1}(x)$ of a function $p : Y \to X$ over $x : X$ is given by the type $\sm{y:Y} p(e) = b$.
This is not to be confused with the notion of homotopy fibre, which is only available for inner types (using the inner identity type).
More generally, we say that a type $A$ is the fibre of $p$ over $x$ if $A$ arises as a pullback of $p$ along $1 \to X$, in which case it is isomorphic to $p^{-1}(x)$.

\begin{definition}[(trivial) fibration] \label{def:fibration}
A function $p : Y \to X$ is a \emph{(trivial) fibration} if its fibres are (trivially) fibrant.
\end{definition}

In the running text and diagrams, a fibration $Y \to X$ is denoted $Y \fib X$.

\begin{remark} \leavevmode
\begin{enumerate}[label=(\roman*)]
\item Note that $A \to 1$ is a (trivial) fibration exactly if $A$ is (trivially) fibrant.
This matches the terminology in abstract homotopy theory, where the notion of fibration is taken as primitive and fibrant objects are the special case of maps into the terminal object.
Note also that every trivial fibration is a fibration.
\item
In the models of simplicial sets and cubical sets, our fibration coincide with the fibrations in the sense of the model.
The reader should be aware that our internal pointwise definition of fibration, talking just about the fibres of a map, does not correspond to an external pointwise or fibrewise property.
For example, a map $Y \to X$ in simplicial sets is not necessarily a Kan fibration if the fibre $Y_x$ of every point $x \in X_0$ is a Kan complex.
Rather, the internal quantification over elements of the base type $X$ externally becomes a quantification over $[n] : \Delta$ with $x \in X_n$.
The witness of fibrancy is given by a family of elements $\sigma([n], x) \in \UU_n$, natural in $[n]$, where $\UU$ is the universe of Kan fibrations.
\end{enumerate}
\end{remark}

Since isomorphic maps have isomorphic fibres and the notion of (trivial) fibrancy is invariant under isomorphism, the notion of (trivial) fibration is invariant under isomorphism as well.
From this, the following lemma is an immediate consequence of the definitions.

\begin{lemma-qed} \label{lem:fibration-char}
The following are equivalent for a function $p : Y \to X$:
\begin{enumerate}[label=(\roman*)]
\item
$p$ is a fibration,
\item
$p$ is isomorphic over $X$ to $\Sigma_X Y' \to X$ for some $Y' : X \to \UUfib$,
\item
$p$ is isomorphic over $X$ to $\Sigma_X Y' \to X$ for some $Y' : X \to \inn{\UU}$,
\end{enumerate}
and if $X$ is fibrant with underlying inner type $X' : \inn{\UU}$:
\begin{enumerate}[label=(\roman*),resume]
\item
$p$ is isomorphic to the map $\inn{\Sigma}_{X'} Y' \to X'$ corresponding to the inner dependent projection $\inn{\Sigma}_{X'} Y' \innto X'$ for some $Y' : X' \to \inn{\UU}$.
\end{enumerate}
We have analogous equivalences for trivial fibrations with $\inn{\UU}$ replaced by the type of inner contractible types and $\UUfib$ replaced by the type of trivially fibrant types.
\end{lemma-qed}

Fibrations and trivial fibrations enjoy a number of closure properties.
We start with the following easy collection.

\begin{lemma} \label{lem:fib-closure}
(Trivial) fibrations are closed under:
\begin{enumerate}[label=(\roman*)]
\item \label{lem:fib-closure:pb}
pullbacks,
\item \label{lem:fib-closure:comp}
finite compositions.
\item \label{lem:fib-closure:prod}
finite products,
\end{enumerate}
\end{lemma}

\begin{proof}
For part~\cref{lem:fib-closure:pb}, note that the fibres of a pullback of a map $f$ are also fibres of $f$.

For part~\cref{lem:fib-closure:comp}, the nullary and binary case reduce to parts~\cref{lem:fibrant-sigma-pi:unit,lem:fibrant-sigma-pi:sigma} of \cref{lem:fibrant-sigma-pi}, respectively.
The general case follows by induction.

Part~\cref{lem:fib-closure:prod} is a consequence of~\cref{lem:fib-closure:pb} and~\cref{lem:fib-closure:comp}.
\end{proof}

\begin{lemma}\label{lem:trivial-fibration-section}
Every trivial fibration has a section.
\end{lemma}

\begin{proof}
By \cref{lem:fibration-char}, we can assume that the given trivial fibration is of the form $\sm{b : B} X(b) \to B$ for a type $B$ and a family $X : B \to \inn{\UU}$ of contractible fibrant types over $B$.
We obtain a section $\prd{b : B} X(b)$ by extracting the centre of contraction from the contractibility proof of $X(b)$.
\end{proof}

\begin{lemma}\label{lem:trivial-fibration-equivalence}
Let $p : E \to B$ be a map between fibrant types.
Then $p$ is a trivial fibration if and only if it is both a fibration and an equivalence.
\end{lemma}

\begin{proof}
Let $p$ be a fibration.
By invariance under isomorphism and \cref{lem:fibration-char}, we can assume that $p$ is of the form $\sm{b : B} X(b) \to B$ for an inner type $B : \inn{\UU}$ and a family $X : B \to \inn{\UU}$ of inner types over $B$, with identity isomorphisms witnessing fibrancy of $B$ and $E$.
Then $p$ is a trivial fibration exactly if $X(b)$ is contractible for all $b : B$.
Note that dependent projection $\sm{b : B} X(b) \to B$ corresponds to the inner dependent projection $\sminn{b : B} X(b) \innto B$ under the isomorphisms relating inner and outer dependent sums and function types.
Thus, $p$ is an equivalence exactly if this inner dependent projection is an equivalence in the inner level, which from homotopy type theory we know to be equivalent to its fibres $X(b)$ being contractible for all $b : B$.
\end{proof}

\subsection{Cofibrations}
\label{subsec:cofibrations}

Cofibrations and cofibrant types are further technical concepts which help us to study the actual objects of interest, \ie fibrant types.
We will see their usefulness later in this article, but let us in addition try to give some motivation here.
If $B$ is a fibrant type, then so are $B \times B$ and $B \times B \times B$.
More generally, if we work in HoTT and fix any natural number $n$ in the meta-theory, we can consider the $n$-fold product of $B$.
In our setting, this corresponds to the function type $\Fin_n \to B$.
It is important to note that, while $(\Fin_2 \to B)$ is isomorphic to $B \times B$, this and analogous isomorphisms do in general not hold if we use the inner version $\inn \Fin_2$ instead; we will only get the weaker notion of an inner equivalence.
While $\Fin_n \to B$ is fibrant, this is not directly given by the rules of 2LTT and instead requires a proof (see \cref{lem:cofibrant-collect}).
This, we hope, makes it plausible that it is useful to have a notion of cofibrant types, which we want to be those that exponentiation with preserves fibrancy; and it is not too far-fetched that we also want an analogous notion for functions.
In fact, one might expect that \emph{any} setting which allows to reason about HoTT externally in such a way benefits from a notion of cofibration.
For example, it is worth comparing the characterisation in \cref{rmk:cofibration-easy-remarks}\ref{item:remark:cofib-lifting} with the \emph{extension types} by Riehl and Shulman~\cite{riehl2018type}.

To define and reason about cofibrations (as well as Reedy cofibrations later on), we make use of the theory of \emph{Leibniz constructions} established in~\cite{riehl:reedy}.
Given a bifunctor $F : \C \times \D \to \E$, the \emph{Leibniz action} $\widehat{F}(f, g)$ of $F$ on maps $f \co A \to B$ in $\C$ and $g \co C \to D$ in $\D$ is the induced map from the pushout corner in the square
\[
\xymatrix{
  F(A, C)
  \ar[r]
  \ar[d]
&
  F(A, D)
  \ar[d]
\\
  F(B, C)
  \ar[r]
&
  F(B, D)
\rlap{,}}
\]
\ie the map
\[
\xymatrix{
  F(A, D) +_{F(A, C)} F(B, C)
  \ar[r]^-{\widehat{F}(f, g)}
&
  F(B, D)
\rlap{,}}
\]
assuming that this pushout exists.
If $\E$ has all pushouts, this gives rise to a bifunctor $\widehat{F} : \C^\to \times \D^\to \to \E^\to$, the Leibniz construction of $F$.

In a category $\C$ with finite products, the Leibniz action of the product functor $(-) \times (-) : \C \times \C \to \C$ in a category $\C$ is called the \emph{pushout product}.
For $\C$ Cartesian closed, the Leibniz action of the exponential functor $\op\exp : \C \times \op\C \to \op\C$ is called the \emph{pullback exponential}.
Note the dualised functor signature we have given $\exp$ here (as opposed to $\exp : \op\C \times \C \to \C$).
This causes the pushout (in $\op\C$) of the Leibniz construction to become a pullback in $\C$, explaining the naming.

The category $\UU$ of types is Cartesian closed and has pullbacks, thus we have the pullback exponential bifunctor $\widehat{\exp} : (\op\UU)^\to \times \UU^\to \to \UU^\to$, sending functions $f : A \to B$ and $p : Y \to X$ to the function
\[
\xymatrix@C+1cm{
  (B \to Y)
  \ar[r]^-{\widehat{\exp}(f, p)}
&
  (B \to X) \times_{A \to X} (A \to Y)
\rlap{.}}
\]

\begin{definition} \label{def:cofibration}
A function $f : A \to B$ between types is:
\begin{itemize}
\item
a \emph{cofibration} if $\widehat{\exp}(f, -)$ preserves fibrations and trivial fibrations,
\item
a \emph{trivial cofibration} if $\widehat{\exp}(f, -)$ sends fibrations to trivial fibrations.
\end{itemize}
A type $B$ is \emph{(trivially) cofibrant} if the function $0 \to B$ is a (trivial) cofibration.
\end{definition}

We thank Mike Shulman for pointing out that we were missing the condition on trivial fibrations for cofibration in an earlier version of this definition.

\begin{remark} \label{rmk:cofibration-easy-remarks} \leavevmode
\begin{enumerate}[label=(\roman*)]
\item \label{item:remark:cofib-lifting}
Unfolding the above definition, we obtain the following phrasing.
A function $f : A \to B$ is a cofibration exactly if for all fibrations $p : Y \fib X$ and commuting squares
\[
\xymatrix{
  A
  \ar[r]
  \ar[d]_-{f}
&
  Y
  \ar@{->>}[d]^-{p}
\\
  B
  \ar[r]
  \ar@{.>}[ur]
&
  X
\rlap{,}}
\]
the type of diagonal fillers (indicated by the dotted arrow) is fibrant, and trivially fibrant whenever $p$ is a trivial fibration.
\item \label{item:remark:triv-cofib-lifting}
Analogously to \ref{item:remark:cofib-lifting}, $f$ is a trivial cofibration exactly if, for all fibrations $p$ as above, the type of diagonal fillers is trivially fibrant.
\item
The notions of (trivial) cofibration and (trivially) cofibrant type are invariant under isomorphism.
\item \label{rmk:cofibration-easy-remarks:triv-cofib-is-cofib}
Since trivial fibrations are fibrations, trivial cofibrations are cofibrations.
\item
A type $B$ is cofibrant exactly if the representable functor $\UU(B, -)$ preserves fibrations and trivial fibrations.
 \end{enumerate}
\end{remark}

\begin{remark}
The conditions for (trivial) cofibrations given by \cref{def:cofibration} are reminiscent of the pushout product axiom of Cartesian closed model categories.
In fact, they correspond exactly to the dual phrasing of this axiom in terms of pullback exponentials.
This is the intuition behind our naming scheme.

In the simplicial set model, our notion of (trivial) cofibration (interpreted in the empty context) coincides with the (trivial) cofibrations of the Kan model structure.
This can be seen as follows.
The Kan model structure is Cartesian closed, hence a (trivial) cofibration of the model structure is a (trivial) cofibration in our sense.
Reversely, let $f \co A \to B$ be a cofibration in our sense.
In particular, Leibniz exponential with $f$ preserves trivial fibrations.
Then $f$ lifts against trivial fibrations by \cref{lem:trivial-fibration-section}, \ie is a cofibration of the model structure.
A similar argument applies to trivial cofibrations, and our reasoning is not specific to simplicial sets but applies to any cartesian closed model structure.

One can ask for a local version of this correspondence.
Given a simplicial set $\Gamma$, the above argument in the slice over $\Gamma$ shows that the (trivial) cofibrations of the model structure over $\Gamma$ include all of our (trivial) cofibrations in context $\Gamma$.
However, the reverse inclusions fails as the slices of the Kan model structure are not Cartesian closed.
For example, taking $\Gamma = \Delta^1$, the map $\braces{0} \to \Delta^1$ (sitting over $\Delta^1$ via the identity) is an example of a trivial cofibration of the model structure that is not even a cofibration in context $\Gamma$ in our sense.
For if it were, pullback exponentials over $\Delta^1$ with $\braces{0} \to \Delta^1$ would preserve fibrations, which, by the adjunction between product and exponential, would imply that pushout product over $\Delta^1$ with $\braces{0} \to \Delta^1$ preserves trivial cofibrations.
However, the pushout product over $\Delta^1$ of $\braces{0} \to \Delta^1$ and $\braces{1} \to \Delta^1$ is simply their union $\partial \Delta^1 \to \Delta^1$, which is not a trivial cofibration.
\end{remark}

\begin{remark} \label{rem:consider-left-lifting-property}
Our notions of (trivial) cofibrations and (trivial) fibrations do not form weak factorisation systems.
One may be tempted to consider the collection of maps with the left lifting property with respect to (trivial) fibrations.
This notion does not capture the intuition described at the beginning of \cref{subsec:cofibrations},
and we instead call such maps \emph{anodyne} (cf.\ \cref{def:anodyne}).
\end{remark}

Within fibrant types, trivial cofibrations can be characterised as cofibrations that are equivalences.

\begin{lemma}\label{lem:trivial-cofibration-equivalence}
Let $f : A \to B$ be a function between fibrant types.
Then $f$ is a trivial cofibration if and only if it is both a cofibration and an equivalence.
\end{lemma}
\begin{proof}
By part~\cref{rmk:cofibration-easy-remarks:triv-cofib-is-cofib} of~\cref{rmk:cofibration-easy-remarks}, we can assume that $f$ is a cofibration and prove that it is a trivial cofibration if and only if it is an equivalence.

If $f$ is a trivial cofibration, then, for all fibrant types $X$, the restriction map $(B \to X) \to (A \to X)$ is an equivalence. A Yoneda-like argument implies that $f$ is an equivalence in this case.
In detail, the characterisation \ref{item:remark:triv-cofib-lifting} of~\cref{rmk:cofibration-easy-remarks}
implies that the type of dotted diagonal fillers is trivially fibrant in each of the following two diagrams:
\[
\xymatrix{
	A
	\ar[r]^{\mathsf{id}}
	\ar[d]_-{f}
	&
	A
	&&
	A
	\ar[r]^{f}
	\ar[d]_-{f}
	&B
	\\
	B
	\ar@{.>}[ur]_g
	&&&
	B
	\ar@{.>}[ur]_{f \circ g}
}
\]
As indicated, we call $g$ the unique morphism $B \to A$ that makes the left triangle commute.
In the right triangle, observe that $f \circ g$ and $\mathsf{id}$ both make the triangle commute, and the desired result follows by uniqueness.

Conversely, let $p : Y \to X$ be a fibration. Then the corresponding pullback-exponential fibration
\[
(B \to Y) \to (B \to X) \times_{(A \to X)} (A \to Y)
\]
is an equivalence by preservation of equivalences under exponentiation with a fixed base and 2-out-of-3, hence a trivial fibration by \cref{lem:trivial-fibration-equivalence}.
Thus, $f$ is a trivial cofibration.
\end{proof}

The following key lemma helps us characterise cofibrations.

\begin{lemma} \label{lem:cofib-char}
For any function $f : A \to B$:
\begin{enumerate}[label=(\roman*)]
\item
$f$ is a cofibration exactly if for any family $Y' : B \to \UUfib$ of (trivially) fibrant types, the induced map
\[
\prd B Y' \to \prd A (Y' \circ f)
\]
is a (trivial) fibration.
\item
$f$ is a trivial cofibration exactly if for any family $Y' : B \to \UUfib$ of fibrant types, the induced map
\[
\prd B Y' \to \prd A (Y' \circ f)
\]
is a trivial fibration.
\end{enumerate}
\end{lemma}

Before giving its proof, let us recall the following characterisation of dependent functions in terms of non-dependent functions into a type of pairs where the first component is fixed.

\begin{lemma} \label{lem:Pi-function}
For a function $f : A \to B$ and a family $X : B \to \UU$, the following diagram is a pullback:
\[
\xymatrix@C=8em{
  \prd A (X \circ f)
  \ar[r]^{g \mapsto (\lambda a. \, (f(a),g(a)))} \ar[d]
  \drpullback
  &
  (A \to \Sigma_B X) \ar[d]^{\pi_1 \circ -} \\
  \unit \ar[r]^-f &
  (A \to B)
}
\]
\end{lemma}

\begin{proof}
By direct calculation, the pullback is $\sm{g : A \to \Sigma_B X}(\pi_1 \circ g = f)$, which is indeed isomorphic to $\prd A (X \circ f)$.
\end{proof}

\begin{proof}[Proof of \cref{lem:cofib-char}]
Let $p : Y \to X$ be a fibration.
It is isomorphic to $\Sigma_X Y' \to X$ for some family $Y' : X \to \UUfib$ of fibrant types.
Given $v : B \to X$, we have the following diagram:
\[
\xymatrix{
  \Pi_B (Y' \circ v)
  \ar[r]
  \ar[d]
  \pullback{dr}
&
  (B \to Y)
  \ar[d]^{\widehat{\exp}(f, p)}
\\
  \Pi_A (Y' \circ v \circ f)
  \ar[r]
  \ar[d]
  \pullback{dr}
&
  (B \to X) \times_{(A \to X)} (A \to Y)
  \ar[r]
  \ar[d]
  \pullback{dr}
&
  (A \to Y)
  \ar[d]^{p \circ -}
\\
  1
  \ar[r]^-{v}
&
  (B \to X)
  \ar[r]^-{- \circ f}
&
  (A \to X)
\rlap{.}}
\]
The right bottom square is a pullback by construction.
The composite bottom square and the composite left square are pullbacks by \cref{lem:Pi-function}.
By pullback pasting, the top left square is a pullback.

Note that a map over a type $Z$ is a (trivial) fibration exactly if all its pullbacks along maps $1 \to Z$ are (trivial) fibrations.
It follows that $\widehat{\exp}(f, p)$ is a (trivial) fibration exactly if $\Pi_B (Y' \circ v) \to \Pi_A (Y' \circ v \circ f)$ is a (trivial) fibration for all $v$.
This shows the desired equivalences: going forward, we put $X = B$ and $v = \id{B}$; going backward, we use the given condition for the family $Y' \circ v$ for all $v$.
\end{proof}

By setting $A = 0$ in \cref{lem:cofib-char}, we obtain the following important special case.

\begin{corollary-qed} \label{cor:cofibrant-char}
For any type $B$:
\begin{enumerate}[label=(\roman*)]
\item
$B$ is cofibrant exactly if for any family $Y' : B \to \UUfib$ of (trivially) fibrant types, the type $\Pi_B Y'$ is (trivially) fibrant,
\item
$B$ is trivially cofibrant exactly if for any family $Y' : B \to \UUfib$ of fibrant types, the type $\Pi_B Y'$ is trivially fibrant.
\qedhere
\end{enumerate}
\end{corollary-qed}

In the following, we make use of standard properties of the Leibniz calculus (most of which are established at a general level in~\cite{riehl:reedy}).
They allow us to infer closure properties of (trivial) cofibrations from corresponding closure properties of (trivial) fibrations established in \cref{lem:fib-closure}.

\begin{lemma} \label{lem:cofib-closure}
(Trivial) cofibrations are closed under:
\begin{enumerate}[label=(\roman*)]
\item \label{lem:cofib-closure:po}
pushouts (whenever they exist),
\item \label{lem:cofib-closure:comp}
finite compositions,
\item \label{lem:cofib-closure:coprod}
finite coproducts.
\end{enumerate}
\end{lemma}

Recall that $\UU$ does not possess pushouts in general.
Part~\cref{lem:cofib-closure:po} does not establish the existence of pushouts, but merely asserts that any pushout of a (trivial) cofibration is also one.

\begin{proof}[Proof of \cref{lem:cofib-closure}]
For part~\cref{lem:cofib-closure:po}, recall from~\cite{riehl:reedy} that the functorial action of the pullback exponential in its first argument sends morphisms of arrows that are pushouts to morphisms of arrows that are pullbacks.
Thus, the claim follows from part~\cref{lem:fib-closure:pb} of \cref{lem:fib-closure}.

In detail, consider a pushout
\[
\xymatrix{
  A
  \ar[r]
  \ar[d]_{f}
&
  C
  \ar[d]^{g}
\\
  B
  \ar[r]
&
  D
  \pullback{ul}
\rlap{.}}
\]
Assuming that $f$ is a cofibration, we wish to show that $g$ is a cofibration (the case of trivial cofibrations is analogous).
Let $Y \fib X$ be a (trivial) fibration.
From the universal property of the pushout and pullback pasting, we get that the square
\[
\xymatrix{
  (D \to Y) \ar[r]\ar[d] \drpullback &
  (B \to Y) \ar[d] \\
  (C \to Y) \times_{(C \to X)} (D \to X) \ar[r] &
  (A \to Y) \times_{(A \to X)} (B \to X)
}
\]
is a pullback.
Since $f : A \to B$ is a cofibration, the right vertical map is a (trivial) fibration, hence so is the left vertical map by part~\cref{lem:fib-closure:pb} of \cref{lem:fib-closure}.
This makes $g$ a cofibration.
An alternative argument uses \cref{lem:cofib-char} and the dependent version of the universal property of pushouts.

For part~\cref{lem:cofib-closure:comp}, recall from~\cite{riehl:reedy} that the pullback exponential of a map $p$ with a finite composition is a finite composition of pullbacks of pullback exponentials of $p$ with the individual factors.
Thus, the claim follows from parts~\cref{lem:fib-closure:comp,lem:fib-closure:pb} of \cref{lem:fib-closure}.

For an alternative proof, we can make use of the characterisation of (trivial) cofibrations given by \cref{lem:cofib-char}.
Let us only deal with the case of a binary composition
\[
\xymatrix{
  A
  \ar[r]^{f}
&
  B
  \ar[r]^{g}
&
  C
}
\]
of cofibrations.
Given a family $X : C \to \UUfib$ of (trivially) fibrant types, we have to show that $\prd C X \to \prd A (X \circ g \circ f)$ is a (trivial) fibration.
This writes as the composite
\[
\xymatrix{
  \prd C X
  \ar[r]^-{- \circ g}
&
  \prd B (X \circ g)
  \ar[r]^-{- \circ f}
&
  \prd A (X \circ g \circ f),
\rlap{,}}
\]
whose factors are (trivial) fibrations by assumption.
The statements then follows from closure of fibrations under composition, \ie part~\cref{lem:fib-closure:comp} of \cref{lem:fib-closure}.

For part~\cref{lem:cofib-closure:coprod}, recall that the exponential bifunctor sends colimits in its exponent to limits.
By~\cite{riehl:reedy}, this property transfers to the pullback exponential bifunctor.
In particular, for a fixed function $p$, the operation $\widehat{\exp}(-, p)$ sends coproducts (in $\UU^\to$) to products.
The claim then reduces to part~\cref{lem:fib-closure:prod} of \cref{lem:fib-closure}.
Alternatively, it is a consequence of~\cref{lem:cofib-closure:po} and~\cref{lem:cofib-closure:comp} in the same fashion as for \cref{lem:fib-closure}.
\end{proof}

\begin{corollary} \label{cor:inj1-is-cofib}
For any type $A$ and cofibrant type $B$, the inclusion $A \to A + B$ is a cofibration.
\end{corollary}

\begin{proof}
Since $\UU$ has binary coproducts, it has pushouts along $0 \to A$.
Instantiating part~\cref{lem:cofib-closure:po} of \cref{lem:cofib-closure} to the pushout of the cofibration $0 \to B$ along $0 \to A$, we obtain the claim.

Alternatively, we could instantiate part~\cref{lem:cofib-closure:coprod} of \cref{lem:cofib-closure} to the cofibrations $\id{A}$ (using the nullary case of part~\cref{lem:cofib-closure:comp} of \cref{lem:cofib-closure}) and $0 \to B$.
\end{proof}

Note that the map $0 \to 1$ is the unit for the pushout product.
Thus, the pullback exponential with $0 \to 1$ is equivalent to the identity functor.
Thus, $1$ is the simplest example a cofibrant type.
This can be seen as the nullary version of the following statement.

\begin{lemma} \label{lem:cofib-po-prod}
Any pushout product $f \widehat{\times} g$ (if it exists) of a cofibration $f$ with a cofibration $g$ is again a cofibration.
Furthermore, $f \widehat{\times} g$ is a trivial cofibration if one of $f$ and $g$ is a trivial cofibration.
\end{lemma}

\begin{proof}
Recall that binary product and exponential form a two-variable adjunction.
As explained in~\cite{riehl:reedy}, this lifts to Leibniz constructions.
In particular, given a function $p$, we have isomorphisms
\[
  \widehat{\exp}(f \widehat{\times} g, p)
\cong
  \widehat{\exp}(g, \widehat{\exp}(f, p))
\cong
  \widehat{\exp}(f, \widehat{\exp}(g, p))
.\]
With this, the claim reduces to the definition of (trivial) cofibrations.
\end{proof}

\begin{corollary-qed} \label{lem:cofib-tensor}
Cofibrations are closed under products with cofibrant types.
\end{corollary-qed}

We are now able to characterise a large class of cofibrant types.

\begin{lemma} \label{lem:cofibrant-collect} \leavevmode
\begin{enumerate}[label=(\roman*)]
\item \label{lem:cofibrant-collect:fibrant}
Fibrant types are cofibrant.
\item \label{lem:cofibrant-collect:coprod}
Cofibrant types are closed under finite coproducts.
\item \label{lem:cofibrant-collect:finite}
Finite types are cofibrant.
\item \label{lem:cofibrant-collect:sigma}
Given a cofibrant type $A$ and a family $B : A \to \UU$ of cofibrant types, then $\Sigma_A B$ is cofibrant.
\item \label{lem:cofibrant-collect:prod}
Cofibrant types are closed under finite products.
\end{enumerate}
\end{lemma}

\begin{proof}
Part~\cref{lem:cofibrant-collect:fibrant} is immediate from \cref{cor:cofibrant-char} since (trivially) fibrant types are closed under exponentiating with fibrant types by part~\cref{lem:fibrant-sigma-pi:pi} of \cref{lem:fibrant-sigma-pi}.

Part~\cref{lem:cofibrant-collect:coprod} is an instance of part~\cref{lem:cofib-closure:coprod} of \cref{lem:cofib-closure}.

Part~\cref{lem:cofibrant-collect:finite} is a corollary of~\cref{lem:cofibrant-collect:coprod} and cofibrancy of $1$.

For part~\cref{lem:cofibrant-collect:sigma}, we work with the characterisation of cofibrancy given by \cref{cor:cofibrant-char}.
Let $X : (\Sigma_A B) \to \UUfib$ be a (trivially) fibrant family.
We have to show that $\prd {\Sigma A B} X$ is (trivially) fibrant.
This type is strictly isomorphic to $\prd{a:A} \prd{b:B(a)} X(a,b)$ and (trivially) fibrant by two applications of \cref{cor:cofibrant-char}.

For part~\cref{lem:cofibrant-collect:prod}, the nullary case is given by cofibrancy of $1$.
With this, the general situation reduces to the case of binary products.
This is the non-dependent instance of \cref{lem:cofibrant-collect:sigma}.
Alternatively, it is the instance of \cref{lem:cofib-po-prod} for maps $0 \to A$ and $0 \to B$ (whose pushout product is $0 \to A \times B$).
\end{proof}

We note that part~\cref{lem:cofibrant-collect:sigma} of \cref{lem:cofibrant-collect} has a relative generalization: given cofibrant $A$ and a family $f_a : C_a \to D_a$ of cofibrations indexed over $a : A$, the functorial action $\Sigma_A C \to \Sigma_A D$ of $\Sigma_A$ on $f$ is a cofibration.
This is proved using \cref{lem:Pi-function} instead of \cref{lem:cofib-char}.
In principle, one could generalize further to a ``Leibniz dependent sum'' of a cofibration $A \to B$ with a family of cofibrations indexed over $B$, but we have no need for that here.
\end{section}

\begin{section}{Reedy fibrant diagrams}\label{sec:diag-inverse}

\subsection{Motivation}

The connection between dependency structures of types and type families and what is known as \emph{Reedy fibrancy} \cite{reedy1974homotopy} is well-known in the type theory community, although people have used different names for this concept; for example, Makkai's \emph{FOLDS} \cite{makkai1995first} builds on the same insights.
Let us explain the idea with the help of a very concrete example.
The category $\deltplus^{< 3}$ is the category with three objects $[0]$, $[1]$, $[2]$ (``vertices'', ``edges'', ``triangles''), and morphisms generated by $s, t \co [0] \to [1]$ (``source'', ``target'') and $u, v, w \co [1] \to [2]$, subject to the following three equations:
\begin{align*}
\begin{gathered}
\xymatrix{
  [0]
  \ar@<+0.3em>[r]^-{s}
  \ar@<-0.3em>[r]_-{t}
&
  [1]
  \ar@<+0.6em>[r]^-{u}
  \ar@<+0.0em>[r]|-{v}
  \ar@<-0.6em>[r]_-{w}
&
  [2]
}
\end{gathered}
&&
\begin{aligned}
  u \circ s &= v \circ s \\
  u \circ t &= w \circ s \\
  v \circ t &= w \circ t
\end{aligned}
\end{align*}
The type of functors $F \co \op{\parens{\deltplus^{< 3}}} \to \UU$ is defined in the usual way, and is isomorphic to the type of 9-tuples $(X_0,X_1,X_2,f_s,f_t,f_u,f_v,f_w,e_0,e_1,e_2)$ where:
\begin{align*}
 & X_0 : \UU &\qquad & f_s \co X_1 \to X_0 &\qquad & e_0 : f_s \circ f_u = f_s \circ f_v \\
 & X_1 : \UU && f_t \co X_1 \to X_0 && e_1 : f_t \circ f_u = f_s \circ f_w \\
 & X_2 : \UU && f_u \co X_2 \to X_1 && e_2 : f_t \circ f_v = f_t \circ f_w\\
 &&& f_v \co X_2 \to X_1 \\
 &&& f_w \co X_2 \to X_1
\end{align*}
While this type of 9-tuples is unfortunately not fibrant, we can identify a class of diagrams --- the \emph{Reedy fibrant} ones (\cref{def:reedy-fibrations}) --- whose type will turn out to be fibrant, and such that every functor is equivalent a Reedy fibrant one for a suitably weak notion of equivalence.

We will illustrate the idea using $\op{\parens{\deltplus^{< 3}}}$.
We will regard this category as a presentation of a dependency structure of types and type families.
That is, instead of asking for source and target map from the type of edges to the type of vertices, we let the type of edges be indexed twice over the type of vertices, and similarly for triangles over edges.
This leads to a type of triples $(A_0,A_1,A_2)$ with the following components:
\begin{align*}
 & A_0 : \UU \\
 & A_1 : A_0 \to A_0 \to \UU \\
 & A_2 : \Pi(a,b,c : A_0). A_1\, a \, b \to A_1 \, b \, c \to A_1 \, a \, c \to \UU
\end{align*}
The type of triples $(A_0, A_1, A_2)$ encodes the so-called Reedy fibrant functors from $\op{\parens{\deltplus^{< 3}}}$ to $\UU$ and can be formulated in homotopy type theory without the notion of strict equality, or in other words, it can be formulated in our inner theory without using equality from the outer theory.

Every triple $(A_0, A_1, A_2)$ determines a functor, but, unfortunately, it is not the case that the latter type of triples is isomorphic (as a type) to the former type of tuples that used strict equality.
Instead, we will show later that for every functor there exists a Reedy fibrant one that is equivalent to it in a weaker sense (\cref{cor:reedy-fibrant-replacement}) --- its \emph{Reedy fibrant replacement}.

Before this, we introduce the basic required categorical infrastructure within the language of 2LTT.
This is a first demonstration of results which can be expressed in full in 2LTT although they would require meta-theoretic reasoning in more traditional approaches.
We define the notion of Reedy fibration, and show that Reedy fibrant diagrams $\C \to \UU$ have limits in $\UU$ for a finite inverse category $\C$.  This is an internalised version of Shulman's results which can be found in~\cite{shulman:inverse-diagrams}.
In the second half of this section, we describe how to construct Reedy fibrant replacements, discuss classifiers for Reedy fibrations (a special case of which would be the type of semisimplicial types restricted to level $n$), and finally, we develop the theory of exponentials of diagrams.

\subsection{Inverse categories}\label{sec:inverse-categories}

In the following, when considering categories of diagrams, we will mostly focus on diagrams over \emph{inverse} categories. In contrast with the setup leading, for instance, to the Reedy model structure on simplicial presheaves over a Reedy category, in our setting it is necessary to impose the further restriction on the index category to be ``one-way''. We can either express this in terms of \emph{direct} categories and contravariant functors, or \emph{inverse} categories and covariant functors. We have arbitrarily picked the second representation. For readers familiar with Reedy categories, inverse categories are simply the special case of those obtained by requiring that there be no non-trivial positive arrows. For the sake of illustrating how to encode the details of the definition in two-level type theory, however, we choose to give an explicit definition.

Consider the category $\omega$.
Its objects are the natural numbers $\N$ and its morphisms are given by
\[
\omega(n,m) \defeq m \leq n,
\]
with ${\leq} : \N \to \N \to \Prop$ defined in the usual way.
This category is in fact a poset.
In general, Reedy categories can be defined in terms of arbitrary ordinals, but $\omega$ is enough for our purposes.
This lets us evade the question of how to encode more general ordinals in the theory.

\begin{definition}[inverse category] \label{def:inverse-category}
 We say that a category $\C$ is \emph{inverse} if there is a functor $\varphi \co \op{\C} \to \omega$ reflecting identities; \ie given $f \co x \to y$ with $\varphi(x) = \varphi(y)$, then $f$ is an identity, \ie $(x, y, f) = (x, x, \id{x})$ as elements of the type of arrows of $\C$.
 We call $\varphi$ the \emph{rank functor}, and say $x : \obj \C$ has rank $\varphi(x)$.
 We write $\C^{<n}$ for the full subcategory of $\C$ consisting of objects of rank less than $n$.
\end{definition}

Given category $\C$ and a functor $F \co \C \to \UU$, the \emph{category of elements} $F$, denoted $F / \C$ in analogy with the notation for coslices, is defined as usual via the Grothendieck construction.
In particular, objects are pairs $(a, x)$ where $a : \obj\C$ and $x : F(a)$ and morphisms from $(a, x)$ to $(b, y)$ are maps $f : a \to b$ such that $Ff(x) = y$.
We have a forgetful functor $F/\C \to \C$ that is discrete Grothendieck opfibration.
This is a cosieve exactly if $F$ is valued in propositions.

Let $\C$ now be an inverse category.
Then the forgetful functor $F/\C \to \C$ induces a rank functor also on $F/\C$.
This makes $F/\C$ into an inverse category.

Given a category $\C$, recall that the \emph{coslice} $x \slash \C$ over an object $x : \C$ has as objects pairs $(y, f)$ of $y : \C$ and a map $f \co x \to y$.
Morphisms between $(y, f)$ and $(y', f')$ are given maps $h \co y \to y'$ with $h \circ f = f'$ in $\C$.
The evident forgetful functor $x \slash \C$ is a discrete Grothendieck opfibration.

The formulation of Reedy fibrancy for diagrams over inverse categories makes use of the notion of \emph{matching object}. The following definition can be used (see \cref{def:matching-object}) to give a formulation of matching objects in terms of limits.

\begin{definition}[reduced coslice] \label{def:reduced-coslice}
Given a category $\C$ and an object $x : \C$, the \emph{reduced coslice} $x \sslash \C$ is the full subcategory of the coslice $x \slash \C$ on objects $(y, f)$ that are not equal to $(x, \id{x})$, \ie come with a proof $p : (y, f) \neq (x, \id{x})$ in the type of objects of $x \slash \C$ (equivalently, $(x, y, f) \neq (x, x, \id{x})$ in the type of morphisms of $\C$).
We write such an object as a triple $(y, f, p)$.
\end{definition}

By definition, we have a fully faithful forgetful functor $x \sslash \C \to x \slash \C$.
Composing it with the forgetful functor to $\C$, we obtain $\mathsf{forget} \co x \sslash \C \to \C$, sending an object $(y, f, p)$ to $y$.

\Cref{def:reduced-coslice} may seem slightly unnatural from a category-theory point of view. There is, however, a different perspective that can perhaps help to shed some light on the above definition, motivated by thinking of inverse categories as generalisations of $\deltop$.

For a given inverse $\C$ with an object $x : \C$, we have the \emph{representable diagram} $\C[x]$ defined by $\C[x]_y \defeq \C(y, x)$ (with evident functorial action).
Note that the coslice $x \slash \C$ is the category of elements of $\C[x]$.
Now, the representable diagram $\C[x]$ has a maximal non-trivial subfunctor consisting of all the elements of $\C[x]$ except the identity arrow $\id x \co x \to x$. We detail its construction below.

Recall that subfunctors of representable functors correspond to \emph{cosieves}, \ie families of propositions:
\begin{equation*}
  \varphi : \prd{i : \C} \parens{\C(x, i) \to \Prop},
\end{equation*}
such that, for maps $f \co x \to y$ and $g \co y \to z$, we have
\begin{equation} \label{eq:cosieve-property}
 \varphi_y(f) \to \varphi_z(g \circ f).
\end{equation}
The functor $\Phi \co \C \to \UU$ corresponding to such a cosieve $\varphi$ is then given by
\begin{equation*}
 \Phi(y) \defeq \sm{f : \C(x, y)} \varphi_y(f)
\end{equation*}
on objects, and rule~\cref{eq:cosieve-property} ensures that the functor can act on morphisms by function composition, thereby giving a subfunctor of $\C[x]$.

Now, if $\delta$ is defined by
\begin{equation*}
  \delta(y, f) \defeq (y \neq x),
\end{equation*}
the condition on $\C$ being inverse guarantees that $\delta$ is a well-defined cosieve. The corresponding subfunctor of $\C[x]$ is denoted by $\partial C[x]$, and referred to as the \emph{abstract boundary} of $x$. In the following, we will denote by $i^x \co \partial \C[x] \to \C[x]$ the obvious inclusion map.

The following result is an immediate consequence of the definitions.

\begin{lemma-qed}\label{lem:abstract-boundary-coslice}
For any inverse category $\C$ and object $x : \C$, the reduced coslice $x \sslash \C$ is (up to isomorphism) the category of elements of the abstract boundary $\partial\C[x]$.
Extending this, the category of elements functor sends the natural transformation $\partial\C[x] \to \C[x]$ (up to isomorphism) to the inclusion $x \sslash \C \to x \slash \C$.
\end{lemma-qed}

\subsection{Reedy fibrations} \label{sec:reedy-fibrations}

Part~\cref{lem:fib-closure:pb} of \cref{lem:fib-closure} makes it possible to construct fibrant limits of
certain ``well-behaved'' functors from inverse categories.
We follow Shulman~\cite{shulman:inverse-diagrams}, but our setup allows a slightly more general development.
We will give a short analysis after the proof of \cref{thm:fibrant-limits}.

In the following, we always assume that $\C$ is an inverse category.

\begin{definition}[matching object; see {\cite[Chapter.~11]{shulman:inverse-diagrams}}]\label{def:matching-object}
Let $X \co \C \to \UU$ be a functor.
For any $z : \C$, we define the \emph{matching object} $M_z^X$ to be the limit of
the composition
$z \sslash \C \xrightarrow{\mathsf{forget}} \C \xrightarrow{X} \UU$.
\end{definition}

Using the universal property of the limit defining the matching object, we obtain a map $X_z \to M_z^X$.
Abstracting over $X$, this gives a natural transformation $(-)_z \to M_z$ between functors from $[\C, \UU]$ to $\UU$.

Alternatively, we can formulate matching objects in terms of abstract boundaries:

\begin{lemma}\label{lem:matching-object-boundary}
Given a diagram $X$ over $\C$, and an object $z : \C$, the matching object $M^X_z$ is isomorphic to $\Nat(\partial\C[z], X)$, naturally in $X$.
Under this and the Yoneda isomorphism $X_z \cong \Nat(\C[z], X)$, the map $X_z \to M_z^X$ corresponds to $\Nat(i^z, X)$.
\end{lemma}

\begin{proof}
This is straightforward to verify directly.
Alternatively, one can observe that weighted limits of $X$ with weight $W$ can be implemented by taking the limit over the restriction of $X$ to category of elements of $W$, and this is natural in $W$.
\end{proof}

\begin{definition}[Reedy (trivial) fibration; see {\cite[Def. 11.3]{shulman:inverse-diagrams}}] \label{def:reedy-fibrations}
 Let $X, Y \co \C \to \UU$ be two diagrams.
 Further, let $p \co Y \to X$ be a natural transformation.
 We say that $p$ is a \emph{Reedy (trivial) fibration} if, for all $z : \C$, the canonical map
 \begin{equation*}
  Y_z \to M_z^Y \times_{M_z^X} X_z,
  \end{equation*}
 induced by the universal property of the pullback, is a (trivial) fibration.

 A diagram $X$ is said to be \emph{Reedy (trivially) fibrant} if the canonical map $X \to 1$ is a Reedy (trivial) fibration, where here $1$ denotes the diagram that is constantly the unit type.
\end{definition}

In terms of the natural transformation $(-)_z \to M_z$, we can express that $p$ is a Reedy fibration by saying that the \emph{pullback application} of $(-)_z \to M_z$ to $p$ is a fibration for all $z : \C$.
Here, we have applied the Leibniz calculus to the bifunctor $[[\C, \UU], \UU] \times [\C, \UU] \to \UU$.

We can use abstract boundaries to improve on this.
The \emph{pullback hom} $\widehat{\Nat}$ in the category of diagrams on $\C$ is the Leibniz construction of the natural transformation bifunctor $\Nat$.
Concretely, given natural transformations $f \co A \to B$ and $p \co Y \to X$ between diagram on $\C$, their pullback hom is the induced map
\[
\xymatrix@C+1cm{
  \Nat(B, Y)
  \ar[r]^-{\widehat{\Nat}(f, p)}
&
  \Nat(B, X) \times_{\Nat(A, X)} \Nat(A, Y)
\rlap{.}}
\]
With this, we can give an alternative characterisation of \cref{def:reedy-fibrations} that is occasionally useful.

\begin{lemma} \label{lem:reedy-fibration-weighted}
A natural transformation $p \co Y \to X$ between diagrams on $\C$ is a Reedy (trivial) fibration if and only if, for all objects $z : \C$, the map $\widehat{\Nat}(i^z,p)$ is a (trivial) fibration (where $i^z \co \partial\C[z] \to \C[z]$).
\end{lemma}

\begin{proof}
Immediate consequence of \cref{lem:matching-object-boundary}.
\end{proof}

Using \cref{def:reedy-fibrations}, we can make precise the claim that we can construct fibrant limits of certain well-behaved diagrams.

\begin{theorem}[see {\cite[Lemma 11.8]{shulman:inverse-diagrams}}] \label{thm:fibrant-limits}
Assume that $\C$ is an inverse category with a finite type of objects $\obj \C$ and that $X \co \C \to \UU$ is a Reedy (trivially) fibrant diagram. Then, $X$ has a (trivially) fibrant limit.
\end{theorem}

\begin{proof}
By induction on the cardinality of $\obj\C$. In the case $\obj \C \cong \Fin_0$, the limit is the unit type.

Otherwise, we have $\obj \C \cong \Fin_{n+1}$. Let us consider the rank functor
\[
\varphi \co \op{\C} \to \omega
.\]
Choose an object $z : \C$ such that $\varphi(z)$ is maximal; this is possible (constructively) due to the finiteness of $\obj \C$.
Let us call $\C'$ the category that we get if we remove $z$ from $\C$;
that is, we set $\obj {\C'} \defeq \sm{x : \obj \C} x \neq z$.
Clearly, $\C'$ is still inverse, and we have $\obj{\C'} \cong \Fin_n$.


Denote by $1$ the terminal diagram on $\C$, and by $1'$ the left Kan extension of the terminal diagram on $\C'$ to $\C$. So $1'$ has value $1$ on all objects except $z$, where it has value $0$. Similarly, let $X'$ be the restriction of $X$ on $\C'$.

The following square of diagrams on $\C$
\[
\xymatrix{
  \partial \C[z] \ar[r]\ar[d] &
  \C[z] \ar[d] \\
  1' \ar[r] & 1 \ulpullback
}
\]
is a pushout, as one can easily check levelwise, using the fact that equality on objects in $\C$ is decidable, since $|\C|$ is finite.

By applying the bifunctor $\Nat$ with $X$ on the right, we get that the square
\[
\xymatrix{
  \lim X \ar[r] \ar[d] \drpullback & X_z \ar[d] \\
  \lim X' \ar[r] & M^X_z
}
\]
is a pullback. The right vertical map is a fibration by Reedy fibrancy of $X$, hence the left vertical map is a fibration by part~\cref{lem:fib-closure:pb} of \cref{lem:fib-closure}. Now $\lim X'$ is fibrant by induction hypothesis, hence $\lim X$ is fibrant.
\end{proof}

In Shulman's work~\cite{shulman:inverse-diagrams},
the notion of an outer type does not exist, and every type that occurs is inner.
This means that every diagram is valued in inner types, and consequently, matching objects do not necessarily exist, so the definition of a Reedy fibration has to include the condition that all the matching objects involved are available.

If we wanted to precisely reproduce Shulman's definition of a Reedy fibration,
we would have to modify \cref{def:reedy-fibrations}:
first, that all occurring diagrams are valued in fibrant types; and second, that all occurring matching objects, obtained by coercing to outer types then taking a limit, happen to be fibrant.

In our setting, it is more natural to work with outer types from the beginning. We recover Shulman's notion in the special case where the outer types that occur are coercions of inner types.
Although diagrams of fibrant types ($\C \to \inn\UU$) are what we are ultimately interested in,
we can at no cost enlarge the class of diagrams that we talk about to those that are built out of outer types, possibly not even isomorphic to inner ones.
The ability to make this choice is an important feature of two-level type theory as a meta-theoretic reasoning framework, and it can help to obtain some slightly more general results compared to a traditional approach.

Similarly to the notation $\C^{<n}$ (see \cref{def:inverse-category}),
we will denote by $X|n$ the restriction of a diagram $X \co \C \to \UU$ to $\C^{<n}$.

The following lemma generalises \cref{lem:fib-closure} to Reedy fibrations.

\begin{lemma} \label{lem:reedy-fib-closure}
Reedy (trivial) fibrations are closed under:
\begin{enumerate}[label=(\roman*)]
\item \label{lem:reedy-fib-closure:pb}
pullbacks (see~\cite[Theorem 11.11]{shulman:inverse-diagrams}),
\item \label{lem:reedy-fib-closure:comp}
finite compositions.
\item \label{lem:reedy-fib-closure:prod}
finite products,
\end{enumerate}
\end{lemma}

\begin{proof}
As before, part~\cref{lem:reedy-fib-closure:prod} is a consequence of~\cref{lem:reedy-fib-closure:pb} and~\cref{lem:reedy-fib-closure:comp}.

For part~\cref{lem:reedy-fib-closure:pb}, we first give an abstract argument.
Recall from~\cite{riehl:reedy} that the functorial action of the pullback hom in its second argument preserves morphisms between arrows that are pullbacks.
It follows that $\widehat{\Nat}(i^z, p^* f)$ is a pullback of $\widehat{\Nat}(i^z, f)$ for any $z : \C$.
This reduces the claim to part~\cref{lem:fib-closure:pb} to \cref{lem:fib-closure}.

For the benefit of the reader, we also include a more direct proof, which is obtained by unfolding the abstract argument.
Suppose we have a pullback square:
\[
\xymatrix{
  Y \ar[r]^-g \ar[d]_{q} \drpullback &
  X \ar[d]^p \\
  A \ar[r]_-f &
  B,
}
\]
where $p$ is a Reedy fibration.  We want to show that $q$ is a Reedy fibration.
Now fix an object $n : \obj \C$, and consider the cube:
\[
\xymatrix@R=3mm@C=4mm{
& M_n^X \times_{M_n^B} B_n \ar[rr] \ar'[d][dd] \pullback{dr} & & M_n^X \ar[dd] \\
M_n^Y \times_{M_n^A} A_n \ar[ru] \ar[rr] \ar[dd] \pullback{dr} & & M_n^Y \ar[ru] \ar[dd] \\
& B_n \ar'[r][rr] && M_n^B \\
A_n \ar[ru] \ar[rr] && M_n^A \rlap{.} \ar[ru]
}
\]
The front and back faces are pullbacks by construction, and the right face is a pullback because it is the limit of a pullback square.
By a pullback pasting argument, the square determined by the front left and the back right vertical arrow is a pullback.
By a second pullback pasting argument, the left face is a pullback.

Now consider the diagram:
\[
\xymatrix{
  Y_n \ar[r] \ar[d] & X_n \ar[d] \\
  M_n^Y \times_{M_n^A} A_n \ar[r] \ar[d] \pullback{dr} & M_n^X \times_{M_n^B} B_n \ar[d] \\
  A_n \ar[r] & B_n \rlap{.}
}
\]
We have proved that the lower square is a pullback, and the outermost square is a pullback because limits in categories of diagrams are pointwise.  It follows that the upper square is a pullback for all $n : \C$, which shows that the map $Y \to A$ is a Reedy fibration.

For part~\cref{lem:reedy-fib-closure:comp}, recall from~\cite{riehl:reedy} that the pullback hom with a fixed map $i^z \co \partial\C[Z] \to \C[Z]$ of a finite composition is a finite composition of pullbacks of pullback homs of $i^z$ with the individual factors.
Thus, the claim reduces to parts~\cref{lem:fib-closure:comp,lem:fib-closure:pb} of \cref{lem:fib-closure}.
\end{proof}

Reedy fibrations admit ``change of base'' along discrete Grothendieck fibrations.

\begin{lemma} \label{reedy-change-of-base}
Let $H \co \C \to \D$ be a discrete Grothendieck fibration of inverse categories.
Then the restriction functor $H^* \co [\D, \UU] \to [\C, \UU]$ preserves Reedy (trivial) fibrations.
\end{lemma}

\begin{proof}
There are various way to check this.
Because $H$ is a discrete fibration, for $z : \C$, the induced functor $H \co z \slash \C \to H z \slash \D$ on slices is an isomorphism, and it restricts to an isomorphism $H \co z \sslash \C \to H z \sslash \D$ of reduced slices.
The Reedy fibrancy conditions for a map $p$ in $[\D, \UU]$ thus restrict to particular cases of the Reedy fibrancy condition for $H^* p$ in $[\C, \UU$].
The same holds for Reedy trivial fibrations.

Alternatively, one checks for $z : \C$ that left Kan extension $H_!$ along $H$ sends the map $i^z \co \partial\C[z] \to \C[z]$ in $[\C, \UU]$ to the map $i^{Hz} \co \partial\C[Hz] \to \C[hZ]$ (this always holds for the codomain, independently of requiring that $H$ is a discrete Grothendieck fibration).
Indeed, this follows from the following special case.
The map $i^z$ is itself the left Kan extension of the maximal non-trivial subobject of the terminal object in diagrams over $z \slash \C$, and similarly for the abstract boundary $i^{Hz}$ and $\D$.
The claim then follows from commutativity of the diagram
\[
\xymatrix{
  z \slash \C
  \ar[r]^-{H}_-{\simeq}
  \ar[d]
&
  H z \slash \D
  \ar[d]
\\
  \C
  \ar[r]^{H}
&
  \D
}
\]
of categories.
\end{proof}

\subsection{Reedy fibrant factorisations} \label{subsec:ReedyWFS}

The goal of the current section is to show that any functor $X$ from an \emph{admissible} inverse category $\C$ to $\UUfib$ has a Reedy fibrant replacement; that is, we can construct a Reedy fibrant diagram which is equivalent to $X$ in a suitable sense.
More generally, given any map of pointwise fibrant diagrams, we can always factor it into a (pointwise) equivalence followed by a fibration.

We emphasise that we are talking about diagrams that are valued in fibrant types.
An obvious reason why this is necessary is that the only notion of equivalence for general diagrams that we could use would be (strict) isomorphism, which clearly would be too strong.
But even if we came up with a weak notion of equivalence of general diagrams,
we could not expect it to be possible to start with \emph{any} diagram $\C \to \UU$ and derive a Reedy fibrant one from it which is in some sense equivalent.
Already in the special case that $\C$ is the discrete category with exactly one object, this would correspond to finding a \emph{fibrant replacement} of an outer type.
By \cref{thm:fibrant-replacement-inconsistent} such a fibrant replacement cannot be defined internally.

Our construction is an internalisation of the known analogous construction in traditional mathematics (see \eg~\cite[Lemma 11.10]{shulman:inverse-diagrams} or~\cite{riehl:reedy}).

\begin{definition}\label{def:anodyne}
We say that a map $i : A \to B$ is \emph{anodyne} if it has the left lifting property with respect to fibrations. More precisely, for all fibrations $p : Y \to X$, the pullback exponential $(B \to Y) \to (B \to X) \times_{A \to X} (A \to Y)$ has a section.
\end{definition}

The following characterisations of anodyne maps are straightforward to establish.

\begin{lemma-qed}\label{lem:anodyne-dependent}
A function $i : A \to B$ is anodyne if and only if for all families $X: B \to \UUfib$ of fibrant types over $B$, and all terms $t : \Pi_{a : A} X(i(a))$, we can find a term $t' : \Pi_{b : B} X(b)$ such that $t'(i(a)) = t(a)$ for all $a : A$.
\end{lemma-qed}

\begin{lemma-qed}\label{lem:relative-anodyne}
Assume we are given a commutative triangle
$$
\xymatrix{
  A \ar[rr]^i \ar[dr]_p & & B \ar[dl]^q \\
  & X,
}
$$
where the map $i$ is fibrewise anodyne, \ie for all $x : X$, the induced map $p^{-1}(x) \to q^{-1}(x)$ between the fibres is anodyne.
Then, $i$ is anodyne.
\end{lemma-qed}

Since anodyne maps are defined by a left lifting property, they exhibit familiar closure properties, of which we recall the following:

\begin{lemma-qed}\label{lem:anodyne-retract}
Anodyne maps are closed under retracts.
\end{lemma-qed}

When focusing on fibrant types, we can say more about anodyne maps.

\begin{lemma}\label{lem:anodyne-equivalence}
If $i : A \to B$ is an anodyne map between fibrant types, then $i$ is an equivalence.
\end{lemma}
\begin{proof}
Since $A$ is fibrant, we get a map $r: B \to A$ such that $r \circ i = \id{}$, in particular $r \circ i \inneq \id{}$. To show that $i \circ r \inneq \id{}$, it is enough, by \cref{lem:anodyne-dependent}, to show that $i \circ r \circ i \inneq i$, which follows from the first part.
\end{proof}

Trivial cofibrations are in particular anodyne maps, as the following lemma shows.

\begin{lemma}\label{lem:trivial-cofibration-anodyne}
If $i : A \to B$ is a trivial cofibration, then it is anodyne.
\end{lemma}
\begin{proof}
For any fibration $p$, the pullback exponential $\widehat{\exp}(i, p)$ is a trivial fibration, so it has a section by \cref{lem:trivial-fibration-section}.
\end{proof}

The following lemma provides an important example of anodyne map which is not necessarily a trivial cofibration.

\begin{lemma}\label{lem:singleton-anodyne}
Let $A$ be a fibrant type, and $a_0 : A$ a term. Then the map $i : 1 \to \sm{a: A} a \inneq a_0$ which selects the pair $(a_0, \refl{a_0})$ is anodyne.
\end{lemma}
\begin{proof}
Immediate consequence of the elimination rule for the fibrant identity type and its computation rule.
\end{proof}

\begin{corollary}\label{lem:mapping-cocylinder}
Let $f : A \to B$ be a function between fibrant types. Then there exists a fibrant type $N$, an anodyne map $i : A \to N$, and a fibration $p : N \to B$, such that $f = p \circ i$.
\end{corollary}
\begin{proof}
Let $N \defeq \sm{a : A} \sm{b : B} (f(a) \inneq b)$.  The function $i$ is given by
$i(a) \defeq (a, f(a), \refl{f(a)})$, while $p$ is simply the projection into the
component of type $B$. Then $p$ is a fibration by construction.

Now consider the projection $q : N \to A$ on the first component. If we use $q$ to regard $i$ as a map over $A$, then it is clear that the fibres of $i$ have the form of \cref{lem:singleton-anodyne} up to isomorphism, hence they are anodyne. It then follows from \cref{lem:relative-anodyne} that $i$ itself is anodyne, as required.
\end{proof}

We will refer to the type $N$ constructed in the proof of \cref{lem:mapping-cocylinder} as the \emph{mapping cocylinder} of $f$.

\begin{definition}
Let $\C$ be an inverse category.  We say that $\C$ is \emph{admissible} if, for
all $a : \C$, Reedy (trivially) fibrant diagrams over the reduced coslice $a \sslash \C$
have a (trivially) fibrant limit.
\end{definition}

Let $\C$ be an inverse category with a functor $F \co \C \to \UU$. Since the category of elements $F / \C \to \C$ is a discrete Grothendieck opfibration, the induced functor $(a, x) \sslash F / \C \to a \sslash \C$ is an isomorphism for any $(a, x) : F / \C$, and similarly for the (non-reduced) coslices. It follows that admissibility of $\C$ implies admissibility of $F / \C$.

\begin{lemma} \label{lem:admissible-reedy-levelwise-fibrant}
Let $\C$ an admissible inverse category, and let $X \co \C \to \UU$ be a Reedy fibrant diagram. Then $X$ is pointwise fibrant, and all the matching objects of $X$ are fibrant.
\end{lemma}
\begin{proof}
First observe that if $X$ is a Reedy fibrant diagram on $\C$, and $z : \C$ is any object, then $j^*X$ is a Reedy fibrant diagram on $z \sslash \C$, where $j \co z \sslash \C \to \C$ is the forgetful functor. This follows from the above observation that $j$ induces an isomorphism between (reduced) coslices of $z \sslash \C$ and (reduced) coslices of $\C$.

Therefore, if $\C$ is admissible, the matching object $M^X_z$ of $X$ is fibrant for all $z : \C$. Since furthermore the map $X_z \to M^X_z$ is a fibration by the assumption that $X$ is Reedy fibrant, we get that $X_z$ is fibrant as well.
\end{proof}

The main example of an admissible inverse category is $\deltop$, the opposite of the simplex category restricted to strictly monotone maps.
We can define it concretely as follows:
\begin{definition}[category $\deltop$] \label{def:simplex-category}
 The category $\deltop$ has natural numbers as objects, written $[0]$, $[1]$, $[2$], \ldots,
 and morphisms $\deltop([m],[k])$ are strictly increasing functions
 $\Fin_{k+1} \to \Fin_{m+1}$.
 Composition of morphisms is given by function composition.
\end{definition}
That $\deltop$ is admissible follows from \cref{thm:fibrant-limits} and the fact that all the reduced coslices
of $\deltop$ are finite.

The notion of anodyne can be extended to diagrams in a pointwise fashion.

\begin{definition} \label{def:diagram-anodyne}
Let $\C$ be a category, and $X, Y$ be diagrams on $\C$. A natural transformation $f \co X \to Y$ is said to be \emph{anodyne} if it is so pointwise, \ie for all $n : \C$, the function $f_n : X_n \to Y_n$ is anodyne.
\end{definition}

Similarly, we have a notion of equivalence of diagrams, defined pointwise.

\begin{definition} \label{def:equivalence-of-diagrams}
For a category $\C$, let $X, Y \co \C \to \UU$ be pointwise fibrant diagrams.
A natural transformation $f \co X \to Y$ is
said to be an \emph{equivalence} if, for all $n : \C$, the function
$f_n : X_n \to Y_n$ is a (``homotopy'') equivalence.
\end{definition}

\begin{lemma}\label{lem:diagram-anodyne-equivalence}
An anodyne natural transformation between pointwise fibrant diagrams is an equivalence.
\end{lemma}
\begin{proof}
Immediate consequence of \cref{lem:anodyne-equivalence}.
\end{proof}

\begin{lemma} \label{lem:limit-of-reedy-fibrations}
Let $p \co X \to Y$ be a (trivial) Reedy fibration of diagrams over an inverse category $\C$.
Suppose that Reedy (trivially) fibrant diagrams over $\C$ have (trivially) fibrant limits.
Then the limit $\lim p : \lim X \to \lim Y$ is a (trivial) fibration.
\end{lemma}

\begin{proof}
We only do the case of fibrations.

Let $y : \lim Y$ be an arbitrary element of the limit.  We can think of $y$ as a natural transformation $y \co 1 \to Y$.
Consider the following pullback of diagrams:
$$
\xymatrix{
  X[y] \ar[r] \ar[d] \drpullback &
  X \ar[d]^{p} \\
  1 \ar[r]^-y &
  Y.
}
$$

By part~\cref{lem:reedy-fib-closure:pb} of \cref{lem:reedy-fib-closure}, $X[y]$ is a Reedy fibrant diagram, hence its limit is fibrant by the assumption on $\C$.  Since limits commute with pullbacks, we get a pullback diagram:
$$
\xymatrix{
  \lim X[y] \ar[r] \ar[d] \drpullback &
  \lim X \ar[d]^{\lim p} \\
  1 \ar[r]^-y &
  \lim Y,
}
$$

showing that the fibre of $\lim p$ over $y$ is fibrant.
\end{proof}

\begin{lemma}\label{lem:reedy-fibrant-replacement}
Let $f \co X \to Z$ be a natural transformation of pointwise fibrant diagrams over an admissible inverse category $\C$. Then $f$ can be factored as:
$$
\xymatrix{
f \co X \ar[r]^i & Y \ar[r]^p & Z,
}
$$
where $i$ is anodyne, and $p$ is a Reedy fibration.
\end{lemma}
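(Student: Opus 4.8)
The plan is to construct the factorisation by induction on the rank $\varphi(n)$ of the objects $n : \obj\C$, building the diagram $Y$ together with the components $i_n : X_n \to Y_n$ and $p_n : Y_n \to Z_n$ one rank at a time. Suppose $Y$, $i$ and $p$ have already been defined on the full subcategory $\C^{<k}$, and that there $p$ is a Reedy fibration and $i$ a level-wise equivalence. To extend the construction to an object $n$ of rank $k$, I would first observe that the reduced coslice $n \sslash \C$ involves only objects of strictly smaller rank, so the matching objects $M_n^X$, $M_n^Y$, $M_n^Z$ and the maps between them are already available from the inductive data. Combining the canonical map $X_n \to M_n^X$ with the map $M_n^X \to M_n^Y$ induced by $i$, together with the component $f_n : X_n \to Z_n$, naturality of $f$ yields a single canonical map
\[
 X_n \longrightarrow M_n^Y \times_{M_n^Z} Z_n
\]
into the relative matching object. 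The idea is then to factor precisely this map with the mapping cocylinder of Lemma~\ref{lem:mapping-cocylinder}, obtaining $X_n \xrightarrow{i_n} Y_n \xrightarrow{q_n} M_n^Y \times_{M_n^Z} Z_n$ with $i_n$ an equivalence and $q_n$ a fibration; the first projection of $q_n$ supplies the action of $Y$ on the non-identity arrows out of $n$ (that is, the matching map $Y_n \to M_n^Y$), while its composite with the second projection defines $p_n : Y_n \to Z_n$.

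Before the cocylinder can be applied, I need to check that its codomain $M_n^Y \times_{M_n^Z} Z_n$ is essentially fibrant, and this is where the earlier lemmas do the real work. Since $p$ restricted to $n \sslash \C$ is a Reedy fibration by the induction hypothesis, and since admissibility of $\C$ guarantees that Reedy fibrant diagrams over $n \sslash \C$ have fibrant limits, Lemma~\ref{lem:limit-of-reedy-fibrations} applies to show that the induced map on limits $M_n^Y \to M_n^Z$ is a fibration; note that this conclusion does \emph{not} require $M_n^Y$ or $M_n^Z$ to be fibrant individually. Pulling this fibration back along $Z_n \to M_n^Z$, Lemma~\ref{lem:fibrant-pullback} shows that the projection $M_n^Y \times_{M_n^Z} Z_n \to Z_n$ is again a fibration, and as its base $Z_n$ is fibrant the total pretype is essentially fibrant, which is exactly what the mapping cocylinder requires. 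With this in place, $q_n$ is by construction the canonical comparison map $Y_n \to M_n^Y \times_{M_n^Z} Z_n$, so the fact that it is a fibration is precisely the Reedy fibration condition for $p$ at $n$, while each $i_n$ being an equivalence makes $i$ an equivalence of diagrams in the sense of Definition~\ref{def:equivalence-of-diagrams}. The base case, where $n$ has minimal rank and $n \sslash \C$ is empty, is subsumed by this step: there $M_n^Y$ and $M_n^Z$ are the unit type $\unit$, the map to factor is just $f_n$, and one recovers the plain mapping cocylinder.

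The step I expect to be the main obstacle is not any single homotopical argument, but the bookkeeping needed to assemble the rank-by-rank data into genuine functors and natural transformations: one must verify that the maps $Y_n \to M_n^Y$ really organise into a functor $Y : \C \to \UU$, and that $i$ and $p$ are natural. This is exactly the kind of coherence problem that would be delicate in a purely homotopical setting, but it is tractable here because the diagram laws are expressed with strict equality, which satisfies \textsc{uip}; consequently all the required identities hold on the nose and the assembly reduces to repeated use of the universal property of the matching-object limit, with no higher coherences to track. I would therefore spend most of the care in the write-up making explicit that $q_n$, viewed through its two projections, simultaneously provides the functorial action of $Y$ and witnesses both the naturality squares for $i$ and $p$ and the Reedy fibration condition at $n$.
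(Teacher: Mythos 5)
Your proposal is correct and follows essentially the same route as the paper's proof: a rank-by-rank induction in which each new component $Y_n$ is the mapping cocylinder (Lemma~\ref{lem:mapping-cocylinder}) of the canonical map $X_n \to M_n^Y \times_{M_n^Z} Z_n$, whose codomain is shown fibrant exactly as you do, via Lemma~\ref{lem:limit-of-reedy-fibrations} together with admissibility and then Lemma~\ref{lem:fibrant-pullback}. Your closing remarks on assembling the data into genuine functors and natural transformations via \textsc{uip} and the universal property of the matching-object limit make explicit a step the paper passes over as routine, which is a welcome addition rather than a divergence.
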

\begin{proof}
We will construct, by induction on the natural number $n$, a diagram $Y^{(n)}$ over $\C^{<n}$, and a factorisation of $f$:
$$
\xymatrix{
X|n \ar[r]^{i^{(n)}} & Y^{(n)} \ar[r]^{p^{(n)}} & Z|n,
}
$$
where $i^{(n)}$ is anodyne and $p^{(n)}$ is a Reedy fibration.

For $n=0$ there is nothing to construct, so assume the existence of $Y^{(n)}$,
and fix any object $x : \C$ of rank $n+1$.  The forgetful functor $j_x \co x \sslash \C \to \C$ factors through $\C^{<n}$, hence we can consider the composition
$Y^{(n)} \circ j_x$ and take its limit $L$.  Note that $L$ is not necessarily fibrant, but the map $L \to M_x^Z$ induced by $p^{(n)}$ is a fibration by \cref{lem:limit-of-reedy-fibrations} and the admissibility of $\C$.  By part~\cref{lem:fib-closure:pb} of \cref{lem:fib-closure}, we get a fibration $L \times_{M_x^Z} Z_x \to Z_x$, hence $L \times_{M_x^Z} Z_x$ is a fibrant type.

Now $f$, together with $i^{(n)}$, determine a map $X_x \to L \times_{M_x^Z} Z_x$.
Define $Y^{(n+1)}_x$ to be the mapping cocylinder of this map.  For any object $y$ of rank $n$ or less, define $Y^{(n+1)}_y$ as $Y^{(n)}_y$, and for any morphism $f \co x \to y$ in $\C$, the corresponding function $Y^{(n+1)}_x \to Y^{(n+1)}_y$ is given by the projection from the mapping cocylinder, followed by a map of the universal cone of the limit $L$.  The action of $Y^{(n)}$ on morphisms between objects of ranks $n$ or less is defined to be the same as that of $Y^{(n)}$.

It is easy to see that those definitions make $Y^{(n+1)}$ into a diagram that
extends $Y^{(n)}$ to objects of rank $n+1$.  We can also extend $i^{(n)}$ by
defining $i^{(n+1)}_x$ to be the embedding of $X_x$ into the mapping
cocylinder $Y^{(n+1)}_x$, which is anodyne by \cref{lem:mapping-cocylinder}.

Similarly, we define $p^{(n+1)}_x$ to be the composition of the projection from the mapping cocylinder with the map $L \times_{M_x^Z} Z_x \to Z_x$ defined above.
The fact that $p^{(n+1)}$ is a Reedy fibration follows immediately from the construction, since $L$ is exactly the matching object of $Y^{(n+1)}$ at $x$.

To conclude the proof, we glue together all the $Y^{(n)}$, $i^{(n)}$ and $p^{(n)}$ into
a single diagram $Y$ and natural transformations $i, p$.  Clearly, $p$ is
a Reedy fibration, and $i$ is an equivalence.
\end{proof}

\begin{corollary}\label{cor:reedy-fibrant-replacement}
Let $X$ be a pointwise fibrant diagram. Then there exists a Reedy fibrant diagram $Y$ and an anodyne natural transformation $\eta \co X \to Y$.
\end{corollary}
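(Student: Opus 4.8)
The plan is to derive this as an immediate consequence of the factorisation established in Lemma~\ref{lem:reedy-fibrant-replacement}. The key observation is that, by the definition of Reedy fibrancy in Definition~\ref{def:reedy-fibrations}, being Reedy fibrant is \emph{exactly} the property of being the domain of a Reedy fibration into the terminal diagram $\mathbf 1$. Hence it suffices to factor the canonical map from $X$ into $\mathbf 1$ as an equivalence followed by a Reedy fibration, which is precisely what the factorisation lemma provides.

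Concretely, first I would note that the constant diagram $\mathbf 1$ on the unit type $\unit$ is itself a type-valued diagram, since $\unit$ is fibrant; this is what allows Lemma~\ref{lem:reedy-fibrant-replacement} to be applied with $\mathbf 1$ as the target. Because $\mathbf 1$ is terminal in the category of diagrams over $\C$, there is a unique natural transformation $X \to \mathbf 1$. Applying Lemma~\ref{lem:reedy-fibrant-replacement} to this map (over the admissible inverse category $\C$) yields a factorisation $X \xrightarrow{i} Y \xrightarrow{p} \mathbf 1$, in which $i$ is an equivalence and $p$ is a Reedy fibration.

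It then remains only to read off the conclusion. Since $p : Y \to \mathbf 1$ is a Reedy fibration and $\mathbf 1$ is the terminal diagram, the diagram $Y$ is Reedy fibrant by Definition~\ref{def:reedy-fibrations}. Setting $\eta \defeq i$ gives the desired equivalence $X \to Y$, completing the argument.

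There is essentially no genuine obstacle here: the mathematical content of the corollary is carried entirely by Lemma~\ref{lem:reedy-fibrant-replacement}, and the corollary is really just the special case $Z \defeq \mathbf 1$ together with the unfolding of the definition of Reedy fibrancy. The only two points one must be careful about are that the target of the factorisation must itself be a type-valued diagram (which holds since $\unit$ is fibrant) and that the map $Y \to \mathbf 1$ produced by the factorisation is the canonical map referred to in the definition of Reedy fibrancy (which is automatic, as $\mathbf 1$ is terminal). Both are immediate.
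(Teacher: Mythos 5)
Your proof is correct and follows exactly the paper's route: the paper's own proof is the one-liner ``apply Lemma~\ref{lem:reedy-fibrant-replacement} with $Z$ the constant diagram on the unit type,'' and your argument is precisely this specialisation, spelled out with the (correct) observations that $\mathbf 1$ is type-valued and that Reedy fibrancy of $Y$ is by definition the statement that $Y \to \mathbf 1$ is a Reedy fibration.
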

\begin{proof}
Apply \cref{lem:reedy-fibrant-replacement} with $Z$ equal to the constant diagram on the unit type.
\end{proof}

\begin{corollary}\label{cor:reedy-anodyne}
Let $i \co A \to B$ be a natural transformation of pointwise fibrant diagrams. Then $i$ is anodyne if and only if it has the left lifting property with respect to Reedy fibrations, \ie for all Reedy fibrations $p \co Y \to X$, the induced map $\Nat(B, Y) \to \Nat(B, X) \times_{\Nat(A, X)} \Nat(A, Y)$ has a section.
\end{corollary}
\begin{proof}
First suppose that $i \co A \to B$ is anodyne, and let $p \co Y \to X$ be any fibration. Consider a commutative square
$$
\xymatrix{
  A \ar[r]^u \ar[d]_i & Y \ar[d]^p \\
  B \ar[r]_v & X.
}
$$
For all natural numbers $n$, we will construct a lift $w$ for the square of the restrictions of all the diagrams involved to $\C^{<n}$. The base of the induction is trivial, so suppose we have constructed such a lift for $n$, and let $x \in C$ be an object of degree $n + 1$. Consider the square:
$$
\xymatrix{
  A_x \ar[r]^u \ar[d]_i & Y_x \ar[d]^p \\
  B_x \ar[r]_-{(w,v)} & M^Y_x \times_{M^X_x} X_x,
}
$$
where the bottom map is obtained from the inductively constructed lift $w$, together with the bottom natural transformation $v$ of the original square.

The right map is a fibration by the assumption that $p$ is a Reedy fibration, which lets us construct a diagonal lift $w_x$ for the square. This gives an extension of the lift $w$ to all objects $x$ of degree $n$, and naturality of $w$ follows immediately from the commutativity of the bottom right triangle. Note that we have not used the assumption that $A$ and $B$ are pointwise fibrant for this direction.

Conversely, suppose that $i \co A \to B$ has the stated lifting property. Since $A$ and $B$ are pointwise fibrant, we can factor $i$ as $A \xrightarrow{j} N \xrightarrow{p} B$, where $j$ is anodyne and $p$ is a Reedy fibration, thanks to \cref{lem:reedy-fibrant-replacement}. At this point, a standard retract argument shows that $i$ must be anodyne. More explicitly, first consider the square
$$
\xymatrix{
  A \ar[r]^i \ar[d]_j & M \ar[d]^p \\
  B \ar[r]_{\id{}} & B,
}
$$
and use the lifting property of $i$ to get a natural transformation $r \co B \to M$. It then follows that $i$ is a retract of $j$, so in particular $i_x$ is a retract of $j_x$ for all objects $x : \C$. The conclusion now follows from \cref{lem:anodyne-retract}.
\end{proof}

\subsection{Classifiers for Reedy fibrations}
\label{sec:classifiers}

Let $\C$ be an admissible category. The goal of this section is to construct a type that classifies, in the appropriate sense, Reedy fibrant diagrams over $\C$. In certain cases, this type will itself be fibrant, giving a construction of a classifier for diagrams which is completely internal to the inner level.

The construction of this classifier makes use of a stricter notion of fibration than the one we have used so far.

\begin{definition}\label{def:strict-fibration}
Let $X$ be a type. A \emph{strict fibration} on $X$ is simply a family of inner types indexed over $X$.
\end{definition}

Any strict fibration $A$ over $X$ determines a map $Y \to X$, where $Y \defeq \smsimple X A$. We will sometimes abuse language and refer to a map $p : Y \to X$ itself as a strict fibration, with the convention that strict fibrations are always assumed to be equipped with a corresponding choice of a family $A$, which we will refer to as a \emph{strict fibration structure} on $p$.
Fibrations can then be characterised as those maps $Y \to X$ that are isomorphic over $X$ to some strict fibration. In particular, strict fibrations are fibrations.

\begin{lemma}\label{lem:strict-fib-fibrant}
If $X$ is a cofibrant type, then the type of strict fibrations over $X$ is fibrant.
\end{lemma}
\begin{proof}
Immediate consequence of the fact that this type is, by definition, $X \to \inn\UU$.
\end{proof}

Correspondingly, we get a notion of strict Reedy fibration, simply by replacing fibrations with strict fibrations in \cref{def:reedy-fibrations}.

\begin{definition}\label{def:strict-reedy-fibration}
Let $\C$ be an inverse category. A \emph{strict Reedy fibration} is a natural transformation $p \co Y \to X$, where $X$ and $Y$ are diagrams on $\C$, together with, for all $z : \C$, a choice of a strict fibration structure $\overline Y_z$ for the canonical map
\begin{equation}\label{eq:strict-fib-to-matching}
  Y_z \to M_z^Y \times_{M_z^X} X_z.
\end{equation}
A strictly Reedy fibrant diagram is a strict Reedy fibration $X \to 1$.
\end{definition}

It is crucial to observe that the notion of strict fibration is not invariant under isomorphism, since it uses strict equality of types. Consequently, \cref{eq:strict-fib-to-matching} has to be intended with a specific choice of pullbacks and limits in the target type. For concreteness, given a functor $F \co \mathcal{A} \to \UU$, we will always choose the limit of $F$ to be the type given by $\Nat(1, F)$.

Nevertheless, given a type $X$, the type of strict fibrations over $X$ is isomorphic to the type of functions $X \to \inn\UU$, so it is a representable functor of $X$, hence in particular if $X \cong Y$, then also the types of strict fibrations over $X$ and $Y$ are isomorphic.

Again, since strict fibrations are in particular fibrations, it follows that strict Reedy fibrations are Reedy fibrations.
Recall that both notions are equipped with structure, namely a choice of a family of inner types for all objects of the base category.
For our usage of Reedy fibration, this choice usually does not matter; however, for strict Reedy fibrations, it is important.

\begin{definition}
Given strict Reedy fibrations $X \to A$ and $Y \to B$, a \emph{morphism} between them is a pullback square
\[
\xymatrix{
  X \ar[r]\ar[d] \pullback{dr} & Y \ar[d] \\
  A \ar[r] & B,
}
\]
such that for all $z : \C$, the induced triangle
\begin{equation}\label{eq:strict-reedy-triangle}
\xymatrix{
  M_z^X \times_{M_z^A} A_z \ar[rr]\ar[dr] & &
  M_z^Y \times_{M_z^B} B_z \ar[dl] \\
  & \inn\UU
}
\end{equation}
commutes.
\end{definition}

With this definition of morphisms, the collection of strict Reedy fibrations on an inverse category $\C$ forms a category, which we will denote by $\reedyfib\C$.

\begin{lemma}\label{lem:reedyfib-discrete-fibration}
The functor $\reedyfib\C \to [\C, \UU]$ which maps a fibration $X \to A$ to its base $A$ is a discrete Grothendieck fibration.
\end{lemma}
\begin{proof}
First, let us prove that every morphism of $\reedyfib\C$ is Cartesian over $[\C, \UU]$. Let $X \to A$, $Y \to B$ and $Z \to C$ be strict Reedy fibrations, $f \co A \to B$, $g \co B \to C$ natural transformations, and suppose we are given morphisms

\begin{minipage}{.48\textwidth}
\[
\xymatrix{
  Y \pullback{dr} \ar[d]\ar[r] & Z \ar[d] \\
  B \ar[r]^g & C
}
\]
\end{minipage}
\begin{minipage}{.48\textwidth}
\[
\xymatrix{
  X \pullback{dr} \ar[d]\ar[r] & Z \ar[d] \\
  A \ar[r]^{g \circ f} & C,
}
\]
\end{minipage}

then it is clear we can uniquely construct a map $X \to Y$ that fits into a pullback square
\[
\xymatrix{
  X \pullback{dr} \ar[d]\ar[r] & Y \ar[d] \\
  A \ar[r]^f & B.
}
\]

To show that this is a morphism of strict Reedy fibrations, observe that in the induced diagram
\[
\xymatrix{
  M_z^X \times_{M_z^A} A_z \ar[r]\ar[dr] &
  M_z^Y \times_{M_z^B} B_z \ar[d]\ar[r] &
  M_z^Z \times_{M_z^C} C_z \ar[dl] \\
  & \inn\UU
}
\]
the outermost triangle and the right triangle commute, and hence the left triangle commutes as well.

Now, let $f \co A \to B$ be a natural transformation, and $Y \to B$ a strict Reedy fibration. We want to show that there exists a unique strict Reedy fibration $X \to A$ equipped with a morphism $Y \to B$. For all natural numbers $n$, we construct a strict Reedy fibration $X^{(n)} \to A|n$ on $\C^{<n}$, together with a morphism to $Y|n \to B|n$, with the property that $X^{(n+1)} | n = X^{(n)}$.

The base case is trivial as usual, hence we can assume that we have constructed $X^{(n)}$. If $z : \C$ has rank $n$, let $\overline X_z^{(n+1)}$ be the composition
\[ M_z^{X^{(n)}} \times_{M_z^A} A_z \to M_z^Y \times_{M_z^B} B_z \to \inn\UU, \]
and define $X_z \defeq \smsimple{M_z^{X^{(n)}} \times_{M_z^A} A_z}{\overline X_z}$. It is easy to verify that this defines an extension $X^{(n+1)}$ of $X^{(n)}$ to objects of degree $n$. The map $X^{(n+1)} \to A$ is a strict Reedy fibration, and the choice of $\overline X_z^{(n+1)}$ induces a morphism of strict Reedy fibrations $X^{(n+1)} \to Y$, essentially by construction.

As for uniqueness, note that the choice of $\overline X_z^{(n+1)}$ is forced by the requirement that the triangle \cref{eq:strict-reedy-triangle} commute, hence the fibration $X^{(n+1)} \to A$ and the morphism to $Y|n \to B|n$ are uniquely determined by the corresponding data for $n$. It follows that the whole fibration $X \to A$ and morphism to $Y \to B$, obtained by gluing all the $X^{(n)}$, are uniquely determined.
\end{proof}

The discrete Grothendieck fibration $\reedyfib\C \to [\C, \UU]$ determines a presheaf $\reedy$ on $[\C, \UU]$. For the next lemma, recall that, for a category $\A$ with pullbacks, a diagram $X \co I \to \A$ is said to have a \emph{van Kampen colimit} if the colimit of $X$ exists, and the reindexing functor induces an equivalence of categories
\[
  \A / \colim X \xrightarrow{\cong} \lim_i \A / X_i.
\]

\begin{lemma}\label{lem:reedy-continuous}
The functor $\reedy_\C \co \op{[\C, \UU]} \to \UU$ maps van Kampen colimits in $[\C, \UU]$ to limits in $\UU$.
\end{lemma}
\begin{proof}
Let $I$ be any small category, and $X \co I \to \reedyfib\C$ a diagram of strict Reedy fibrations. Let $X^i \to A^i$ denote the component of the diagram $X$ at an object $i : I$, and assume that the $A^i$ have a van Kampen colimit $B$. It is enough to show that there exists a unique cocone for $X$ in $\reedyfib\C$ over the colimit cocone of the $A^i$.

We will construct a strict Reedy fibration $Y \to B$, by induction on the rank of an object $z : \C$, and prove that $Y$ is a colimit of the $X^i$ over the corresponding colimit cocone of the $A^i$. Suppose that we have constructed $Y$ on objects of $\C^{<n}$, and let $z$ have rank $n$. From the fact that the colimit of the $A^i$ is van Kampen, it follows that
\[
\xymatrix{
  X^i|n \pullback{dr} \ar[r]\ar[d] & Y|n \ar[d] \\
  A^i|n \ar[r] & B|n, \\
}
\]
is Cartesian, and therefore in the diagram
\[
\xymatrix{
  M_z^{X^i} \times_{M_z^{A^i}} A^i \pullback{dr} \ar[r]\ar[d] &
  M_z^{X^i} \pullback{dr} \ar[r]\ar[d] &
  M_z^Y \ar[d] \\
  A^i_z \ar[r] &
  M_z^{A^i} \ar[r] &
  M_z^B \\
}
\]
both squares are Cartesian, hence so is the outer rectangle. This, and universality of colimits in $[\C, \UU]$, imply that
\[
M_z^Y \times_{M_z^B} B_z \cong \colim_i M_z^{X^i} \times_{M_z^{A^i}} A^i,
\]
hence the collection of inner families $\overline X^i_z : M_z^{X^i} \times_{M_z^{A^i}} A^i \to \inn\UU$ uniquely determines an inner family $\overline Y_z : M_z^Y \times_{M_z^B} B_z \to \inn\UU$, and if we define
\[ Y_z \defeq \smsimple{M_z^Y \times_{M_z^B} B_z}{\overline Y_z}, \]
it follows again from the van Kampen property that $Y_z$ is a colimit of the $X^i_z$, and that the corresponding squares
\[
\xymatrix{
  X^i_z \pullback{dr} \ar[r]\ar[d] & Y_z \ar[d] \\
  A^i \ar[r] & B, \\
}
\]
are Cartesian. It is then easy to verify that our choice of strict Reedy fibration structure on the extension of $Y$ to objects of rank $n$ makes the colimit injections $X^i \to Y$ into morphisms of strict Reedy fibrations.

As for uniqueness, let $Y' \to B$ be a strict Reedy fibration, together with morphisms
\[
\xymatrix{
  X^i \ar[r]\ar[d] & Y' \ar[d] \\
  A^i \ar[r] & B, \\
}
\]
forming a cocone for the diagram $X$.

Universality of colimits, applied to the identity map $\colim_i A^i \to B$, yields that $Y' \to B$ is a colimit of the $X^i \to A^i$, and therefore $Y$ and $Y'$ are isomorphic over $B$. From the fact that the colimit injections into $Y'$ are morphisms of strict Reedy fibrations, and the choice of strict Reedy fibration structure on $Y$, it then follows easily that the isomorphism $Y \cong Y'$ can be chosen to be a morphism of strict Reedy fibrations over the identity of $B$. Since strict Reedy fibrations form a discrete Grothendieck fibration over $[\C, \UU]$ by \cref{lem:reedyfib-discrete-fibration}, we get that $Y = Y'$ on the nose, as required.
\end{proof}

The next step is defining a \emph{universe} of strictly Reedy fibrant types in the category of diagrams on $\C$. We use the basic idea for the construction of a universe in presheaves \cite{hofmann-streicher:universe-lifting}, but we specialise it to strict Reedy fibrations rather than arbitrary natural transformations.

\begin{definition}
Let $\UV$ be the diagram on $\C$ defined by:
\[
  \UV_z \defeq \reedy_\C(\C[z]),
\]
with the obvious action on morphisms.
\end{definition}

\begin{lemma}\label{lem:reedy-classifier-universal-property}
For any diagram $B$ on $\C$, there is a natural isomorphism
\[
  \Nat(B, \UV) \cong \reedy_\C(B).
\]
\end{lemma}
\begin{proof}
We can write $B$ as a van Kampen colimit of representables
\[
  B \cong \colim_{\substack{z : \C \\ b : B_z}} \C[z].
\]
Since $\Nat(-, \UV)$ clearly maps colimits to limits, and $\reedy_\C$ maps van Kampen colimits to limits by \cref{lem:reedy-continuous}, we get that $\Nat(B, \UV) \cong \reedy_\C(B)$, and naturality easily follows.
\end{proof}

\begin{corollary}\label{cor:universe-reedy-fibrant}
If $\C$ is admissible, the diagram $\UV$ is Reedy fibrant.
\end{corollary}
\begin{proof}
Let $z : \C$ be any object. We have to show that the map
\[ \Nat(\C[z], \UV) \to \Nat(\partial\C[z], \UV), \]
obtained by applying $\widehat{\Nat}$ to the inclusion $\partial\C[z] \to \C[z]$ and the map $\UV \to 1$, is a fibration. By \cref{lem:reedy-classifier-universal-property}, this map is isomorphic to the restriction map
\[
\reedy_\C(\C[z]) \to \reedy_\C(\partial\C[z]).
\]
Given a strict Reedy fibration $X$ over $\partial\C[z]$, an extension of $X$ to a strict Reedy fibration over $\C[z]$ is uniquely determined by the choice of a strict fibration over $M_z^X$. Since $\C$ is admissible, $M_z^X$ is fibrant, hence cofibrant, and therefore the type of strict fibrations over $M_z^X$ is itself fibrant by \cref{lem:strict-fib-fibrant}.
\end{proof}

The type of strictly Reedy fibrant diagrams over $\C$ can now be recovered as the limit of $\UV$.

\begin{lemma}\label{lem:reedy-classifier}
Assume that $\C$ is admissible. The type $\lim \UV$ is isomorphic to the type of strictly Reedy fibrant diagrams on $\C$.
\end{lemma}
\begin{proof}
The type $\lim\UV$ is defined as $\Nat(1, \UV)$, which, by \cref{lem:reedy-classifier-universal-property}, is isomorphic to the type of strictly Reedy fibrant diagrams on $\C$.
\end{proof}

In particular, if $\C$ is admissible, and Reedy fibrant diagrams on $\C$ itself have fibrant limits, then we obtain a \emph{fibrant} type of strictly Reedy fibrant diagrams. This applies in particular to the restrictions $\left(\deltop\right)^{<n}$ of the semisimplicial category to finite level.

The connection between general Reedy fibrant diagrams and strict ones is made explicit by the following result.

\begin{lemma}
A diagram $X$ is Reedy fibrant if and only if there exists a strictly Reedy fibrant diagram $Y$ that is isomorphic to $X$.
\end{lemma}
\begin{proof}
Let $X$ be a Reedy fibrant diagram. We strictify $X$ by induction on the rank of the objects of $\C$. Assume $X$ already satisfies the strict Reedy condition for objects of rank lower than $n$. If $z : \C$ is an object of rank $n$, we know that the map $X_z \to M^X_z$ is a fibration, since $X$ is Reedy fibrant. It follows that there exists a family of inner types $T_z: M^X_z \to \inn\UU$ such that $X_z \cong \smsimple{M^X_z} T_z$.

Define a new functor $Y$ by setting $Y_z \defeq \smsimple{M^X_z} T_z$ for all $z$ of degree $n$, and $Y_z \defeq X_z$ otherwise. It is easy to see that we can extend $Y$ to a functor so that the obvious pointwise isomorphism $X_z \cong Y_z$ is natural. Furthermore, $Y$ is strictly Reedy fibrant up to rank $n$, by construction.
\end{proof}

\end{section}

\subsection{Exponents of diagrams}\label{sec:exponents}
In this section, we want to address Reedy fibrancy of exponentials, and fibrancy of types of natural transformations. As before, we fix an inverse category $\C$, and work with diagrams on $\C$. We begin by defining a notion of Reedy (trivial) cofibrations, analogous to that of \cref{def:cofibration}.

\begin{definition}\label{def:reedy-cofibration}
A natural transformation $f \co A \to B$ between diagrams is:
\begin{itemize}
\item
a \emph{Reedy cofibration} if $\widehat{\exp}(f, -)$ preserves Reedy fibrations and Reedy trivial fibrations,
\item
a \emph{Reedy trivial cofibration} if $\widehat{\exp}(f, -)$ sends Reedy fibrations to Reedy trivial fibrations.
\end{itemize}
A diagram $B$ is \emph{Reedy (trivially) cofibrant} if the natural transformation $0 \to B$ is a Reedy (trivial) cofibration.
\end{definition}

Given $z : \C$ , recall the boundary inclusion $i^z \co \partial \C[z] \to \C[z]$.

\begin{lemma} \label{boundary-pushout-product}
For any $z : \C$ and map $f \co A \to B$ in diagrams over $\C$, the pushout product $i^z \leib\times f$ exists.
\end{lemma}

\begin{proof}
Pushouts are constructed levelwise.
For $t : \C$, we have to construct a pushout
\[
\xymatrix{
  \partial \C[z]_t \times A_t
  \ar[r]
  \ar[d]
&
  \partial \C[z]_t \times B_t
  \ar@{.>}[d]
\\
  \C[z]_t \times A_t
  \ar@{.>}[r]
&
  P_t
  \pullback{ul}
\rlap{.}}
\]
If $z$ and $t$ have different degrees, then $\partial \C[z]_t \to \C[z]_t$ is an isomorphism, and we take $P_t \defeq \C[z]_t \times B_t$.
If $z$ and $t$ have the same degree, then $\partial \C[z]_t$ is empty.
In that case, the top map is an isomorphism, and we take $P_t \defeq \C[z]_t \times A_t$.
\end{proof}

We say that a natural transformation $f \co A \to B$ is a \emph{pointwise (trivial) cofibration}, if for all objects $z : \C$, we have that $f_z \co A_z \to B_z$ is a (trivial) cofibration.
The following theorem implies that, under a mild assumption on the index category $\C$, pointwise cofibrations are in particular Reedy cofibrations.

\begin{theorem} \label{thm:diagram-cofibration-exp}
Let $f \co A \to B$ be a pointwise cofibration of diagrams over $\C$.
Then $f$ is a Reedy cofibration, \ie given a Reedy (trivial) fibration $p$, the pullback exponential
\[
\xymatrix@C+0.8cm{
  [B, Y] \ar[r]^-{\widehat{\exp}(f, p)} & [B, X] \times_{[A, X]} [A, Y]
}
\]
of $p$ with $f$ is a Reedy (trivial) fibration.
\end{theorem}

\begin{proof}
Let $z : \C$ be any object.
The forgetful functor $F \co z \slash \C \to \C$ is a discrete Grothendieck fibration.
The induced restriction functor $F^* \co [\C, \UU] \to [z \slash \C, \UU]$ has a left adjoint, left Kan extension along $F$, sending a diagram $X$ over $z \slash \C$ to the diagram $F_! X$ over $\C$ defined by $(F_! X)_t \defeq \sm{f : \C(z, t)} X_{(t, f)}$.
We have
\begin{equation} \label{thm:diagram-cofibration-exp:0}
F_! X \times Y \cong F_! (X \times F^* Y)
\end{equation}
naturally in $X : [z \slash \C, \UU]$ and $Y : [\C, \UU]$.
Abstractly, this is a consequence of $F^*$ preserving exponentiation.
Explicitly, it unfolds at level $t$ to the natural isomorphism
\[
(\sm{f : \C(z, t)} A_{(t, f)}) \times B_t
\cong
\sm{f : \C(z, t)} (A_{(t, f)} \times B_t)
.\]
Since $F_!$ preserves colimits, the isomorphism~\cref{thm:diagram-cofibration-exp:0} lifts to an isomorphism
\begin{equation} \label{thm:diagram-cofibration-exp:1}
F_! u \leib\times v \cong F_! (u \leib\times F^* v)
\end{equation}
naturally in $u$ a map in $[z \slash \C, \UU]$ and $v$ a map in $[\C, \UU]$ whenever the involved pushout products exist.

Let now $p \co Y \to X$ be a Reedy fibration.
We will argue that $\widehat{\exp}(f, p)$ is again a Reedy fibration.
The case of Reedy trivial fibrations is analogous.
Using \cref{lem:reedy-fibration-weighted}, we have to show that $\widehat{\Nat}(i^z, \widehat{\exp}(f, p))$ is a fibration.
Recall from the proof of \cref{reedy-change-of-base} that $i^z \co \partial \C[z] \to \C[z]$ is the image of
\[
\xymatrix@C+0.5cm{
  \partial (z \slash \C)[(z, \id{z})] \ar[r]^-{i^{(z, \id{z})}} & \partial (z \slash \C)[(z, \id{z})]
}
\]
under $F_!$.
We calculate
\begin{align*}
\widehat{\Nat}(i^z \leib\times f, p)
&\cong
\widehat{\Nat}(F_! i^{(z, \id{z})} \leib\times f, p)
\\&\cong
\widehat{\Nat}(F_! (i^{(z, \id{z})} \leib\times F^* f), p)
\\&\cong
\widehat{\Nat}(i^{(z, \id{z})} \leib\times F^* f, F^* p)
,\end{align*}
using~\cref{thm:diagram-cofibration-exp:1} in the second step.

We denote $T$ the functor $1 \to z \slash \C$ selecting the initial object $(z, \id{z})$.
Restriction $T^* \co [z \slash \C, \UU] \to \UU$ has left adjoint $T_!$ the constant functor.
Note that the unit $\Id \to T^* T_!$ of the adjunction $T_! \dashv T^*$ is invertible, \ie the $T_!$ is a coreflective embedding.
Its counit induces a map
\begin{equation} \label{thm:diagram-cofibration-exp:2}
\xymatrix{
  i^{(z, \id{z})} \leib\times T_! T^* F^* f
  \ar[r]
&
  i^{(z, \id{z})} \leib\times F^* f
}
\end{equation}
of arrows.
We claim that it is cocartesian, \ie forms a pushout square.
This we check at each level $(t, f)$ of $z \slash \C$.
If $t$ has degree less than $z$, then $(i^{(z, \id{z})})_{(t, f)}$ is an isomorphism.
Isomorphisms are absorbing for the pushout product, so both source and target of~\cref{thm:diagram-cofibration-exp:2} at stage $(t, f)$ are isomorphisms, making the square a pushout.
Otherwise, we have $(t, f) = (z, \id{z})$.
Evaluation at $(z, \id{z})$ is given by $T^*$, and $T_! T^* F^* f \to F^* f$ becomes invertible upon application of $T^*$.
Thus, the map~\cref{thm:diagram-cofibration-exp:2} at stage $(t, f)$ is an isomorphism, in particular cocartesian.

The functorial action of the pullback hom in its first argument takes cocartesian maps in its first argument to cartesian maps, \ie sends pushout squares to pullback squares.
Thus, the map $\widehat{\Nat}(i^{(z, \id{z})} \leib\times F^* f, F^* p)$ is a pullback of $\widehat{\Nat}(i^{(z, \id{z})} \leib\times T_! T^* F^* f, F^* p)$.
Using part~\cref{lem:fib-closure:pb} of \cref{lem:fib-closure}, it thus suffices to show that this map is a fibration.
We calculate
\begin{align*}
\widehat{\Nat}(i^{(z, \id{z})} \leib\times T_! T^* F^* f, F^* p)
&\cong
\widehat{\Nat}(T_! T^* F^* f, \widehat{\exp}(i^{(z, \id{z})}, F^* p))
\\&\cong
\widehat{\Nat}(T^* F^* f, T^* \widehat{\exp}(i^{(z, \id{z})}, F^* p))
\\&\cong
\widehat{\exp}(f_z, \widehat{\Nat}(i^{(z, \id{z})}, F^* p))
\\&\cong
\widehat{\exp}(f_z, \widehat{\Nat}(i^z, p))
\end{align*}
using exponential transposition, the adjunction $T_! \dashv T^*$, the adjunction $F_! \dashv F^*$.
By assumption, $f_z$ is a cofibration.
It thus remains to show that $\widehat{\Nat}(i^z, p)$ is a fibration.
This holds by \cref{lem:reedy-fibration-weighted}.
\end{proof}

\begin{lemma} \label{lem:diagram-cofibration-nat}
Let $f \co A \to B$ be a Reedy cofibration and $p \co Y \to X$ a Reedy fibration of diagrams over $\C$.
Suppose further that all Reedy fibrant diagrams on $\C$ have fibrant limits.
Then the pullback hom
\[
\xymatrix@C+0.8cm{
  \Nat(B, Y) \ar[r]^-{\widehat{\Nat}(f, p)} & \Nat(B, X) \times_{\Nat(A, X)} \Nat(A, Y)
}
\]
of $p$ with $f$ is a fibration.
\end{lemma}

\begin{proof}
By the assumption on $f$, the pullback exponential $\widehat{\exp}(f, p)$ is a Reedy fibration.
The pullback hom $\widehat{\Nat}(f, p)$ is just the limit functor applies to this map.
Since Reedy fibrant diagrams on $\C$ have fibrant limits, it is a fibration by \cref{lem:limit-of-reedy-fibrations}.
\end{proof}

\begin{corollary-qed} \label{cor:diagram-nat-fibrant}
Let $X$ be a Reedy fibrant diagram, and $A$ a pointwise cofibrant diagram.
Then $[A, X]$ is Reedy fibrant.
If furthermore Reedy fibrant diagrams on $\C$ have fibrant limits, then $\Nat(A, X)$ is fibrant.
\end{corollary-qed}

\subsection{Complete semi-Segal types}

The basic facts about diagrams developed in the previous subsections allow us to use two-level type theory to formulate a variation of the classically well-established theory of \emph{Segal spaces}.

Normally, Segal spaces are employed to reason about higher categorical structures such as $(\infty,1)$-categories in a homotopy-invariant fashion. In other words, Segal spaces are a model of $(\infty,1)$-categories that can be constructed purely in terms of existing models of spaces ($\infty$-groupoids) and their homotopy theory.

By contrast, a model such as the one based on \emph{quasicategories} \cite{boa-vog:algebraic-structures, joyal:quasi} works by encoding higher categories in terms of their simplicial nerves. The difference is that the homotopy-theoretic features of complete Segal spaces are \emph{inherited} from the environment in which they are defined (\ie spaces), whereas for quasicategories, they have to be imposed by an ad hoc construction (the Joyal model structure on simplicial sets).

In the setting of type theory, only the first kind of approach is possible, if we are aiming for a formulation of higher category theory that can talk about higher categorical structures within the theory. In this subsection, we will see how to import the ideas of the framework of complete Segal spaces into two-level type theory.

Complete Segal spaces \cite{rezk:segal-spaces} are first of all simplicial objects. Unfortunately, the fact that our development of Reedy fibrancy is restricted to inverse categories, rather than more general Reedy categories, implies that we have to limit ourselves to the semisimplicial case.

It has been shown by Harpaz \cite{harpaz:quasi}, that semisimplicial spaces satisfying a Segal condition and a version of the completeness condition form a model of $(\infty,1)$-categories, by constructing a Quillen equivalence between the appropriate Bousfield localisations of marked semisimplicial spaces and simplicial spaces.

Inspired by Harpaz's construction, we give the following definitions in two-level type theory:

\begin{definition}
Let $\tau \co F \to G$ be a Reedy cofibration between semisimplicial types (\ie diagrams over the semisimplicial category $\deltop$), and $X$ a Reedy fibrant semisimplicial type. Assume that $G$ is bounded, in the sense that it is the left Kan extension of a diagram over $\parens{\deltop}^{< n}$ for some $n$. We say that $X$ is \emph{local} with respect to $\tau$ if the induced map
\begin{equation}\label{eq:locality}
\Nat(G, X) \to \Nat(F, X)
\end{equation}
is a trivial fibration.
\end{definition}

Note that the assumptions on $\tau$ and $G$ imply that the map \cref{eq:locality} is already a fibration.

\begin{definition} \label{def:segal-condition}
A \emph{semi-Segal type} is a Reedy fibrant semisimplicial type $X$ that is local with respect to all \emph{inner horn inclusions} $\horn n k \to \simplex n$, with $0 < k < n$.
\end{definition}

Since $\deltop$ is inverse, $\simplex n$ is bounded. Furthermore, $\horn n k \to \simplex n$ is a pointwise decidable monomorphism, hence a pointwise cofibration by \cref{cor:inj1-is-cofib}, and therefore a Reedy cofibration by \cref{thm:diagram-cofibration-exp}.

Equivalently (and perhaps more elegantly), we can define the semi-Segal condition as locality with respect to the pushout corner map in the image under Yoneda of
\begin{equation}
\label{segal-elegant-condition-pushout}
\begin{gathered}
\xymatrix{
  [0]
  \ar[r]^-{\braces{0}}
  \ar[d]^{\braces{a}}
&
  [b]
  \ar[d]^{\braces{a,\ldots,a+b}}
\\
  [a]
  \ar[r]^-{\braces{0,\ldots,a}}
&
  [a+b]
}
\end{gathered}
\end{equation}
for all $a, b \geq 0$, that is:
\begin{equation}
\label{segal-elegant-condition-map}
\begin{gathered}
\xymatrix@C+0.2cm{
  \simplex{a} +_{\simplex{0}} \simplex{b} \ar[r]^-{c_{a,b}} & \simplex{a+b}
}
\end{gathered}
\end{equation}
Here, we can equivalently restrict to $a = 1$.

However, semi-Segal types do not constitute the correct notion of $(\infty,1)$-category that we are after, for essentially two reasons. Firstly, they are not \emph{complete}, meaning that their type of vertices $X_0$ carries extra (non-categorical) information, \ie, they are not univalent; and secondly, they do not necessarily have identity arrows.
They model $(\infty,1)$-pre-semicategories.

It turns out, however, that one can define a notion of \emph{equivalence} in a semi-Segal type, and using equivalences one can solve both problems at the same time.

For a semi-Segal type $X$, let $\overline X(x, y)$ be the type of edges that have vertices $x, y : X_0$ as endpoints.
This is fibrant by the Reedy fibrancy condition.
Thanks to locality with respect to the horn inclusion $\horn 2 1 \to \simplex 2$, we get a \emph{composition map}:
\[
-\circ- : \overline X(y, z) \times \overline X(x, y) \to \overline X(x, z).
\]

\begin{definition} \label{def:semi-segal-equivalence}
Let $X$ be a semi-Segal type. An \emph{equivalence} in $X$ is an edge $f : \overline X(x, y)$ such that for all vertices $z$, the functions $f \circ - : \overline X(z, x) \to \overline X(z, y)$ and $- \circ f : \overline X(y, z) \to \overline X(x, z)$ are equivalences (of fibrant types).
\end{definition}

Since being an equivalence is a fibrant proposition, we get that equivalences in a semisimplicial type $X$ form a fibrant type $E$, which we can think of as a subtype of $X_1$. We are now ready for the main definition.

\begin{definition}\label{def:infinity-category}
A univalent $(\infty,1)$-category is a semi-Segal type such that the source map $E \to X_0$ is an equivalence.
\end{definition}

By source map in \cref{def:infinity-category} we mean the function mapping every equivalence $f : \overline X(x, y)$ to the source vertex $x$. The condition of \cref{def:infinity-category} is sometimes referred to as the \emph{completeness condition}. One way to think of it is as a formulation of univalence internal to $X$. Since \cref{def:infinity-category} gives the only notion of $(\infty,1)$-category that we consider here, we drop the attribute \emph{univalent} for simplicity.

Note that \cref{def:segal-condition,def:semi-segal-equivalence,def:infinity-category} are all invariant under (levelwise) equivalence of Reedy fibrant semisimplicial types.

Two of the current authors have checked in detail that this definition is well-behaved, and equivalent to the manual definition that one might expect, for the truncated special cases of univalent ordinary categories and (2,1)-categories~\cite{capKra_semisegal}.
It is out of the scope of this paper to develop the theory of $(\infty,1)$-category in two-level type theory. Therefore, we limit ourselves to sketching some basic examples of $(\infty,1)$-categories, to give a taste of how our definition can be employed in practice.

For a given category $\C$,
let $N_+(\C)$ be the \emph{semisimplicial nerve} of $\C$, \ie the semisimplicial type whose $n$-simplices are given by functors $[n] \to \C$, where $[n]$ denotes the ordinal with $n + 1$ elements, regarded as a category.
Observe that the square~\eqref{segal-elegant-condition-pushout} is a pushout in categories.
It follows that $N_+(\C)$ sends it to a pullback
\begin{equation}
\label{nerve-strict-segal}
\begin{gathered}
\xymatrix{
  N_+(\C)([a+b])
  \ar[r]
  \ar[d]
  \pullback{dr}
  &
  N_+(\C)([a])
  \ar[d]
  \\
  N_+(\C)([b])
  \ar[r]
  &
  N_+(\C)([0])
\rlap{.}}
\end{gathered}
\end{equation}
This is a strict version of the semi-Segal condition.
We shall see below that under sufficient fibrancy conditions, it also forms a homotopy pullback, hence makes the Reedy fibrant replacement of $N_+(\C)$ a semi-Segal type.

\begin{lemma} \label{nerve-right-fibrant}
Let $\C$ be a category with slices that have fibrant types of objects.
Then $N_+(\C) \co \deltop \to \UU$ sends the map $\braces{a,\ldots,a+b} \co [b] \to [a+b]$ to a fibration.
\end{lemma}

\begin{proof}
By closure of fibrations under composition, it suffices to check the case $a = 1$.
The map in question is the left map in~\eqref{nerve-strict-segal}.
The right map is the \emph{target map} $N_+(\C)([1]) \to N_+(\C)([0])$ induced by $\braces{1} \co [0] \to [1]$.
This is a fibration by assumption: its fiber over $x : \obj \C$ is isomorphic to $\obj{\C \slash x}$.
The claim follows since fibrations are closed under pullback.
\end{proof}

\begin{corollary} \label{nerve-pointwise-fibrant}
Let $\C$ be a category such that $\C$ and all its slices have fibrant types of objects.
Then $N_+(\C) \co \deltop \to \UU$ is valued in fibrant types.
\qed
\end{corollary}

From a pointwise fibrant semisimplicial type, we obtain a Reedy fibrant semisimplicial type (levelwise) equivalent to it using \cref{lem:reedy-fibrant-replacement}.

\begin{lemma} \label{nerve-segal-type}
Let $\C$ be a category such that $\C$ and all its slices have fibrant types of objects.
Let $j \co N_+(\C) \to X$ be a Reedy fibrant replacement.
Then $X$ is a semi-Segal type.
\end{lemma}

\begin{proof}
Using the map~\eqref{segal-elegant-condition-map}, we show that $\Nat(c_{a,b}, X)$ is an equivalence for all $a, b$.
Consider the following square:
\[
\xymatrix@C+2cm {
  \Nat(\simplex{a+b}, N_+(\C))
  \ar[r]^-{\Nat(\simplex{a+b}, j)}
  \ar[d]^{\Nat(c_{a,b}, N_+(C))}
&
  \Nat(\simplex{a+b}, X)
  \ar[d]^{\Nat(c_{a,b}, X)}
\\
  \Nat(\simplex{a} +_{\simplex{0}} \simplex{b}, N_+(\C))
  \ar[r]^-{\Nat(\simplex{a} +_{\simplex{0}} \simplex{b}, j)}
&
  \Nat(\simplex{a} +_{\simplex{0}} \simplex{b}, X)
}
\]
By Yoneda, the left map is equivalently the pullback corner map in~\eqref{nerve-strict-segal}, hence is invertible.
The top left object is fibrant by \cref{nerve-pointwise-fibrant}, hence so is the bottom left object.
Note that $j$ evaluates to an equivalence at all of $[0], [a], [b], [a+b]$.
Thus, the top map is an equivalence.
The bottom map is the induced map between the pullbacks of two cospans of fibrant objects with one leg a fibration (by \cref{nerve-right-fibrant}).
The induced morphism between these cospans is (levelwise) an equivalence.
By an argument analogous to the gluing lemma for fibration categories~\cite[Lemma~1.4.1, part~(2)]{radulescu:cofibration-categories}, the induced map between the two pullbacks is also an equivalence.
Finally, since all other maps in the square are equivalences, so is $\Nat(c_{a, b}, X)$.
\end{proof}

Let $\C$ be a category with fibrant types of objects and morphisms (between any two given objects).
As a consequence of \cref{nerve-segal-type}, any Reedy fibrant replacement $X$ of $N_+(\C)$ is a semi-Segal type.
We may start the construction of such a Reedy fibrant replacement with $X_0 \defeq \obj \C$ and $X_1(x, y) \defeq \C(x, y)$.
It is then easy to check that the composition map for $X$ defined before \cref{def:semi-segal-equivalence} agrees with the composition of $\C$.
In particular, an edge in $X$ is an equivalence exactly if the corresponding morphism in $\C$ is a \emph{homotopy equivalence} (invertible up to inner equality).
It follows that $X$ is univalent exactly if $\C$ is \emph{wildly univalent}, that is, the canonical map from $\obj \C$ to the type of equivalences of $\C$ is an equivalence.

\smallskip

As an important special case of this construction, we can take for $\C$ the category of fibrant types.
The resulting $(\infty,1)$-category $\UUU$ (large, but locally small) can be regarded as a classifier for families of types over $(\infty,1)$-categories.

One way to construct a universal fibration $\UUU^\bullet$ over $\UUU$ is as follows.
Let $\UU^\bullet$ denote the category of fibrant types with an element (note that morphisms preserve the element strictly).
We have a forgetful functor $\UU^\bullet \to \UU$ that is Reedy fibrant on underlying graphs.
Taking a relative Reedy fibrant replacement, we obtain the following square:
\[
\xymatrix{
  \UUU^\bullet
  \ar[r]
  \ar@{->>}[d]
&
  N_+(\UU^\bullet)
  \ar[d]
\\
  \UUU
  \ar[r]
&
  N_+(\UU)
\rlap{.}}
\]
By a relative version of \cref{nerve-segal-type} for homotopy left fibrations, the fibration $\UUU^\bullet \fib \UUU$ is a left fibration.
One may go on to show that it is a classifier for left fibrations with small fibers.
This gives one way to adapt the Grothendieck construction for presheaves of $(\infty,1)$-categories to our settings.
(An alternative is to directly define the universe of left fibrations and check the semi-Segal condition and univalence).

\begin{section}{Conclusions}
\label{sec:conclusions}

We believe that two-level type theory is a suitable framework for expressing and proving results which, in conventional homotopy type theory, require externally fixed data that one wishes to keep as variable as possible.
We have demonstrated that this approach can be used effectively to express Shulman's results on diagrams over inverse categories~\cite{shulman:inverse-diagrams}.
Starting from there, we have suggested the very beginning of an internal development of a theory of $(\infty, 1)$-categories.
We expect that such a theory is helpful for other constructions which make use of the fact that types and universes are, naturally, higher categories; a short discussion is available in \cite{kraus:semi}.

Examples for existing results which can be expressed in and benefit from our framework of higher categories can be found in our previous work \cite{nicolai:thesis,kraus_generaluniversalproperty,kraus-sattler:spacediagrams}.
These results use semisimplicial types to express large or even infinite towers of coherences, and it is unknown how to express such towers in standard settings of homotopy type theory.
If we do these constructions in our suggested setting, it is important which precise version of two-level type theory we use.
If we only use ``basic'' two-level type theory without any of the strengthenings discussed in \cref{subsec:Strengthenings}, then the conservativity result means that we immediately get the corresponding result in homotopy type theory.
For the results cited above, this is the case if the size of required coherence towers is bounded, with a bound given as an outer natural number; then, the construction works in homotopy type theory, with the bound fixed externally.
An example for this situation is \cite[Theorem 8.9.6]{nicolai:thesis}.
Other results however need one or more of the strengthenings of \cref{subsec:Strengthenings}, and for those, it is in general unknown whether they can be expressed in usual settings of homotopy type theory.
In case of the mentioned work, an assumption made for some results is that limits of Reedy-fibrant towers are fibrant~\cref{axiom:reedy-limits}, but we expect that this assumption can alternatively be substituted by the axiom~\cref{axiom:nat-cofibrant} that the outer natural numbers are cofibrant or even fibrant~\cref{axiom:conversion-pres-positive}.
To give an example, \cite[Theorem 8.8.5]{nicolai:thesis} depends on such an assumption.
Possible future directions include developing a richer theory of $(\infty,1)$-categories that includes
standard concepts such as limits and colimits, and potentially based on that, a treatment of the internal semantics of higher inductive types as for example specified by~\cite{kaposi-kovacs:signatures-for-hiits}.

As a proof of concept, we have implemented some parts of our paper (with the main result being \cref{thm:fibrant-limits}) in the proof assistant Lean.%
\footnote{Our development uses Lean 2: \url{https://github.com/leanprover/lean2/}}
Since Lean does not support two-level type theory directly, we have used type classes to keep track of and automatically propagate fibrancy constraints. An overall idea of the implementation is suitable for most existing proof assistants: we work in a type theory with universes of outer types (\ie where \textsc{uip} holds), outer types correspond to the ordinary types of the proof assistant, while fibrant types are represented as types ``tagged'' with the extra structure of being fibrant.
The role of the outer equality is played by the ordinary propositional equality of the proof assistant (which, thanks to \UIP, is indeed propositional in the sense of HoTT).
We postulate the fibrant equality type, its elimination rule $J$ and fibrancy preservation rules for $\Pi$ and $\Sigma$ resulting from the rules in~\cref{sec:syntax}. The usual computation (or $\beta$-) rule for $J$ is defined using outer equality  --- the propositional equality of the proof assistant --- and not judgemental equality.
This means that this computation does a priori not happen automatically, and explicit rewrites along the propositional $\beta$-rules are needed in proof implementations when working in the inner level.
In our development, besides the general two-level framework, we have implemented machinery required to define Reedy fibrant diagrams and have fully formalised a proof of \cref{thm:fibrant-limits}.
We did not find the lack of a definitional $\beta$-rule for $J$ in the inner fragment to affect the internalisation of results on the theory of Reedy fibrant diagrams we have developed.
For other formalisation approaches to 2LTT, we refer to \cref{subsect:context} above.
\end{section}

\subsection*{Acknowledgments}
We would like to thank Benedikt Ahrens, Thorsten Altenkirch, Simon Boulier, and Michael Shulman for many interesting discussions and insightful comments.
We also thank the anonymous referees for very helpful comments.


\bibliographystyle{alpha}
\bibliography{master}

\end{document}